\newtheorem{theorem}{Theorem}[section]
\newtheorem{lemma}[theorem]{Lemma}
\newtheorem{proposition}[theorem]{Proposition}
\newtheorem{claim}[theorem]{Claim}
\newcommand{\cald}{{\cal D}}
\newcommand{\caln}{{\cal N}}
\newcommand{\cals}{{\cal S}}
\newcommand{\R}{\mathbb{R}}
\newcommand{\N}{\mathbb{R}}
\newcommand{\C}{\mathbb{C}}
\newcommand{\Z}{\mathbb{Z}}
\newcommand{\poly}{\mathrm{poly}}
\newcommand{\eps}{\ensuremath{\epsilon}\xspace}
\newcommand{\eqdef}{\stackrel{\rm def}{=}}
\newcommand{\totalvardist}[2]{{\operatorname{d_{\rm TV}}\!\left({#1, #2}\right)}}
\newcommand{\norm}[1]{\lVert#1{\rVert}}
\newcommand{\normone}[1]{{\norm{#1}}_1}
\newcommand{\normtwo}[1]{{\norm{#1}}_2}
\newcommand{\norminf}[1]{{\norm{#1}}_\infty}
\newcommand{\abs}[1]{\left\lvert #1 \right\rvert}
\newcommand{\setOfSuchThat}[2]{ \left\{\; #1 \;\colon\; #2\; \right\} } 			
\newcommand{\expect}[1]{\mathbb{E}\!\left[#1\right]}
\DeclareMathOperator*{\expectOp}{\mathbb{E}}
\newcommand{\expectOver}[2]{\expectOp_{#1}\!\left[#2\right]}
\newcommand{\var}{\operatorname{Var}}
\newcommand{\bigO}[1]{{O\!\left({#1}\right)}}
\newcommand{\bigTheta}[1]{{\Theta\!\left({#1}\right)}}
\newcommand{\bigOmega}[1]{{\Omega\!\left({#1}\right)}}
\newcommand{\tildeO}[1]{\tilde{O}\!\left({#1}\right)}
\renewcommand{\vec}[1]{\bm{#1}}
\newcommand{\bone}{\vec{1}}
\renewcommand{\tilde}{\widetilde}
\newcommand{\wh}[1]{{\widehat{#1}}}
\newcommand{\dtv}{d_{\mathrm TV}}
\begin{document}

\title{Playing Anonymous Games using Simple Strategies}
\author{Yu Cheng\thanks{Supported in part by Shang-Hua Teng's Simons Investigator Award.}
  \qquad
  Ilias Diakonikolas\thanks{Supported by a USC startup fund.}
  \qquad Alistair Stewart\footnote{Authors' emails: \texttt{\{yu.cheng.1, diakonik, alistais\}@usc.edu}}
  \\ University of Southern California}
\date{}

\begin{titlepage}
\clearpage\maketitle
\thispagestyle{empty}

\begin{abstract}
We investigate the complexity of computing approximate Nash equilibria in anonymous games.
Our main algorithmic result is the following:
For any $n$-player anonymous game with a bounded number of strategies and any constant $\delta>0$,
an $O(1/n^{1-\delta})$-approximate Nash equilibrium can be computed in polynomial time.
Complementing this positive result, we show that if there exists any constant $\delta>0$
such that an $\bigO{1/n^{1+\delta}}$-approximate equilibrium can be computed in polynomial time,
then
there is a fully polynomial-time approximation scheme for this problem.

We also present a faster algorithm that, for any $n$-player $k$-strategy anonymous game,
runs in time $\tildeO{(n+k) k n^k}$ and computes an $\tildeO{n^{-1/3} k^{11/3}}$-approximate equilibrium.
This algorithm follows from the existence of simple approximate equilibria of anonymous games,
where each player plays one strategy with probability $1-\delta$, for some small $\delta$,
and plays uniformly at random with probability $\delta$.

Our approach exploits the connection between Nash equilibria in anonymous games
and Poisson multinomial distributions (PMDs). Specifically, we prove a new probabilistic lemma
establishing the following: Two PMDs,  with large variance in each direction, whose first few moments
are approximately matching are close in total variation distance. Our structural result strengthens previous
work by providing a smooth tradeoff between the variance bound and the number of matching moments.
\end{abstract}
\end{titlepage}

\section{Introduction}
Anonymous games are multiplayer games
in which the utility of each player depends on her own strategy,
as well as the number (as opposed to the identity) of other
players who play each of the strategies.
Anonymous games comprise an important class of succinct games
--- well-studied in the economics literature (see, e.g., \cite{Milchtaich96, Blonski99, Blonski05}) ---
capturing a wide range of phenomena that frequently arise in practice,
including congestion games, voting systems, and auctions.

In recent years, anonymous games have attracted significant attention in
TCS~\cite{DaskalakisP07, DaskalakisP08, DaskalakisP09, DaskalakisP15, goldbergT15, ChenDO15, DaskalakisDKT15, DiakonikolasKS16stoc},
with a focus on understanding the computational complexity of their
(approximate) Nash equilibria. Consider the family of anonymous games where the number of players,
$n$, is large and the number of strategies, $k$, is bounded. It was recently shown by Chen {\em et al.}~\cite{ChenDO15} that
computing an $\eps$-approximate Nash equilibrium of these games is PPAD-Complete when $\eps$ is exponentially small,
even for anonymous games with $5$ strategies\footnote{\cite{ChenDO15} showed that computing
an equilibrium of $7$-strategy anonymous games is PPAD-Complete, but $3$ of the $7$ strategies
in their construction can be merged, resulting in a $5$-strategy anonymous game.}.

On the algorithmic side,
Daskalakis and Papadimitriou~\cite{DaskalakisP07, DaskalakisP08}
presented the first polynomial-time approximation scheme (PTAS) for this problem with running time $n^{(1/\eps)^{\Omega(k)}}$.
For the case of $2$-strategies, this bound was improved~\cite{DaskalakisP09, DDS12, DaskalakisP15}
to $\poly(n)\cdot(1/\eps)^{O(\log^2 (1/\eps))}$, and subsequently sharpened to $\poly(n)\cdot(1/\eps)^{O(\log(1/\eps))}$ in~\cite{DKS16-siirv-colt}).

In recent work, Daskalakis {\em et al.}~\cite{DaskalakisDKT15} and Diakonikolas {\em et al.}~\cite{DiakonikolasKS16stoc}
generalized the aforementioned results~\cite{DaskalakisP15, DKS16-siirv-colt} to any fixed number $k$ of strategies,
obtaining algorithms for computing $\eps$-well-supported equilibria
with runtime of the form $n^{\poly(k)} \cdot (1/\eps)^{k \log(1/\eps)^{O(k)}}$.
That is, the problem of computing approximate Nash equilibria in anonymous games with a fixed number of strategies
admits an {\em efficient} polynomial-time approximation scheme (EPTAS). Moreover, the dependence of the running time
on the parameter $1/\eps$ is {\em quasi-polynomial} -- as opposed to exponential.

We note that all the aforementioned algorithmic results are obtained by exploiting a connection
between Nash equilibria in anonymous games and Poisson multinomial distributions (PMDs).
This connection -- formalized in~\cite{DaskalakisP07, DaskalakisP08} --
translates constructive upper bounds on $\eps$-covers for PMDs to upper bounds on computing
$\eps$-Nash equilibria in anonymous games (see Section~\ref{sec:background} for formal definitions).
Unfortunately, as shown in~\cite{DaskalakisDKT15, DiakonikolasKS16stoc}, this
``cover-based'' approach cannot lead to qualitatively faster algorithms, due to a matching existential
lower bound on the size of the corresponding $\eps$-covers.
In a related algorithmic work, Goldberg and Turchetta \cite{goldbergT15} studied two-strategy anonymous games
  ($k=2$) and designed a polynomial-time algorithm that computes an $\eps$-approximate Nash equilibria for $\eps = \Omega(n^{-1/4})$.

The aforementioned discussion prompts the following natural
question: {\em What is the precise approximability of computing Nash equilibria in anonymous games?}
In this paper, we make progress on this question by establishing the following result:
For any $\delta>0$, and any $n$-player anonymous game with a constant number of strategies,
there exists a $\poly_{\delta}(n)$ time algorithm
that computes an $\eps$-approximate Nash equilibrium of the game, for $\eps = 1/n^{1-\delta}$
\footnote{The runtime of our algorithm depends exponentially in $1/\delta$.
We remind the reader that the algorithms of~\cite{DaskalakisDKT15, DiakonikolasKS16stoc}
run in quasi-polynomial time for any value of $\eps$ inverse polynomial in $n$.}.
Moreover, we show that the existence of a polynomial-time algorithm
that computes an $\eps$-approximate Nash equilibrium for $\eps = 1/n^{1+\delta}$,
for any small constant $\delta>0$ -- i.e., slightly better than the approximation guarantee of our algorithm
-- would imply the existence of a fully polynomial-time approximation scheme (FPTAS) for the problem.
That is, we essentially show that the value $\eps = 1/n$ is the threshold
for the polynomial-time approximability of Nash equilibria in anonymous games, unless there is an FPTAS.
In the following subsection, we describe our results in detail and provide an overview of our techniques.
  
\subsection{Our Results and Techniques} \label{ssec:results}
We study the following question:
\begin{quote}
\em
For $n$-player $k$-strategy anonymous games,
how small can $\eps$ be \mbox{(as a function of $n$)},
so that an $\eps$-approximate Nash equilibrium can be computed in polynomial time?
\end{quote}

\paragraph{Upper Bounds.}
We present two different algorithms (Theorems \ref{thm:main} and \ref{thm:simple})
for computing approximate Nash equilibria in anonymous games.
Both algorithms run in polynomial time and compute $\eps$-approximate equilibria for an inverse polynomial
$\eps$ above a certain threshold.

\begin{theorem}[Main]
\label{thm:main}
For any $\delta>0$, and any $n$-player $k$-strategy anonymous game,
there is a $\poly_{\delta, k}(n)$ time algorithm that computes an $(1/n^{1-\delta})$-approximate
equilibrium of the game.
\end{theorem}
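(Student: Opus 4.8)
The plan is to reduce the problem to a purely combinatorial search by establishing, via the paper's moment-matching lemma, that every such game has an $\eps$-approximate equilibrium of very restricted form for $\eps=1/n^{1-\delta}$, and then to brute-force over the (polynomially many) descriptions of such equilibria.

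\emph{Structural goal.} I would aim to show that there is an $\eps$-approximate Nash equilibrium, $\eps=1/n^{1-\delta}$, in which the players are partitioned into $m=m(\delta,k)=O_{\delta,k}(1)$ groups, all players in a group using a common ``simple'' strategy drawn from a fixed, game-independent family $\mathcal F$ of size $O_{\delta,k}(1)$ (for instance: commit to a single pure strategy with probability $1-\gamma$ and play uniformly with probability $\gamma$, for $\gamma$ ranging over a fixed finite set, together with the pure strategies themselves). Granting this, the algorithm is immediate: enumerate the $\binom{n+m-1}{m-1}=n^{O_{\delta,k}(1)}$ ways to choose group sizes, and for each of the $O_{\delta,k}(1)$-many assignments of a strategy in $\mathcal F$ to each group, form the induced aggregate PMD (which, having only $m$ distinct component types, has a $\poly(n)$-time computable probability mass function for constant $k$, e.g.\ by iterated convolution), compute each group's exact best-response payoffs against the distribution of play of the remaining players, and accept if every group's assigned strategy is an $\eps$-best response; output the first accepted profile, which is then a genuine $\eps$-approximate equilibrium. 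Correctness reduces to the structural claim, and the running time is $n^{O_{\delta,k}(1)}$, exponential in $1/\delta$ through $m$.

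\emph{Proving the structural claim via moment matching.} Start from an arbitrary exact (or very accurate) equilibrium with strategies $p_i$, inducing the aggregate PMD $\vec X=\sum_i\vec X_i$. Partition the players so that within each group the $p_i$ are close in an appropriate discretized sense, and apply the paper's moment-matching lemma group by group: provided the group's contribution has variance $\tilde\Omega(n)$ in every direction, replacing it by the contribution of the same number of copies of a single strategy from $\mathcal F$ with matching first $t=\bigTheta{1/\delta}$ moments changes the group's aggregate by at most $\eps$ in $\dtv$, hence changes $\vec X$, and therefore every player's payoffs, by at most $\eps$; this is exactly the regime in which the lemma's smooth variance-versus-moments tradeoff lets the number of matched moments, and hence $|\mathcal F|$ and $m$, be $O_{\delta,k}(1)$. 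The coordinates in which the aggregate has small variance are handled separately: small variance in coordinate $j$ forces the $p_{\cdot,j}$ to be concentrated near $0$ or near $1$, so all but a controlled set of players are already ``committed'' in that coordinate and can be absorbed into a pure-strategy group, while the exceptional players are peeled off; iterating over the at most $k$ coordinates reduces to the high-variance case. Finally one checks that rounding the per-player strategies to land exactly in $\mathcal F$, and re-solving within each group which vertex to commit to, keeps each player's strategy an $O(\eps)$-best response.

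\emph{Main obstacle.} The crux is the moment-matching lemma itself --- a quantitative multivariate central limit theorem asserting that two PMDs with variance at least $\sigma^2$ in every direction whose first $t$ moments approximately agree are within $\dtv$ roughly $f(\sigma,t)$, with $f$ decreasing in both arguments in a way that makes $t=O(1/\delta)$ suffice once $\sigma^2$ is polynomially large; I expect this to require bounding the characteristic function of a high-variance PMD away from the origin (using that many coordinates carry non-trivial variance) and controlling, near the origin, the error of replacing the log-characteristic function by its degree-$t$ Taylor polynomial, then comparing the two distributions by Fourier inversion. The secondary difficulty is the bookkeeping in the small-variance reduction --- in particular ensuring that peeling off the exceptional players does not blow up $m$ --- together with verifying that the final rounding into $\mathcal F$ preserves the equilibrium guarantee.
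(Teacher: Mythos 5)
Your high-level reliance on a robust moment-matching lemma for high-variance PMDs, preceded by a reduction forcing every player to randomize, matches the paper's strategy, and your Fourier-analytic sketch of that lemma is on target. But your structural claim --- that there is an $\eps$-approximate equilibrium in which the players split into $m=O_{\delta,k}(1)$ groups, each group playing a common strategy from a fixed, game-independent family $\mathcal F$ of constant size --- is not established by your argument, and it is the crux of the gap. The moment-matching lemma only controls the \emph{aggregate} distribution: replacing a group's CRVs by copies of a single moment-matched CRV keeps $X_{-i}$ close in total variation, so \emph{other} players' payoffs barely change. It says nothing about the incentive constraints of the players \emph{inside} the group, who now play a strategy that may be far (as a distribution on $[k]$) from their original one and need not be even an approximate best response for them, since players grouped by parameter proximity can have entirely different utility functions. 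Making the common strategy close to each member's original strategy in total variation would force the grouping to be an $\eps$-net of $\Delta_{[k]}$, i.e.\ $m=(1/\eps)^{\Theta(k)}=n^{\Theta(k)}$ groups, destroying the constant bound; and ``re-solving which vertex to commit to within each group'' cannot restore a per-player guarantee while keeping a single common strategy per group.

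The paper avoids this by decoupling the two roles of the cover. It first perturbs the game so that an exact equilibrium of the perturbed game yields an $\eps$-approximate equilibrium in which every player puts mass at least $\eps/(k-1)$ on every strategy (this replaces your more delicate ``peeling'' of low-variance coordinates), which forces variance $\Omega(n\eps/k)$ in every direction orthogonal to $\mathbf 1$. The moment-matching lemma then shows that such an aggregate PMD is determined up to $\eps$ by its $k^{O(1/\delta)}$ low-degree parameter moments, each rounded to a $\poly(n)$ grid --- a cover of size $n^{k^{O(1/\delta)}}$ indexed by \emph{moment data}, not by a constant-size strategy family. For each candidate data vector, the algorithm computes, for each player separately, her set of approximate best responses among all gridded $k$-CRVs (a set of size $\poly(n)$, not $O(1)$), and runs a dynamic program to decide whether the target data can be reconstructed using only per-player best responses. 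That reconstruction step is exactly what handles the per-player incentive constraints that your group-replacement argument leaves unaddressed; without it, or an argument of comparable force, your proof does not go through.
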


\begin{theorem}
\label{thm:simple}
For any $n$-player $k$-strategy anonymous game,
we can compute an $\tildeO{n^{-1/3} k^{11/3}}$-approximate equilibrium
in time $\tildeO{(n+k) k n^k}$ .
\end{theorem}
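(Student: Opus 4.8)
The plan is to establish the existence of a \emph{simple} approximate Nash equilibrium --- one in which each player puts mass $1-\delta$ on a single pure strategy and mass $\delta$ on the uniform distribution over the $k$ strategies --- and then to show that the best such profile can be found by a small dynamic program over the induced PMD. Concretely, I would first argue the structural statement: given any (exact or approximate) Nash equilibrium, rounding each player's mixed strategy to the nearest simple strategy with the appropriate $\delta \approx \tildeO{n^{-1/3}k^{c}}$ perturbs the aggregate PMD by little in total variation distance, hence perturbs every player's expected utility by $o(\eps)$, hence yields an $\eps$-approximate equilibrium for $\eps = \tildeO{n^{-1/3}k^{11/3}}$. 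The quantitative heart of this step is the new moment-matching lemma advertised in the abstract: because every simple strategy assigns probability at least $\delta/k$ to each coordinate, the resulting PMD has variance $\bigOmega{n\delta/k}$ in each direction; so if the rounded profile approximately matches the first few moments of the original PMD in each coordinate, the two PMDs are close in $\dtv$. Picking $\delta$ to balance the rounding error (which decreases the moment-matching accuracy as $\delta$ shrinks) against the variance lower bound (which wants $\delta$ large) is exactly what produces the $n^{-1/3}$ exponent; the $k^{11/3}$ factor comes from bookkeeping the per-coordinate union bound and the dependence of the lemma on $k$.

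Second, I would turn this existence statement into the claimed algorithm. Since a simple strategy for a player is specified by a single pure strategy (one of $k$ choices) plus the fixed mixing parameter $\delta$, and since the utilities in an anonymous game depend only on the vector of counts of players playing each strategy, the relevant state is the distribution of the count vector in $\{0,1,\dots,n\}^{k}$, of which there are $\bigO{n^{k}}$ relevant values. I would process the players one at a time, maintaining a table indexed by the partial count distribution; adding one player with a given simple strategy is a convolution that costs $\tildeO{n^{k}}$ per player via coordinate-wise updates (or an FFT in each coordinate), giving $\tildeO{n \cdot n^{k}}$ overall, plus a $\tildeO{k}$ factor for trying each player's $k$ possible pure-strategy choices and a low-order $\tildeO{n+k}$ term for the per-step arithmetic on numbers of that size --- matching the stated $\tildeO{(n+k)kn^{k}}$ bound. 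Equivalently, one can phrase this as enumerating over a small set of ``types'': the fraction of players assigned to each pure strategy lives on a grid of size $\poly(n)$ after rounding, and for each type one checks in $\tildeO{n^{k}}$ time whether the associated simple profile is an $\eps$-approximate equilibrium by computing each player's best-response gap from the count distribution.

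Third, I would verify the equilibrium condition that makes the search succeed: I must exhibit at least one simple profile that is $\eps$-approximate, which is guaranteed by the structural step applied to any exact equilibrium (whose existence is classical), and I must check that the algorithm's test --- "no player can improve by more than $\eps$ by unilaterally deviating to any pure strategy" --- is implementable within the budget, which it is, since for a fixed profile the other players' aggregate count distribution is a PMD that we have already tabulated and each deviation changes it by a single known summand. The main obstacle is the first step: making the moment-matching lemma quantitatively tight enough that the rounding parameter $\delta$ can be pushed down to $n^{-1/3}$ while still controlling all $k$ coordinates simultaneously, and carefully tracking the $k$-dependence so that the final approximation is $\tildeO{n^{-1/3}k^{11/3}}$ rather than something with a worse power of $k$; the algorithmic part, by contrast, is a fairly routine count-distribution dynamic program once the state space $\bigO{n^{k}}$ and the per-player convolution cost are identified.
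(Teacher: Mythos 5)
There is a genuine gap in your structural step, and it is the heart of the theorem. You propose to take an arbitrary (exact) Nash equilibrium and round each player's mixed strategy to the nearest \emph{simple} strategy, arguing via the moment-matching lemma that the aggregate PMD barely moves. This fails: there are only $k$ simple strategies, so a player who genuinely mixes (say uniformly over two strategies) is moved by the rounding to a distribution at total variation distance close to $1/2$ from her original one, and the moments of the rounded PMD need not be anywhere near those of the original (consider $n$ players each playing $(1/2,1/2)$ in a $2$-strategy game: the original PBD is $\mathrm{Bin}(n,1/2)$, while any simple profile has mean near $0$, $n\delta$, or $n(1-\delta)$ for the relevant coordinate). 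Moreover, the original equilibrium may have \emph{zero} variance in some direction, so the variance hypothesis of the moment-matching lemma does not even apply to it. The existence of a simple approximate equilibrium cannot be obtained by rounding; it is a nontrivial existence statement.

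What the paper actually does is define a perturbed game $G_\delta$ whose pure strategy $j$ corresponds to the simple strategy $X_\delta(e_j)$ in $G$, and prove (Proposition~\ref{prop:perturb-lipschitz}) that $G_\delta$ is $\tildeO{k^{9/2}/\sqrt{n\delta}}$-Lipschitz. The tool here is not the moment-matching lemma but the multivariate CLT of \cite{DaskalakisDKT15, DiakonikolasKS16stoc}: since every summand in $M_\delta(x)$ has variance at least $\delta/(k-1)$ in each direction orthogonal to $\bone$, both $M_\delta(x)$ and $M_\delta(y)$ are close to discretized Gaussians, and two such Gaussians with parameters differing by $\normone{x-y}$ are $\tildeO{\normone{x-y}/\sqrt{n\delta}}$-close. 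Existence (and the $\tildeO{(n+k)kn^k}$ algorithm) then comes from the theorem of \cite{DaskalakisP15,AzrieliS13} that every $\lambda$-Lipschitz anonymous game has a $2k\lambda$-approximate \emph{pure} equilibrium computable in $\tildeO{n+k}$ times the description size of the game --- this is the ingredient your proposal is missing, and it is not replaceable by a generic count-distribution dynamic program, since a DP that tabulates count distributions has no objective whose optimum certifies an equilibrium. The final bound comes from balancing $\delta$ against $2k\lambda = \tildeO{k^{11/2}/\sqrt{n\delta}}$, i.e., $\delta = k^{11/3}n^{-1/3}$; your intuition that some balancing at $n^{-1/3}$ occurs is right, but the two quantities being balanced are not the ones you describe.
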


Prior to our work, for $k>2$, no polynomial time $\eps$-approximation was known for any inverse polynomial $\eps$.
For $k=2$, the best previous result is due to \cite{goldbergT15} who gave
a polynomial-time algorithm for $\eps = \Omega(n^{-1/4})$.

\medskip

\noindent {\bf \em Overview of Techniques.}
The high-level idea of our approach is this: If the desired accuracy $\eps$ is above a certain threshold,
we do not need to enumerate over an $\eps$-cover for the set of all PMDs.
Our approach is in part inspired by \cite{goldbergT15},
who design an algorithm (for $k=2$ and $\eps = \Omega(n^{-1/4})$)
in which all players use one of the two pre-selected mixed strategies.
We note that for $k=2$, PMDs are tantamount to Poisson Binomial distributions (PBDs), i.e., sums
of independent Bernoulli random variables.
The \cite{goldbergT15} algorithm can be equivalently interpreted as guessing
a PBD from an appropriately small set. One reason this idea succeeds is the following:
If every player randomizes, then the variance of the resulting PBD must be relatively high,
and (as a result) the corresponding subset of PBDs has a smaller cover.

Our quantitative improvement for the $k=2$ case is obtained as follows:
Instead of enforcing players to selected specific mixed strategies -- as in \cite{goldbergT15} --
we show that there always exists an $\eps$-approximate equilibrium where the associated PBD
has variance at least $\bigTheta{n\eps}$.
When $\eps = n^{-c}$ for some $c < 1$, the variance is an inverse polynomial of $n$.
We then construct a polynomial-size $\eps$-cover for the subset of PBDs with variance at least this much,
which leads to a polynomial-time algorithm for computing $\eps$-approximate equilibria in $2$-strategy anonymous games.

The idea for the general case of $k>2$ is similar, but the details are more elaborate,
since the structure of PMDs is more complicated for $k>2$.
We proceed as follows: We start by showing that there is an $\eps$-approximate equilibrium whose corresponding PMD has a large variance in each direction.
Our main structural result is a robust moment-matching lemma (Lemma \ref{lem:pmd-fewer}),
which states that the closeness in low-degree moments of two PMDs, with large variance in each direction,
implies their closeness in total variation distance. The proof of this lemma uses Fourier analytic techniques,
building on and strengthening previous work~\cite{DiakonikolasKS16stoc}.
As a consequence of our moment-matching lemma,
we can construct a polynomial-size $(\eps/5)$-cover for PMDs with such large variance.
We then iterate through this cover to find an $\eps$-approximate equilibrium,
using a dynamic programming approach similar to the one in \cite{DaskalakisP15}.

We now provide a brief intuition of our moment-matching lemma.
Intuitively, if the two PMDs in question are both very close to discrete Gaussians,
then the closeness in the first \emph{two} moments is sufficient.
Lemma \ref{lem:pmd-fewer} can be viewed as a generalization of this intuition,
which gives a quantitative tradeoff between
the number of moments we need to approximately match and the size of the variance.
The proof of Lemma \ref{lem:pmd-fewer} exploits the sparsity of the Fourier transform of our PMDs,
and the fact that higher variance allows us to take fewer terms in the Taylor expansion
when we use moments to approximate the logarithmic Fourier transform.
This completes the proof sketch of Theorem~\ref{thm:main}.

Our second algorithm (Theorem \ref{thm:simple}) addresses the need to play simple strategies.
Players tend to favor simple strategies which are easier to learn and implement,
even if these strategies might have slightly sub-optimal payoffs~\cite{Simon82}.
In addition, our algorithm is significantly faster in this case.
We build on the idea of \cite{goldbergT15} to ``smooth'' an anonymous game by forcing all the players to randomize.
We prove that the perturbed game is Lipschitz and therefore admits a pure Nash equilibrium,
which corresponds to simple approximate equilibria of a specific form in the original game:
Each player plays one strategy with probability $1-\delta$ for some small $\delta$,
and plays other strategies uniformly at random with probability $\delta$.
To prove that the perturbed game is Lipschitz, we make essential use of the recently established multivariate central limit theorem (CLT)
in Daskalakis {\em et al.}~\cite{DaskalakisDKT15} and Diakonikolas, Kane and Stewart \cite{DiakonikolasKS16stoc}
to show that if we add a little more noise (corresponding to $\delta = \Theta(n^{-1/3})$),
the associated PMD is sufficiently close to a discrete Gaussian.

\paragraph{Lower Bounds.}
When $\eps = 1/n$, we can show that there is an $\eps$-approximate equilibrium
where the associated PMD has a variance at least $1/k$ in every direction.
Unfortunately, the PMDs in the explicit quasi-polynomial-size lower bounds
given in \cite{DaskalakisDKT15, DiakonikolasKS16stoc} satisfy this property.
Thus, we need a different approach to get a polynomial-time algorithm for $\eps = 1/n$ or smaller.

In fact, we prove the following results, which states that
even a slight improvement of our upper bound in Theorem~\ref{thm:main}
would imply an FPTAS for computing Nash equilibria in anonymous games.
It is important to note that Theorem \ref{thm:fptas} applies to all algorithms,
not only the ones that leverage the structure of PMDs.

\begin{theorem} \label{thm:fptas}
For $n$-player $k$-strategy anonymous games with $k = \bigO{1}$,
if we can compute an $\bigO{n^{-c}}$-approximate equilibrium in polynomial time for some constant $c > 1$,
then there is an FPTAS \footnote{A fully polynomial-time approximation scheme (FPTAS)
is an algorithm that runs in time $\poly(n, 1/\eps)$ and returns
an $\eps$-optimal solution, or in our context, returns an $\eps$-approximate Nash equilibrium.}
for computing (well-supported) Nash equilibria of $k$-strategy anonymous games.
\end{theorem}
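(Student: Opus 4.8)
The plan is to build an FPTAS for general accuracy $\eps$ by reducing to the hypothesized polynomial-time algorithm for accuracy $\bigO{m^{-c}}$ run on a suitably \emph{padded} instance. Concretely, given an $n$-player $k$-strategy anonymous game $G$ and a target $\eps > 0$, I would construct an auxiliary anonymous game $G'$ with $m = m(n,\eps)$ players, where $m$ is chosen as a polynomial in $n$ and $1/\eps$ large enough that $\bigO{m^{-c}} \le \eps$, i.e. $m = \poly(n, 1/\eps)$ using that $c>1$ is a fixed constant. The new players are ``dummies'' whose utilities do not depend on anyone and whose presence does not affect the original $n$ players' utilities --- this requires that the utility functions of the original players, which a priori are functions of the partition of the other $n-1$ players into strategy classes, be extended to ignore the extra $m-n$ dummy players. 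Since the utility of player $i$ in an anonymous game depends only on the counts of other players on each of the $k$ strategies, I would fix the dummies to a designated strategy (say strategy $1$) and define the extended utility of an original player as the original utility evaluated after subtracting off the $m-n$ fixed dummy contributions to the count of strategy $1$; the dummies themselves are indifferent among all strategies. This $G'$ has size $\poly(m, k) = \poly(n, 1/\eps)$, so we may run the hypothesized algorithm on it in time $\poly(m) = \poly(n, 1/\eps)$.

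The second step is to argue correctness: any $\bigO{m^{-c}}$-approximate (well-supported) Nash equilibrium of $G'$ restricts to an $\eps$-approximate (well-supported) Nash equilibrium of $G$. For the original $n$ players this is immediate --- their utility in $G'$ at any profile equals their utility in $G$ at the induced profile (by construction of the extension), so an $\eta$-best-response in $G'$ with $\eta = \bigO{m^{-c}} \le \eps$ is an $\eps$-best-response in $G$. The dummy players are trivially in equilibrium since they are indifferent. Hence the restriction of the computed profile to the first $n$ players is an $\eps$-approximate equilibrium of $G$, and the whole procedure runs in time $\poly(n, 1/\eps)$, which is exactly an FPTAS. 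The well-supported version goes through identically because the well-supported condition also only constrains, for each player, the relative payoffs of strategies in the support, and these payoffs are preserved by the extension.

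The main obstacle --- and the point that needs care --- is making the padding \emph{genuinely} preserve the anonymous-game structure and the equilibrium notion simultaneously. One has to check that the extended utilities still define a valid anonymous game (they do: each extended utility is still a function of the strategy-count vector of the other $m-1$ players, since subtracting a constant from one coordinate of that vector is a legitimate such function), and that ``restricting'' a profile of $G'$ to $G$ is well-defined even when the computed equilibrium has the dummies randomizing --- but since dummies are indifferent, we may without loss of generality round them to the pure strategy $1$, which is the profile against which the original players' incentives were defined, and this rounding only helps. A secondary subtlety is quantitative: we need $c$ to be a genuine constant $>1$ so that $m = n \cdot (1/\eps)^{1/(c-1)}$, rounded up, satisfies $m^{-c} = O(\eps)$ with the hidden constant absorbed; since $c$ is fixed this is just a matter of choosing $m$ a large enough fixed polynomial in $n$ and $1/\eps$. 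With these points verified, the reduction is complete.
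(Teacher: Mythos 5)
There is a genuine gap, and it sits exactly at the point you flag as needing care. Your dummy players are \emph{indifferent} among all strategies, so in the padded game $G'$ the hypothesized algorithm is free to return a profile in which the $m-n$ dummies play anything at all --- e.g., all of them deterministically play strategy $2$. The original players' $\bigO{m^{-c}}$-best-response conditions are then certified against \emph{that} dummy configuration, while your extended utilities $u'^i_a(x)=u^i_a(x_1-(m-n),x_2,\dots,x_k)$ only agree with the original game when all dummies sit on strategy $1$. Your claim that one may ``round them to the pure strategy $1$'' and that ``this rounding only helps'' is false: moving $m-n$ players changes the count distribution faced by each original player essentially arbitrarily (in total variation it can change by $1$), so an approximate best response against the computed dummy profile need not be anywhere close to a best response against the all-dummies-on-$1$ profile. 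The restriction to the first $n$ players therefore need not be an $\eps$-approximate equilibrium of $G$.

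The paper avoids this in two ways that your proposal is missing. First, the dummies are given utility $1$ on strategy $1$ and $0$ elsewhere, so they strictly prefer strategy $1$ by a gap of $1$. Second, and crucially, the padding is carried out for \emph{well-supported} equilibria: in any $\eps$-well-supported equilibrium with $\eps<1$, the strict gap forces every dummy to put probability exactly $1$ on strategy $1$, which is precisely the configuration under which the utility extension is faithful. (The paper explicitly remarks that in a mere $\eps$-approximate equilibrium the dummies can still put $\eps$ mass elsewhere.) To bridge from the hypothesized algorithm --- which outputs approximate, not well-supported, equilibria --- the paper first applies the standard conversion (Lemma~\ref{lem:ane2wsne}): an $\frac{\eps^2}{4n}$-approximate equilibrium yields an $\eps$-well-supported one, turning the assumed $\bigO{n^{-c}}$ guarantee with $c>1$ into a polynomial-time algorithm for $n^{-\gamma}$-well-supported equilibria with $\gamma=(c-1)/2>0$, to which the padding lemma then applies. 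A possible repair within your framework would be to use strictly-preferring dummies and argue that, since each deviates with probability at most $\bigO{m^{-c}}$, a union bound bounds the total variation to the all-on-$1$ configuration by $\bigO{m^{1-c}}$, which $c>1$ lets you drive below $\eps$; but as written, with indifferent dummies, the reduction does not go through, and it also would not deliver the well-supported conclusion claimed in the theorem without the conversion step.
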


\noindent {\bf Remark.}
As observed in \cite{DaskalakisDKT15},
because there is a quasi-polynomial time algorithm for computing an
$(n^{-c})$-approximate equilibrium in anonymous games,
the problem cannot be PPAD-Complete unless \texttt{PPAD} $\subseteq$ \texttt{Quasi-PTIME}.
On the other hand, we do not know how to improve the quasi-polynomial-time upper bounds of \cite{DaskalakisDKT15, DiakonikolasKS16stoc}
when $\eps < 1/n$.

Recall that computing an $\eps$-approximate equilibrium of a two-player general-sum $n \times n$ game
(2-NASH) for constant $\eps$ also admits a quasi-polynomial-time algorithm~\cite{LiptonMM03}.
Very recently, Rubinstein~\cite{Rubinstein16a} showed that, assuming the exponential time hypothesis (ETH) for PPAD,
for some sufficiently small universal constant $\eps > 0$,
quasi-polynomial-time is necessary to compute an $\eps$-approximate equilibrium
of 2-NASH.
It is a plausible conjecture that
quasi-polynomial-time is also required for $\eps$-Nash equilibria in anonymous games,
when $\eps = n^{-c}$ for some constant $c > 1$.
In particular, this would imply that there is no FPTAS
for computing approximate Nash equilibria in anonymous games,
and consequently the upper bound of Theorem~\ref{thm:main} is essentially tight.

\section{Notation and Background} \label{sec:background}
\paragraph{Anonymous Games.}
We study anonymous games $(n, k, \{u^i_a\}_{i\in[n], a\in[k]})$
  with $n$ players labeled by $[n] = \{1, \ldots, n\}$,
  and $k$ common strategies labeled by $[k]$ for each player.
The payoff of a player depends on her own strategy,
  and how many of her peers choose which strategy,
  but not on their identities.
When player $i \in [n]$ plays strategy $a \in [k]$,
  her payoffs are given by a function $u^i_a$ that maps the possible outcomes
  (partitions of all other players) $\Pi^k_{n-1}$ to the interval $[0, 1]$,
  where $\Pi^k_{n-1} = \{(x_1, \ldots, x_k) \mid x_j \in \N \wedge \sum_{j=1}^k x_j = n-1 \}$.

\paragraph{Approximate Equilibria.}
We denote by $\Delta_S$ a distribution on the set $S$.
A \emph{mixed strategy} is an element of $\Delta_{[k]}$,
  and a \emph{mixed strategy profile} $\vec{s} = (\vec{s}_1, \ldots, \vec{s}_n)$
  maps every player $i$ to her mixed strategy $\vec{s}_i \in \Delta_{[k]}$.
We use $\vec{s}_{-i}$
  to denote the strategies of players other than $i$ in $\vec{s}$.

A mixed strategy profile $\vec{s}$ is an \emph{$\eps$-approximate Nash equilibrium}
  for some $\eps \ge 0$ iff
\[ \forall i \in [n], \forall a' \in [k], \quad
  \expectOver{x \sim \vec{s}_{-i}}{u^i_{a'}(x)} \le \expectOver{x \sim \vec{s}_{-i}, a \sim \vec{s}_i}{u^i_a(x)} + \eps, \]
where $x \in \Pi^k_{n-1}$ is the partition formed by $n-1$ random samples (independently) drawn from $[k]$
  according to the distributions $\vec{s}_{-i}$.
Note that given a mixed strategy profile $\vec{s}$, we can compute a player's expected payoff
  to precision $\eps$ in time $\poly(n^k \log(1/\eps))$ by straightforward dynamic programming,
  and hence throughout this paper we assume that we can compute players' payoffs exactly given their mixed strategies.

\paragraph{Poisson Multinomial Distributions.}
A \emph{$k$-Categorical Random Variable} ($k$-CRV) is a vector random variable
  supported on the set of $k$-dimensional basis vectors $\{e_1, \ldots, e_k\}$.
A $k$-CRV is $i$-maximal if $e_i$ is its most likely outcome (break ties by taking the smallest index $i$).
A $k$-Poisson Multinomial Distribution of order $n$, or an $(n, k)$-PMD,
is a vector random variable of the form $X = \sum_{i=1}^n X_i$ where the $X_i$'s are independent $k$-CRVs.
The case of $k=2$ is usually referred to as Poisson Binomial Distribution (PBD).

Note that a mixed strategy profile $\vec{s} = (\vec{s}_1, \ldots, \vec{s}_n)$
  of an $n$-player $k$-strategy anonymous game corresponds to the $k$-CRVs $\{X_1, \ldots, X_n\}$
  where $\Pr[X_i = e_a] = \vec{s}_i(a)$.
The expected payoff of player $i \in [n]$ for playing pure strategy $a \in [k]$ can also be written as $\expect{u^i_a (X_{-i})} = \expect{u^i_a \left(\sum_{j\neq i, j \in [n]} X_j\right)}$.

Let $X = \sum_{i=1}^n X_i$ be an $(n, k)$-PMD such that for $i \in [n]$ and $j \in [k]$ we denote $p_{i,j} = \Pr[X_i = e_j]$, where $\sum_{j=1}^k p_{i,j} = 1$.
For $m = (m_1, \ldots, m_k) \in \Z_+^k$, we define the $m^{th}$-parameter moments of $X$ to be $M_m(X) \eqdef \sum_{i=1}^n \prod_{j=1}^k p_{i,j}^{m_j}$.
We refer to $\normone{m} = \sum_{j=1}^k m_j$ as the \emph{degree} of the parameter moment $M_m(X)$.

\paragraph{Total Variation Distance and Covers.}
The total variation distance between two distributions $P$ and $Q$
  supported on a finite domain $A$ is
\[ \totalvardist{P}{Q} := \max_{S \subseteq A}\abs{P(S) - Q(S)} = (1/2) \cdot \normone{P-Q}. \]
If $X$ and $Y$ are two random variables ranging over a finite set,
  their total variation distance $\totalvardist{X}{Y}$ is defined as
  the total variation distance between their distributions.
For convenience, we will often blur the distinction between a random variable and its distribution.

Let $(\mathcal{X},d)$ be a metric space.
Given $\eps > 0$, a subset $\mathcal{Y} \subseteq \mathcal{X}$ is said to be a proper $\eps$-cover
  of $\mathcal{X}$ with respect to the metric $d: \mathcal{X}^2 \rightarrow \R_+$,
  if for every $X \in \mathcal{X}$ there exists some $Y \in \mathcal{Y}$ such that $d(X, Y) \le \eps$.
In this work, we will be interested in constructing $\eps$-covers
  for high-variance PMDs under the total variation distance metric.

\paragraph{Multidimensional Fourier Transform.}
For $x \in \R$, we will denote $e(x) \eqdef \exp(-2 \pi i x)$.
The (continuous) Fourier Transform of a function $F: \Z \rightarrow \C$ is the function
  $\wh{F}: [0, 1]^k \rightarrow \C$ defined as $\wh{F}(\xi) = \sum_{x\in \Z_k} e(\xi \cdot x) F(x)$.
For the case that $F$ is a probability mass function, we can equivalently write $\wh{F}(\xi) = \expectOver{x \sim F}{e(\xi \cdot x)}$.

Let $X = \sum_{i=1}^n X_i$ be an $(n, k)$-PMD with $p_{i,j} \eqdef \Pr[X_i = e_j]$.
To avoid clutter in the notation, we will sometimes use the symbol $X$ to denote the corresponding probability mass function.
With this convention, we can write that $\wh{X}(\xi) = \prod_{i=1}^n \wh{X_i}(\xi) = \prod_{i=1}^n \sum_{j=1}^k e(\xi_j) p_{i,j}$.

\section{Searching Fewer Moments: Proof of Theorem~\ref{thm:main}}
In this section, we present a polynomial-time algorithm that,
for $n$-player anonymous games with bounded number of strategies,
computes an $\eps$-approximate equilibrium with $\eps = n^{-c}$ for any constant $c < 1$.
As a warm up, we start by describing the simpler setting of two-strategy anonymous games ($k = 2$).
The main results of this section is Theorem \ref{thm:main}
that applies to general $k$-strategy anonymous games for any constant $k \geq 2$.

At a high level, we first prove the existence of $\eps$-approximate Nash equilibria
in which the corresponding PMDs have high variance and every player randomizes (Lemma \ref{lem:perturb}).
We then use our robust moment matching lemma (Lemma \ref{lem:pmd-fewer})
to show that when two PMDs have high variances,
the closeness in their constant-degree parameter moments implies their closeness in total variation distance.
The fact that matching the constant-degree moments suffices
allows us to construct a polynomial-size $(\eps/5)$-cover for set subset of all PMDs with large variance.
We then iterate through this cover to find an $\eps$-approximate equilibrium (Algorithm \ref{alg:main}).

\begin{lemma}
\label{lem:perturb}
For an $n$-player $k$-strategy anonymous game,
  there always exists an $\eps$-approximate equilibrium
  where every player plays each strategy with probability at least $\frac{\eps}{k-1}$.
\end{lemma}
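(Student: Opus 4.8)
The plan is to start from an arbitrary exact Nash equilibrium $\vec{s}^{\,0}$ of the game (which exists by Nash's theorem), and to perturb each player's mixed strategy toward the uniform distribution. Concretely, for a mixing parameter $\lambda \in [0,1]$ to be chosen, define the perturbed profile $\vec{s}_i = (1-\lambda)\,\vec{s}^{\,0}_i + \lambda\,\mathcal{U}_{[k]}$, where $\mathcal{U}_{[k]}$ is the uniform distribution over $[k]$. Then every player plays every strategy with probability at least $\lambda/k$ in $\vec{s}$. If we take $\lambda = \tfrac{k}{k-1}\cdot\tfrac{\eps}{k}$ — wait, more carefully: to get the claimed floor of $\eps/(k-1)$ we will want $\lambda/k \ge \eps/(k-1)$, i.e. $\lambda \ge \tfrac{k\eps}{k-1}$; choosing $\lambda = \tfrac{k\eps}{k-1}$ (valid as long as this is at most $1$, i.e. $\eps \le (k-1)/k$, which we may assume since otherwise the statement is vacuous as any profile is trivially $\eps$-approximate for $\eps \ge 1$) does the job for the support condition. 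The remaining task is to show that $\vec{s}$ is still an $O(\eps)$-approximate equilibrium, and then to tune constants so that it is exactly $\eps$-approximate.

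The key step is a Lipschitz/continuity estimate: moving from $\vec{s}^{\,0}$ to $\vec{s}$ changes each player's mixed strategy by total variation at most $\lambda$, hence by a coupling/union-bound argument, for any fixed player $i$ the distribution of the opponent aggregate $X_{-i} = \sum_{j \ne i} X_j$ shifts in total variation by at most $\sum_{j\ne i}\lambda \le n\lambda$ — but that crude bound is far too lossy. Instead I would argue payoff-by-payoff directly: since payoffs $u^i_a$ take values in $[0,1]$, the change in player $i$'s expected payoff for any pure action $a$ when we replace $\vec{s}^{\,0}_{-i}$ by $\vec{s}_{-i}$ is at most $\dtv\!\left(\sum_{j\ne i}X^0_j,\ \sum_{j \ne i}X_j\right)$, and here one should not union-bound over all $j$ but rather observe that each $X_j$ is itself a $\lambda$-mixture, so the aggregate $\sum_{j\ne i}X_j$ is a $\lambda$-mixture (in the sense of mixing in one uniform $k$-CRV per coordinate) against $\sum_{j\ne i}X^0_j$; still this gives $\le n\lambda$. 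The honest fix is that we do not need the perturbed profile to be close to a \emph{fixed} equilibrium — we need it to be an approximate equilibrium in its own right, which is a one-sided condition. So the cleaner route: bound, for each player $i$ and deviation $a'$, the quantity $\expect{u^i_{a'}(X_{-i})} - \expect{u^i_a(X_{-i})}_{a\sim \vec s_i}$ by comparing term-by-term to the same expression under $\vec s^{\,0}$, where it is $\le 0$, and showing the two expressions differ by $O(\lambda)$ \emph{per player}… which is the crux.

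The main obstacle, then, is exactly this: getting a perturbation bound that is $O(\eps)$ rather than $O(n\eps)$. I expect the resolution to be one of two things. Either (a) one proves a genuinely better Lipschitz bound for anonymous-game payoffs under simultaneous small perturbations of all players — intuitively, perturbing every player by $\lambda$ toward uniform changes the opponent sum by $O(\lambda)$, not $O(n\lambda)$, because the perturbation is "spread out" and the payoff functions are functions of the \emph{aggregate}, which concentrates; a CLT- or Poissonization-style argument (of the flavor used elsewhere in this paper) shows the aggregate distribution moves by $O(\lambda \cdot \sqrt{n}/\sqrt{n}) = O(\lambda)$ in the relevant sense — but this seems too strong to be literally true in total variation without a variance lower bound. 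Or (b), and more likely for a clean short proof, one rescales: apply the perturbation with parameter $\lambda = \eps$ directly and argue the loss in approximation quality is additive $\eps$ because of a softer argument — e.g. the perturbed game's best responses move by a controlled amount. Given the way the paper is organized (this Lemma feeds into Lemma~\ref{lem:pmd-fewer} and the cover construction), I suspect the intended proof is short: take any Nash equilibrium, mix each player with the uniform distribution with weight $\Theta(\eps)$, and show via a direct one-sided payoff comparison — using that $u^i_a \in [0,1]$ and that mixing player $j$ toward uniform changes \emph{player $j$'s own} strategy's contribution to any \emph{other} player's payoff gap in a way that telescopes — that the total degradation is at most $\eps$. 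I would write the proof around that mixing construction, isolate the one inequality "$\expect{u^i_{a'}(X_{-i})} \le \expect{u^i_{a'}(X^0_{-i})} + (\text{small})$", and verify the small term is at most $\eps$ with the right choice of mixing weight, treating the constant-chasing as routine once the structural claim is in place.
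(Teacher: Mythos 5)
Your proposal correctly isolates the crux but does not close it, and the gap is fatal to the route you chose. Starting from an exact Nash equilibrium of the \emph{original} game and mixing every player toward uniform with weight $\lambda=\Theta(\eps)$ cannot be rescued by a sharper Lipschitz estimate: perturbing all $n-1$ opponents by $\lambda$ really can move the aggregate distribution, and hence a payoff gap, by $\Theta(1)$. Concretely, take $k=2$ and a game in which $u^i_1(x)=1$ if $x=(n-1,0)$ and $0$ otherwise, while $u^i_2\equiv 1/2$. All players playing strategy $1$ is an exact equilibrium, but after mixing each player toward uniform with weight $\lambda$ the expected payoff of strategy $1$ collapses to roughly $(1-\lambda/2)^{n-1}\approx e^{-n\lambda/2}$, so each player gains about $1/2$ by deviating to strategy $2$ whenever $n\lambda\gg 1$. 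Since the lemma is invoked with $\eps=n^{-c}$ for $c<1$, this regime is exactly the relevant one; your option (a) is false as stated, and option (b) is a hope rather than an argument.

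The paper's proof avoids the problem by taking the Nash equilibrium in the \emph{perturbed} game rather than perturbing an equilibrium of the original one. Define $G_\eps$ whose pure strategy $a$ corresponds to the mixed strategy $X_\eps(e_a)$ in $G$ (play $a$ with probability $1-\eps$ and each other strategy with probability $\frac{\eps}{k-1}$), with payoffs equal to the expected payoffs in $G$ when \emph{all} players, including oneself, are so perturbed. An exact equilibrium $\vec{s}'$ of $G_\eps$ exists by Nash's theorem; reinterpreted in $G$ via $\vec{s}_i=(1-\frac{k\eps}{k-1})\vec{s}'_i+\frac{\eps}{k-1}\vec{1}$ it puts mass at least $\frac{\eps}{k-1}$ on every strategy, and under $\vec{s}$ no player can gain anything by deviating to any $X_\eps(e_j)$. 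The only loss comes from the deviating player's \emph{own} randomization: $e_j$ and $X_\eps(e_j)$ differ with probability $\eps$, and since payoffs lie in $[0,1]$ this costs at most $\eps$. The one-sided comparison thus never touches the aggregate of the other players, which is precisely how the $O(n\lambda)$ loss you were fighting is avoided. Replacing your starting point with an exact equilibrium of $G_\eps$ turns your sketch into the paper's proof.
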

\begin{proof}
Given an anonymous game $G = (n, k, \{u^i_a\}_{i\in[n], a\in[k]})$,
we smooth players' utility functions by requiring every player to randomize.
Fix $\eps > 0$, we define an $\eps$-perturbed game $G_\eps$ as follows.
When a player plays some pure strategy $a \in [k]$ in $G_\eps$,
we map it back to the original game as if she plays strategy $j$ with probability $1-\eps$,
and plays some other strategy $a' \neq a$ uniformly at random
(i.e., she plays $a'$ with probability $\frac{\eps}{k-1}$).
Her payoff in $G_\eps$ also accounts for such perturbation,
and is defined to be her expected payoff given that all the players (including herself)
would deviate to other strategies uniformly at random with probability $\eps$.

Formally, let $X_\eps(e_j)$ denote the $k$-CRV that takes value $e_j$ with probability $1-\eps$,
and takes value $e_{j'}$ with probability $\frac{\eps}{k-1}$ for each $j' \neq j$.
The payoff structure of $G_\eps$ is given by
\[ u'^i_a(x) := (1-\eps)\expect{u^i_{a}(M_\eps(x))} + \frac{\eps}{k-1} \sum_{a' \neq a}\expect{u^i_{a'}(M_\eps(x))}, \quad \forall i\in [n], a\in [k], x \in \Pi^k_{n-1}, \]
where $M_\eps(x) = \sum_{j\in[k]} x_j X_\eps(e_j)$ is an $(n-1,k)$-PMD that corresponds to the
perturbed outcome of the partition $x \in \Pi^k_{n-1}$ of all other players.

Let $\vec{s}' = (\vec{s}'_1, \ldots, \vec{s}'_n)$ denote any exact Nash equilibrium of $G_\eps$.
We can interpret this mixed strategy profile in $G$
equivalently as $\vec{s} = (\vec{s}_1, \ldots, \vec{s}_n)$,
where $\vec{s}_i = (1-\frac{k\eps}{k-1}) \vec{s}'_i + \frac{\eps}{k-1} \vec{1}$, where $\vec{1} = (1, \ldots, 1)$.
We know that under $\vec{s}$ each player has no incentive to
deviate to the mixed strategies $X_\eps(e_j)$ for all $j \in [k]$,
therefore a player can gain at most $\eps$ by deviating to pure strategies in $G$,
so $\vec{s}$ is an $\eps$-approximate equilibrium with
$\vec{s}_i(j) \ge \frac{\eps}{k-1}$ for all $i \in [n]$, $j \in [k]$.
\end{proof}

\paragraph{Warm-up: The Case of $k=2$ Strategies.}
\label{sec:fewer-moments-pbd}
For two-strategy anonymous games ($k=2$),
if all the players put at least $\eps$ probability mass on both strategies,
the resulting PBD is going to have variance at least $n \eps (1-\eps)$.
When $\eps = n^{-c}$ for some constant $c < 1$, the variance is
at least $\bigTheta{n^{1-c}} = n^{\Theta(1)}$.
We can now use the following lemma from \cite{DiakonikolasKS16coltPBD},
which states that if two PBDs $P$ and $Q$ are close in the first few moments,
then $P$ and $Q$ are $\eps$-close in total variation distance.
Note that without any assumption on the variance of the PBDs,
we would need to check the first $\bigO{\log(1/\eps)}$ moments,
but when the variance is $n^{\Omega(1)}$, which is the case in our application,
we only need the first constant number of moments to match.

\begin{lemma}[\cite{DiakonikolasKS16coltPBD}]
\label{lem:pbd_fewer}
Let $\eps > 0$.
Let $P$ and $Q$ be $n$-PBDs with $P$ having parameters
$p_1, \ldots, p_{s} \le 1/2$ and $p'_1, \ldots, p'_{s'} > 1/2$,
and $Q$ having parameters $q_1, \ldots, q_{s} \le 1/2$ and $q'_1, \ldots, q'_{s'} > 1/2$.
Suppose that $V = \var[P]+1 = \bigTheta{\var[Q]+1}$ and
let $C > 0$ be a sufficiently large constant.
Suppose furthermore that for $A = C \sqrt{\log(1/\eps)/V}$ and
for all positive integers $\ell$ it holds
\begin{align}
  A^\ell \left( \abs{\sum_{i=1}^s p_i^\ell - \sum_{i=1}^s q_i^\ell}
  + \abs{\sum_{i=1}^{s'} (1-p'_i)^\ell - \sum_{i=1}^{s'} (1-q'_i)^\ell} \right)
  < \frac{\eps}{C \log(1/\eps)}
  \label{eqn:pbd_fewer}
\end{align}
Then $\totalvardist{P}{Q} < \eps$.
\end{lemma}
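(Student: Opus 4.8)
The plan is to prove Lemma~\ref{lem:pbd_fewer} by analyzing the Fourier transforms of $P$ and $Q$ and showing that they are pointwise close, then applying a standard Fourier-to-$L_1$ conversion. First I would recall that for a PBD $P$ with parameters $p_i$, the Fourier transform factors as $\wh{P}(\xi) = \prod_i (p_i e(\xi) + (1-p_i))$, so $\abs{\wh{P}(\xi)}$ decays like $\exp(-c \cdot \var[P] \cdot \norm{\xi}^2)$ for $\xi$ in a neighborhood of $0$ (measuring $\xi$ as distance to the nearest integer). Since $V = \var[P]+1$ is large, this means $\wh{P}$ is essentially supported on an interval of length $O(A) = O(\sqrt{\log(1/\eps)/V})$ around the origin; outside a slightly larger interval both $\wh{P}$ and $\wh{Q}$ are smaller than, say, $\eps/\poly$, so they contribute negligibly to $\totalvardist{P}{Q}$ after we bound the $L_1$ distance by (roughly) the $L_1$ norm of $\wh{P} - \wh{Q}$ over $[0,1]$ together with a decay tail argument (using that the support of $P, Q$ has size $O(n) = \poly(1/\eps)$, or more carefully using the effective support of size $O(\sqrt{V \log(1/\eps)})$).

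The heart of the argument is to show $\abs{\wh{P}(\xi) - \wh{Q}(\xi)}$ is small for $\abs{\xi} \le O(A)$. Here I would pass to logarithms: write $\wh{P}(\xi) = \exp(\sum_i \log(1 + p_i(e(\xi)-1)))$ and similarly for $Q$. Splitting the parameters into the small ones ($p_i \le 1/2$) and the large ones ($p'_i > 1/2$, which I rewrite via $1-p'_i$ so that the relevant small quantity is $1-p'_i$ times $(e(-\xi)-1)$-type factors after factoring out $e(\xi)$), I would Taylor-expand each $\log(1+z)$ term as $\sum_{\ell \ge 1} \frac{(-1)^{\ell-1}}{\ell} z^\ell$. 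This is valid because $\abs{p_i(e(\xi)-1)} \le 2 \cdot 2\pi\abs{\xi} \lesssim A \ll 1$ when the constant $C$ is large enough and $V$ is large (if $V$ is only a large constant one still gets $A$ small by choosing $C$ appropriately relative to the hidden constants — this is where the "sufficiently large constant" hypothesis is used). Summing over $i$, the coefficient of the degree-$\ell$ contribution to $\log\wh{P}(\xi)$ is a fixed polynomial in $e(\xi)$ times the power sum $\sum_i p_i^\ell$ (plus the analogous large-parameter term), so $\log\wh{P}(\xi) - \log\wh{Q}(\xi) = \sum_{\ell \ge 1} (\text{bounded coeff of size } \lesssim (2\pi\abs{\xi})^\ell/\ell) \cdot (\text{power-sum differences})$. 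Hypothesis~\eqref{eqn:pbd_fewer} says exactly that $A^\ell$ times the $\ell$-th power-sum difference is at most $\eps/(C\log(1/\eps))$, so since $\abs{\xi} \lesssim A$ each term of the series is $\lesssim \eps/(C\log(1/\eps)) \cdot (\text{const})^\ell/\ell$, and summing the geometric-type series (the constant being $< 1$ after absorbing into $C$) gives $\abs{\log\wh{P}(\xi) - \log\wh{Q}(\xi)} \lesssim \eps/\log(1/\eps)$.

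From $\abs{\log\wh{P} - \log\wh{Q}}$ small on the effective support, and $\abs{\wh{P}}, \abs{\wh{Q}} \le 1$, I get $\abs{\wh{P}(\xi) - \wh{Q}(\xi)} \le \abs{\wh{P}(\xi)} \cdot \abs{1 - e^{\log\wh{Q} - \log\wh{P}}} \lesssim \abs{\wh{P}(\xi)} \cdot \eps/\log(1/\eps)$ for $\abs{\xi} \lesssim A$, and for $\abs{\xi}$ beyond that both transforms are tiny. Then I would conclude: by Parseval/Cauchy–Schwarz (or the standard bound $\normone{P-Q} \le \sqrt{T} \cdot \normtwo{\wh{P}-\wh{Q}}$ where $T = O(\sqrt{V\log(1/\eps)})$ bounds the effective support size, controlling the tail contribution separately), $\totalvardist{P}{Q} = \frac12 \normone{P-Q} \lesssim \sqrt{V\log(1/\eps)} \cdot \big( \int_{\abs{\xi}\lesssim A} \abs{\wh{P}(\xi)}^2 \, d\xi \big)^{1/2} \cdot \eps/\log(1/\eps) + (\text{tail}) \lesssim \eps$, since $\int \abs{\wh{P}(\xi)}^2 d\xi \lesssim 1/\sqrt{V}$ by the Gaussian-type decay, making the prefactors cancel up to $\sqrt{\log(1/\eps)}$, which is absorbed by the $1/\log(1/\eps)$ from the moment bound.

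I expect the main obstacle to be the bookkeeping that keeps every estimate uniform: (i) justifying the Taylor expansion of the logarithm requires $\abs{\xi}$ bounded by a small \emph{absolute} constant, so one must either restrict attention to $\abs{\xi} \le c_0$ for a small constant and separately argue the transforms decay enough outside that range, or verify that $A \le c_0$ for all $V \ge 1$ by tuning $C$; (ii) the large parameters $p'_i > 1/2$ need the substitution $e(\xi) = e(-\xi)^{-1}$ trick so that the relevant perturbation is $(1-p'_i)(e(-\xi)-1)$, and one must check the $A^\ell (1-p'_i)^\ell$ scaling matches the hypothesis; and (iii) relating $\var[P]$ to $\sum_i p_i(1-p_i)$ and then to the decay rate of $\abs{\wh{P}}$, and bounding the effective support, so that the final product of constants genuinely collapses to $O(\eps)$ rather than $O(\eps \cdot \poly\log(1/\eps))$. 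None of these is conceptually deep, but getting the exponents to line up — the $A^\ell$ in the hypothesis matching the $\abs{\xi}^\ell \le A^\ell$ in the Taylor bound, and the $\sqrt{V\log(1/\eps)}$ support size canceling the $V^{-1/2}$ from the $L_2$ mass of $\wh{P}$ — is the delicate part.
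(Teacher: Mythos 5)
Your overall strategy is the right one, and it is essentially the same Fourier-analytic argument that the paper itself deploys for the $k$-dimensional generalization (Proposition \ref{prop:moments-variance-tradeoff} and Section 3.1): factor out $e(\xi)$ for the large parameters so that the perturbation is governed by $1-p'_i\le 1/2$, Taylor-expand $\log\wh{P}$ so that the degree-$\ell$ term is the $\ell$-th power sum times an $O(|\xi|)^\ell/\ell$ coefficient, use $|e^a-e^b|\le|a-b|$ for $\mathrm{Re}(a),\mathrm{Re}(b)\le 0$ to get pointwise closeness of $\wh{P}$ and $\wh{Q}$ on the effective Fourier support, and convert to total variation using the effective support of $P$ itself. (Note the paper does not prove Lemma \ref{lem:pbd_fewer}; it imports it from \cite{DiakonikolasKS16coltPBD}, so the only in-paper comparison point is the proof of Proposition \ref{prop:moments-variance-tradeoff}, which your sketch mirrors in one dimension.) Your final accounting also checks out once the Cauchy--Schwarz prefactor is written correctly: it should be $\sqrt{T}=(V\log(1/\eps))^{1/4}$, not $\sqrt{V\log(1/\eps)}$ as in your display; with the former, $(V\log(1/\eps))^{1/4}\cdot V^{-1/4}\cdot\eps/\log(1/\eps)=\eps\log^{-3/4}(1/\eps)\le\eps$, whereas the latter would blow up as $V^{1/4}$.

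The one genuine gap is in your resolution of your own obstacle (i), i.e.\ the regime $V\lesssim\log(1/\eps)$. Your second proposed fix is backwards: increasing $C$ makes $A=C\sqrt{\log(1/\eps)/V}$ \emph{larger}, not smaller, so no choice of $C$ forces $A\le c_0$ when $V\ll\log(1/\eps)$. Your first fix also fails there: when $\var[P]$ is small the bound $|\wh{P}(\xi)|\le\exp(-\Omega(\var[P][\xi]^2))$ does not push the transform below $\eps$ outside a constant-length interval (a near-deterministic PBD has $|\wh{P}|\approx 1$ everywhere), so one cannot ``separately argue the transforms decay enough outside that range,'' and near $\xi=1/2$ with $p_i=1/2$ the expansion of $\log(1-p_i(1-e(\xi)))$ sits at the boundary of convergence. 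The lemma as stated imposes no lower bound on $V$, so strictly speaking your sketch only proves it under an additional hypothesis of the form $V\ge\Omega(\log(1/\eps))$ — exactly the kind of restriction the paper builds into its own Proposition \ref{prop:moments-variance-tradeoff} via $\sigma\ge\poly(k\log(1/\eps))$, and one that is harmless for the application in Section 3 where $V\ge n\eps(1-\eps)=n^{\Omega(1)}$. If you add that hypothesis explicitly (or handle small $V$ by a separate argument, as the original source must), the proof closes.
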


Let $\eps = n^{-c}$.
For Lemma \ref{lem:pbd_fewer} we have $V \ge n \eps(1-\eps)$ and
$A = \bigTheta{\sqrt{\log(1/\eps) / V}} = \bigO{\sqrt{\frac{\log n}{n^{1-c}}}}$.
The difference in the moments of parameters of $P$ and $Q$ in Equation (\ref{eqn:pbd_fewer})
is bounded from above by $n$, so whenever $\ell > \frac{2+2c}{1-c}$,
the condition in Lemma \ref{lem:pbd_fewer} is automatically satisfied for sufficiently large $n$ because
\[ A^\ell n = \bigO{\frac{\log^{\ell/2} n}{n^{(1-c) \ell / 2}} n} <
  {\frac{1}{C \cdot n^c \cdot c \log n}} = \frac{\eps}{C \log(1/\eps)}. \]
So it is enough to search over the first $\ell = \bigTheta{\frac{1}{1-c}}$ moments
when each player put probability at least $\bigOmega{n^{-c}}$ on both strategies.
The algorithm for finding such an $\eps$-approximate equilibrium uses moment search
and dynamic programming, and is given for the case of general $k$ in the remainder of this section.

\paragraph{The General Case: $k$ Strategies.}
We now present our algorithm for $n$-player anonymous games with $k > 2$ strategies and prove Theorem \ref{thm:main}.
The intuition of the $k=2$ case carries over to the general case, but the details are more elaborate.
First, we show (Claim \ref{clm:all-randomize-nash}) that there exists an $\eps$-approximate equilibrium
whose corresponding PMD has variance $(n\eps / k)$ in all directions orthogonal to the vector $\mathbf{1} = (1, \ldots, 1)$.
Then, we prove (Lemma \ref{lem:pmd-fewer}) that when two PMDs have such high variances,
the closeness in their constant-degree parameter moments
translates to their closeness in total variation distance.
This structural result allows us to build a polynomial-size
cover for all PMDs with high variance,
which leads to a polynomial-time algorithm for computing $\eps$-approximate Nash equilibria (Algorithm \ref{alg:main}).
 
We first prove that when all the players put probability at least $\frac{\eps}{k-1}$ on each strategy,
  the covariance matrix of the resulting PMD has relatively large eigenvalues,
  except the zero eigenvalue associated with the all-one eigenvector.
The all-one eigenvector has eigenvalue zero because the coordinates of $X$ always sum to $n$.
\begin{claim}
\label{clm:all-randomize-nash}
Let $X = \sum_{i=1}^n X_i$ be an $(n,k)$-PMD and let $\Sigma$ be the covariance matrix of $X$.
If $p_{i,j} = \Pr[X_i = e_j] \ge \frac{\eps}{k-1}$ for all $i \in [n]$ and $j \in [k]$,
then all eigenvalues of $\Sigma$ but one are at least $\frac{n \eps}{k-1}$.
\end{claim}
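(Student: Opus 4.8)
The plan is to reduce everything to a single $k$-CRV calculation using independence, and then to lower-bound the Rayleigh quotient of $\Sigma$ on the hyperplane orthogonal to $\mathbf{1}$. Since $X=\sum_{i=1}^n X_i$ with the $X_i$ independent, the covariance matrix splits as $\Sigma=\sum_{i=1}^n\Sigma_i$, where $\Sigma_i=\mathrm{diag}(p_i)-p_i p_i^\top$ is the covariance of the $i$-th $k$-CRV and $p_i=(p_{i,1},\dots,p_{i,k})^\top$. Two structural facts are immediate: $\Sigma$ is symmetric positive semidefinite, and $\Sigma\mathbf{1}=\sum_i\big(p_i-p_i(p_i^\top\mathbf{1})\big)=0$ since $p_i^\top\mathbf{1}=1$, so $\mathbf{1}$ is an eigenvector of eigenvalue $0$ — this is exactly the statement that the coordinates of $X$ always sum to $n$.

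The heart of the argument is the claim that $v^\top\Sigma v\ge\tfrac{n\eps}{k-1}$ for every unit vector $v$ with $v\cdot\mathbf{1}=0$; it suffices to prove the per-player bound $v^\top\Sigma_i v\ge\tfrac{\eps}{k-1}$ and sum over $i\in[n]$. Write $\delta=\tfrac{\eps}{k-1}$ and decompose each probability as $p_{i,j}=\delta+r_{i,j}$ with $r_{i,j}\ge0$ (using $p_{i,j}\ge\delta$) and $\sum_j r_{i,j}=1-k\delta\le1$. Using $\sum_j v_j=0$ and $\sum_j v_j^2=1$ to kill the contribution of the ``floor'' part,
\[
v^\top\Sigma_i v = \sum_j p_{i,j}v_j^2 - \Big(\sum_j p_{i,j}v_j\Big)^2 = \delta + \sum_j r_{i,j}v_j^2 - \Big(\sum_j r_{i,j}v_j\Big)^2 .
\]
By Cauchy--Schwarz, $\big(\sum_j r_{i,j}v_j\big)^2\le\big(\sum_j r_{i,j}\big)\big(\sum_j r_{i,j}v_j^2\big)\le\sum_j r_{i,j}v_j^2$, so the last two terms are nonnegative and $v^\top\Sigma_i v\ge\delta$. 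Summing over $i$ gives $v^\top\Sigma v\ge n\delta=\tfrac{n\eps}{k-1}$.

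To finish I would invoke the variational characterization of eigenvalues. As $\Sigma$ is symmetric with $\mathbf{1}$ in its kernel, the hyperplane $\mathbf{1}^\perp$ is $\Sigma$-invariant, and the $k-1$ eigenvalues of the symmetric operator $\Sigma|_{\mathbf{1}^\perp}$ are precisely the eigenvalues of $\Sigma$ other than the one associated with $\mathbf{1}$; the smallest of them equals $\min_{v\in\mathbf{1}^\perp,\, \|v\|=1}v^\top\Sigma v\ge\tfrac{n\eps}{k-1}$ by the previous step. Hence all eigenvalues of $\Sigma$ except one are at least $\tfrac{n\eps}{k-1}$. I do not anticipate a genuine obstacle; the one point that needs care is arranging the decomposition $p_{i,j}=\delta+r_{i,j}$ so that it meshes cleanly with $\sum_j v_j=0$ and $\|v\|=1$, making the floor part contribute exactly $\delta$ while Cauchy--Schwarz annihilates the excess. (Note the hypothesis implicitly forces $k\delta\le1$, i.e.\ $\eps\le\tfrac{k-1}{k}$, which is precisely what guarantees $r_{i,j}\ge0$ and $\sum_j r_{i,j}\le1$, and which holds whenever a strategy profile with $p_{i,j}\ge\delta$ exists at all.)
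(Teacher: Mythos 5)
Your proof is correct and follows essentially the same route as the paper's: both reduce to the per-player bound $\var[v^T X_i]\ge \frac{\eps}{k-1}$ for a unit vector $v$ orthogonal to $\bone$, sum over the $n$ players, and conclude via the Rayleigh quotient on $\bone^{\perp}$. The only (cosmetic) difference is in that one-line inner inequality: the paper factors out $\min_j p_{i,j}$ from $\sum_j p_{i,j}(v_j-\mu)^2$ and notes the remaining sum is $1+k\mu^2\ge 1$, whereas you split off the floor $\delta=\frac{\eps}{k-1}$ and annihilate the excess term with Cauchy--Schwarz.
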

\begin{proof}
For any unit vector $v \in \R^k$ that is orthogonal to the all-one vector $\mathbf{1}$,
i.e., $\sum_j {v_j} = 0$ and $\sum_j {v_j^2}=1$,
combining this with the assumption that $p_{i,j} \ge \frac{\eps}{k-1}$ we have,
\begin{align*}
\var[v^T X_i] &= \expect{\left(v^T X_i - \expect{(v^T X_i)}\right)^2} \\
  &= \sum_{j=1}^n p_{i,j} \left(v_j - \sum_{j'=1}^n p_{i,j'} v_{j'} \right)^2 \\
  &\ge \min_j \{p_{i,j}\} \cdot \sum_{j=1}^n\left( v_j^2 + \left(\sum_{j'=1}^n p_{i,j'} v_{j'}\right)^2 - 2 v_j \left(\sum_{j'=1}^n p_{i,j'} v_{j'} \right)\right) \\
  &= \min_j \{p_{i,j}\} \cdot \left(1 + n \left(\sum_{j'=1}^n p_{i,j'} v_{j'}\right)^2 \right) \\
  &\ge \frac{\eps}{k-1}.
  \end{align*}
Therefore,
\[ v^T \Sigma v = \var[v^T X] = \sum_{i=1}^n \var[v^T X_i] \ge \frac{n\eps}{k-1}.\]
So, for all eigenvectors $v$ orthogonal to $\vec{1}$,
we have $v^T \Sigma v = \lambda v^T v = \lambda \ge \frac{n\eps}{k-1}$ as claimed.
\end{proof}

The following robust moment-matching lemma provides a bound on how close degree $\ell$ moments need to be
so that two $(n,k)$-PMDs are $\eps$-close to each other,
under the assumption that $n \gg k$ (the anonymous game has many players and few strategies)
and $p_{i,j} \ge \frac{\eps}{k-1}$ (every player randomizes).
Lemma \ref{lem:pmd-fewer} allows us to build a polynomial-size $(\eps/5)$-cover
for PMDs with high variance, and since we know that there is an $\eps$-approximate equilibrium
with a high variance, we are guaranteed to find one in our cover.
\begin{lemma}
\label{lem:pmd-fewer}
Fix $0 < c < 1$ and let $\eps = n^{-c}$.
Assume that $n = k^{\Omega(k)}$ for some sufficiently large constant in the exponent.
Let $X$, $Y$ be $(n,k)$-PMDs with $X=\sum_{i=1}^k X^i$, $Y=\sum_{i=1}^k Y^i$
  where each $X^i$, $Y^i$ is an $i$-maximal PMD.
Let $\Sigma$ and $\Sigma'$ denote the covariance matrices of $X$ and $Y$ respectively.
If all eigenvalues of $\Sigma, \Sigma'$ but one are at least $\eps n/k$,
  and for $\ell \le \frac{2+2c}{1-c}$ all the parameter moments $m$ of degree $\ell$ satisfy that
\[ \abs{M_m(X^i)-M_m(Y^i)} \le n^{-c}. \]
Then, we have that $\totalvardist{X}{Y} \le \eps$.
\end{lemma}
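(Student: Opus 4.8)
The plan is to prove the lemma by Fourier analysis, in the spirit of~\cite{DiakonikolasKS16stoc}, but exploiting the large-variance hypothesis to get away with matching only a constant number of moments. First I would record two consequences of the hypotheses. (i) For each strategy $a$, the number of $a$-maximal summands of $X$ is $\sum_j M_{e_j}(X^a)$, which is an integer; since the degree-$1$ parameter moments of $X^a$ and $Y^a$ agree up to $n^{-c}$ and $kn^{-c}<1$ for large $n$, $X$ and $Y$ must have the same number $N_a$ of $a$-maximal summands. (ii) Since $\Sigma_{jj'}=[j=j']M_{e_j}(X)-M_{e_j+e_{j'}}(X)$ with $M_m(X)=\sum_a M_m(X^a)$, matching the degree-$\le 2$ parameter moments up to $n^{-c}$ forces $\norm{\Sigma-\Sigma'}_{\mathrm{op}}=O(k^2 n^{-c})\ll\eps n/k$, so $\Sigma$ and $\Sigma'$ have comparable spectra, with every nonzero eigenvalue in $[\eps n/(2k),\,n]$. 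I would then work in the eigenbasis of $\Sigma$ spanning $\mathbf{1}^\perp$, with eigenvalues $\lambda_1\ge\cdots\ge\lambda_{k-1}\ge\eps n/k$.

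Next I would localize on both sides. Since $X$ and $Y$ are supported on the hyperplane $\{\sum_j x_j=n\}$, their Fourier transforms live naturally on a $(k-1)$-dimensional torus, on which Parseval's identity reads $\normtwo{X-Y}=\normtwo{\wh X-\wh Y}$. By Bernstein's inequality applied to the bounded sums $v_i^T(X-\expect{X})$, both $X$ and $Y$ place mass $\ge 1-\eps/10$ on a common box $S$ with $|S|\le\prod_i O(\sqrt{\lambda_i\log(n/\eps)})$. On the Fourier side, for $\xi$ near the origin $\abs{\wh X(\xi)}^2=\prod_i\abs{\wh{X_i}(\xi)}^2\le\exp(-\Omega(\xi^T\Sigma\,\xi))$ (using $1-\cos(2\pi\theta)=\Omega(\theta^2)$ together with the identity $\var[\xi^T X_i]=\sum_{j<j'}p_{i,j}p_{i,j'}(\xi_j-\xi_{j'})^2$), and away from the origin the analogous estimate with distances to $\Z$, combined with the eigenvalue hypothesis, shows $\abs{\wh X(\xi)},\abs{\wh Y(\xi)}\le\eps\cdot n^{-\Omega(k)}$ outside the ellipsoid $E=\{\xi\perp\mathbf{1}:\xi^T\Sigma\,\xi\le C\log(1/\eps)\}$; these are essentially the Fourier-concentration estimates of~\cite{DiakonikolasKS16stoc}. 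Parseval, Cauchy--Schwarz, and these facts then give
\[ \totalvardist{X}{Y}\le \tfrac12\sqrt{|S|}\,\normtwo{\wh X-\wh Y}+\eps/5\le \tfrac12\sqrt{|S|\cdot\mathrm{vol}(E)}\cdot\max_{\xi\in E}\abs{\wh X(\xi)-\wh Y(\xi)}+\eps/5. \]
Since $|S|\cdot\mathrm{vol}(E)\le\prod_i O(\sqrt{\log(n/\eps)\log(1/\eps)})=(\log n)^{O(k)}$, which is $n^{o(1)}$ once the constant in $n=k^{\Omega(k)}$ is taken large enough, the task reduces to proving $\abs{\wh X(\xi)-\wh Y(\xi)}\le n^{-c-\Omega(1)}$ for every $\xi\in E$.

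This pointwise bound is the heart of the argument. On $E$ we have $\norminf{\xi}\le\normtwo{\xi}\le\sqrt{C\log(1/\eps)/\lambda_{k-1}}=:R=n^{-(1-c)/2+o(1)}$, so for an $a$-maximal $X_i$ I would write $\wh{X_i}(\xi)=e(\xi_a)\bigl(1+w_i(\xi)\bigr)$ with $w_i(\xi)=\sum_{j\ne a}(e(\xi_j-\xi_a)-1)p_{i,j}$ and $\abs{w_i(\xi)}\le 4\pi R\ll\tfrac12$; here $p_{i,a}\ge 1/k$ is what makes centering at $e(\xi_a)$ and taking principal logarithms legitimate. Summing over $i$ and grouping by maximal strategy,
\[ \log\wh X(\xi)=\sum_a N_a\log\bigl(e(\xi_a)\bigr)+\sum_a\ \sum_{i:\,a\text{-maximal}}\log\bigl(1+w_i(\xi)\bigr), \]
and the first sum cancels against that of $Y$ by (i). Taylor-expanding $\log(1+w)$ to order $L$ and then expanding each $w_i^t$, the degree-$\le L$ part of $\sum_{i:\,a\text{-max}}\log(1+w_i(\xi))$ equals $\sum_{t=1}^{L}\frac{(-1)^{t-1}}{t}\sum_{j_1,\dots,j_t\ne a}\bigl(\prod_s(e(\xi_{j_s}-\xi_a)-1)\bigr)M_{m(j_1,\dots,j_t)}(X^a)$, where $m(j_1,\dots,j_t)$ is the degree-$t$ multi-index of the tuple $(j_1,\dots,j_t)$; this is a universal ($X$-independent) linear combination of the parameter moments $M_m(X^a)$ of degree $\le L$, each coefficient of modulus at most $(4\pi R)^{\|m\|}$. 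I would choose the unique integer $L$ with $\tfrac{1+3c}{1-c}<L\le\tfrac{2+2c}{1-c}$. Since $\|m\|\le L\le\tfrac{2+2c}{1-c}$, the hypothesis $\abs{M_m(X^a)-M_m(Y^a)}\le n^{-c}$ applies, so the difference of these polynomial parts (summed over the $k$ choices of $a$) is at most $k^{O(1)}R\cdot n^{-c}=n^{-c-(1-c)/2+o(1)}$. The Taylor tails contribute at most $\sum_a\sum_{i:\,a\text{-max}}O(\abs{w_i}^{L+1})\le O(n)\cdot(4\pi R)^{L+1}=n^{\,1-(1-c)(L+1)/2+o(1)}$, which is $n^{-c-\Omega(1)}$ precisely because $L+1>\tfrac{2+2c}{1-c}$. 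Hence $\abs{\log\wh X(\xi)-\log\wh Y(\xi)}\le n^{-c-\Omega(1)}$ on $E$, and since $\abs{\wh X},\abs{\wh Y}\le 1$ this yields $\abs{\wh X(\xi)-\wh Y(\xi)}\le 2n^{-c-\Omega(1)}$, which by the display above gives $\totalvardist{X}{Y}\le\eps$.

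The hard part will be the bookkeeping in the third paragraph: verifying that the degree-$\le L$ part of $\log\wh X$ is genuinely a universal linear combination of parameter moments of degree at most $L$ (so that the matching hypothesis can be invoked), and choosing $L=\Theta(1/(1-c))$ to meet two competing demands at once — $L\le\tfrac{2+2c}{1-c}$, so the hypothesis covers the moments that appear, and $L+1>\tfrac{2+2c}{1-c}$, so the Taylor tail beats the localization loss. That these are (barely) simultaneously satisfiable is exactly what forces the threshold $\tfrac{2+2c}{1-c}$ in the statement, so the constant cannot be improved by this route. A secondary obstacle is establishing the two localization facts with the correct dependence on $k$ — especially that $\wh X$ is negligible off $E$ on the reduced torus — which uses the full eigenvalue hypothesis, the comparability of $\Sigma$ and $\Sigma'$, and the assumption $n=k^{\Omega(k)}$; this part is largely inherited from~\cite{DiakonikolasKS16stoc}.
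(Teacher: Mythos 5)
Your proposal is correct and follows essentially the same route as the paper, which proves the lemma via Proposition~\ref{prop:moments-variance-tradeoff} using exactly these ingredients: an effective-support set $S$ for the PMDs, an effective-support set $T$ for their Fourier transforms on which only the torus-differences $[\xi_i-\xi_j]=O(\sqrt{k\log(1/\eps)}/\sigma)$ are controlled, and a pointwise comparison of $\log\wh{X}$ and $\log\wh{Y}$ by expanding the logarithm of each $i$-maximal factor into parameter moments with coefficients $O(R)^{\|m\|_1}$. The differences are organizational rather than substantive --- the paper uses a pointwise inverse-Fourier ($L^\infty\!\cdot L^1$) bound in place of your Parseval/Cauchy--Schwarz step, and keeps the full infinite moment series while observing that the degree-$\ell$ conditions for $\ell>\frac{2+2c}{1-c}$ hold automatically since moment differences are trivially at most $n$, which is the mirror image of your truncation at $L$ with tail bound $n(4\pi R)^{L+1}$ --- and your explicit verification that the counts $N_a$ coincide (so the phase factors cancel) is a point the paper's write-up glosses over.
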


Lemma~\ref{lem:pmd-fewer} follows from the next proposition whose proof
is given in the following subsection.

\begin{proposition}
\label{prop:moments-variance-tradeoff}
Let $\eps > 0$.
Let $X$, $Y$ be $(n,k)$-PMDs with $X=\sum_{i=1}^k X^i$, $Y=\sum_{i=1}^k Y^i$
where each $X^i$, $Y^i$ is an $i$-maximal PMD.
Let $\Sigma$ and $\Sigma'$ denote the covariance matrices of $X$ and $Y$ respectively,
where all eigenvalues of $\Sigma$ and $\Sigma'$ but one
are at least $\sigma^2$, where $\sigma \geq \poly(k \log(1/\eps))$.
Suppose that for $1 \leq i \leq k$, $\ell \ge 1$, for all moments $m$ of degree $\ell$ with $m_i=0$, we have that
\begin{align}
\abs{M_m(X^i)-M_m(Y^i)} \leq \frac{\eps \cdot \sigma^\ell}{C'^{k+\ell} \cdot k^{3\ell/2+1} \cdot \log^{k+\ell/2}(1/\eps)}
\end{align}
for a sufficiently large constant $C'$. Then $\dtv(X,Y) \leq \eps$.
\end{proposition}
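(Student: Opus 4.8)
The plan is to work entirely on the Fourier side, following and refining the approach of~\cite{DiakonikolasKS16stoc}. First I would split the frequency domain $[0,1]^k$ into an ``effective support'' region near the origin and its complement. On the complement, the large eigenvalue assumption forces $|\wh{X}(\xi)|$ and $|\wh{Y}(\xi)|$ to decay super-polynomially, so both contribute negligibly. Concretely, for each $i$-maximal component $X^i$ one shows $|\wh{X^i}(\xi)| \le \exp(-\Omega(\sigma^2 \|\xi'\|^2))$ where $\xi'$ is the component of $\xi$ orthogonal to $\vec 1$ (using that $e_i$ is the dominant outcome and the variance bound), and symmetrically for $Y^i$; multiplying over $i$ gives the tail bound, and since $\sigma \ge \poly(k\log(1/\eps))$ the region $\|\xi'\|_2 \gtrsim \sqrt{\log(1/\eps)}/\sigma$ is handled by integrating the Gaussian tail. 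On the effective support, I would pass to the logarithm: writing $\wh{X}(\xi) = \prod_i \wh{X^i}(\xi)$ and $\log \wh{X^i}(\xi) = \sum_{i'} \log \wh{X^i_{i'}}(\xi)$, expand each $\log \wh{X^i_{i'}}(\xi)$ as a Taylor series in the $e(\xi_j)-1$ variables (equivalently in the small quantities $p_{i,i'}^{m}$ that appear, after using $i$-maximality to ensure the expansion point is valid, i.e. $\wh{X^i_{i'}}(\xi)$ stays bounded away from $0$ in the effective region).

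The key identity is that the degree-$\ell$ coefficients of this Taylor expansion, summed over the component index $i'$, are exactly linear combinations of the parameter moments $M_m(X^i)$ of degree $\ell$ with $m_i = 0$ (the $m_i=0$ constraint coming from the normalization $\sum_j p_{i',j}=1$ which lets us eliminate the $i$-th coordinate). Hence the moment-matching hypothesis controls $|\log\wh{X}(\xi) - \log\wh{Y}(\xi)|$ term by term: the degree-$\ell$ term is bounded by (number of degree-$\ell$ multi-indices, which is $\le k^\ell$) $\times$ $|\xi'|^\ell$-type factor $\times$ the moment discrepancy, and on the effective support $|\xi'| \lesssim \sqrt{\log(1/\eps)}/\sigma$, so each degree-$\ell$ contribution is at most $k^\ell \cdot (\sqrt{\log(1/\eps)}/\sigma)^\ell \cdot \frac{\eps\sigma^\ell}{C'^{k+\ell} k^{3\ell/2+1}\log^{k+\ell/2}(1/\eps)}$, which is $\eps / (C'^{k+\ell} k^{\ell/2+1} \log^{k}(1/\eps))$ up to constants. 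Summing the geometric series over $\ell \ge 1$ (the $C'^\ell$ beating any fixed base) and then bounding $|\wh X - \wh Y| \le |\wh X|\cdot|\log\wh X - \log\wh Y| \cdot e^{|\log\wh X - \log\wh Y|}$, I get $|\wh X(\xi) - \wh Y(\xi)| \lesssim \eps / \log^{k}(1/\eps)$ pointwise on the effective support. Finally, convert the Fourier bound back to total variation: by Plancherel / Cauchy--Schwarz, $\dtv(X,Y) \le \frac12\|X-Y\|_1 \le \frac12 \sqrt{|D|}\cdot \|\wh X - \wh Y\|_{L^2}$ where $D$ is the (essential) support; the $\log(1/\eps)^k$ slack in the denominator is designed precisely to absorb the $\sqrt{|D|}$ factor, which is at most $\poly(\sigma)^k \le \poly(k\log(1/\eps))^{O(k)}$ after truncating to the effective support, giving $\dtv(X,Y) \le \eps$.

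The main obstacle, and where the proof needs genuine care rather than bookkeeping, is the logarithm step on the effective support: one must verify that $\wh{X^i_{i'}}(\xi)$ never vanishes (and stays bounded away from $0$) throughout the effective region so that $\log\wh{X^i_{i'}}$ is well-defined and the Taylor series converges with the claimed coefficient bounds. This is exactly where $i$-maximality is essential — it guarantees $p_{i',i} \ge 1/k$, so $\wh{X^i_{i'}}(\xi) = p_{i',i} e(\xi_i) + \sum_{j\ne i} p_{i',j} e(\xi_j)$ has modulus at least a constant when $\xi$ lies in a ball of radius $O(1/k)$ around the origin — but one has to reconcile this $O(1/k)$-radius requirement with the effective-support radius $\sqrt{\log(1/\eps)}/\sigma$, which is why the hypothesis $\sigma \ge \poly(k\log(1/\eps))$ is needed (it makes $\sqrt{\log(1/\eps)}/\sigma \ll 1/k$). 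A secondary technical point is controlling the number of distinct component types and replacing the genuine sum over $n$ components by a sum over the (at most $k$) maximality classes, so that the per-class moment hypothesis is the right object; this is routine once the expansion is set up but must be stated carefully.
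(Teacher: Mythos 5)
Your overall strategy coincides with the paper's: restrict to an effective support $T$ in Fourier space, show the tail outside $T$ is negligible, Taylor-expand the logarithm of each $i$-maximal component's Fourier transform so that the degree-$\ell$ coefficients are exactly the degree-$\ell$ parameter moments with $m_i=0$, and sum a geometric series in $\ell$. However, there are two genuine gaps. The first is the final conversion from pointwise Fourier closeness to total variation distance. Your Cauchy--Schwarz bound $\|X-Y\|_1 \le \sqrt{|D|}\,\|\wh X - \wh Y\|_{L^2}$ relies on the claim that the essential support satisfies $|D| \le \poly(\sigma)^k \le \poly(k\log(1/\eps))^{O(k)}$. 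Both inequalities fail: the essential support of $X$ has size $|S| = \sqrt{\det(\Sigma+I)}\cdot O(\log(1/\eps))^{k/2}$, and the eigenvalues of $\Sigma$ are only bounded \emph{below} by $\sigma^2$ --- they can be as large as $\Theta(n)$, so $|S|$ grows polynomially in $n$; moreover $\sigma \ge \poly(k\log(1/\eps))$ makes your second inequality point the wrong way. The moment tolerances in the hypothesis are calibrated to absorb only a $\log^{O(k)}(1/\eps)$ loss, not a $\poly(n)^k$ one. The paper instead bounds $|\Pr[X=x]-\Pr[Y=x]|$ pointwise via the inverse Fourier transform by $\mathrm{Vol}(T)\cdot\sup_T|\wh X-\wh Y| + \eps/(2|S|)$ and sums only over $x\in S$, using the crucial duality $\mathrm{Vol}(T)\cdot|S| = O(C\log(1/\eps))^k$ (Lemma~\ref{lem:clt-fourier-support-and-volume}(ii)); without some form of this volume--support tradeoff your argument does not close.

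The second gap is the domain of validity of the logarithm and Taylor expansion. You justify $\log\wh{X^i_{i'}}$ on a ball of radius $O(1/k)$ around the origin, but the effective support $T$ is \emph{not} near the origin: since $\Sigma\bone=0$, $T$ is a neighborhood of the line $\R\bone$ modulo $\Z^k$, and only the coordinate differences $[\xi_j-\xi_i]$ are guaranteed small (of order $\sqrt{k\log(1/\eps)}/\sigma$), not the $\xi_j$ themselves. Consequently the quantities $e(\xi_j)-1$ in which you expand need not be small anywhere on $T$. The paper handles this by factoring out the unimodular phase $e(|A_i|\xi_i)$ from $\wh{X^i}$ and expanding $\log\bigl(1-\sum_{j\ne i}(1-e(\xi_j-\xi_i))p_{\cdot,j}\bigr)$ in powers of $1-e(\xi_j-\xi_i)$; $i$-maximality then gives $\sum_{j\ne i}p_{\cdot,j}\le 1-1/k$ and smallness of $[\xi_j-\xi_i]$ gives convergence. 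Your remark about eliminating the $i$-th coordinate via $\sum_j p_{\cdot,j}=1$ points in the right direction, but as written the expansion and its convergence argument are set up on the wrong region. A minor additional slip: the per-component decay $|\wh{X^i}(\xi)|\le\exp(-\Omega(\sigma^2\|\xi'\|^2))$ is false, since an individual $i$-maximal component may have arbitrarily small (even zero) variance; only the full product $\wh X$ enjoys the Gaussian-type decay.
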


The proof of Proposition \ref{prop:moments-variance-tradeoff} exploits the sparsity of the continuous Fourier transform
of our PMDs, as well as careful Taylor approximations of the logarithm of the Fourier transform.

\begin{proof}[Proof of Lemma~\ref{lem:pmd-fewer} from Proposition~\ref{prop:moments-variance-tradeoff}]
In order to guarantee that $\totalvardist{X}{Y} \le \eps$,
Proposition \ref{prop:moments-variance-tradeoff} requires the following condition to hold for a sufficiently large constant $C'$:
\begin{align}
\label{eqn:pmd-moments-variance}
\abs{M_m(X^i)-M_m(Y^i)} \leq \frac{\eps}{k (C'\log(1/\eps))^k}\cdot \left(\frac{\sqrt{\eps n/k}}{C' k^{3/2} \log^{1/2}(1/\eps)}\right)^\ell, \quad \forall i \in [k], \ell \ge 1.
\end{align}
To prove the lemma, we use the fact that $n \gg k$ and essentially ignore all the terms except polynomials of $n$.
Formally, we first need to show that
\[ \frac{\eps}{k (C'\log(1/\eps))^k}\cdot \left(\frac{\sqrt{\eps n/k}}{C' k^{3/2} \log^{1/2}(1/\eps)}\right)^\ell \ge n^{-c}, \quad \forall \ell \ge 1, \]
  under the assumption that $c < 1$, $\eps = n^{-c}$ and $n \ge k^{O(k/(1-c))}$.
After substituting $\eps = n^{-c}$,
  observe that $n^{1-c} \ge C'^2 k^4 \log n$, so the term inside the $\ell$-th power is greater than 1.
Thus, we only need to check this inequality for $\ell = 1$, which simplifies to $n^{1-c} \ge C'^{2k+2} k^6 (\log n)^{2k}$ and holds true.

In addition, we need to show that condition (\ref{eqn:pmd-moments-variance}) holds automatically for $\ell > \frac{2+2c}{1-c}$.
This follows from the fact that the difference in parameter moments is at most $n$ and $n \gg k$,
\[ \abs{M_m(X^i)-M_m(Y^i)} \le n \le \frac{\eps}{k (C'\log(1/\eps))^k}\cdot \left(\frac{\sqrt{\eps n/k}}{C' k^{3/2} \log^{1/2}(1/\eps)}\right)^\ell, \quad \forall \ell > \frac{2+2c}{1-c}. \qedhere \]
\end{proof}

We recall some of the notations for readability before we describe the construction of our $\eps$-cover of high-variance PMDs.
We use $X$ to denote a generic $(\ell,k)$-PMD for some $\ell \in [n]$, and we denote $p_{i,j} = \Pr[X_i = e_j]$.
We use $A_t \subseteq [\ell]$ to denote the set of $t$-maximal CRVs in $X$,
  where a $k$-CRV is $t$-maximal if $e_t$ is its most likely outcome,
  and we use $X^t = \sum_{i \in A_t} X_i$ to denote the $t$-maximal component PMD of $X$.
For a vector $m = (m_1, \ldots, m_k) \in \Z_+^k$, we define $m^{th}$ parameter moment of $X^t$ to be
  $M_m(X^t) = \sum_{i\in A_t} \prod_{j=1}^k p_{i,j}^{m_j}.$
We refer to $\normone{m} = \sum_{j=1}^k m_j$ as the \emph{degree} of $M_m(X)$.
We use $\cals$ to denote the set of all $k$-CRVs whose probabilities are multiples of $\frac{\eps}{20kn}$.

Lemma \ref{lem:pmd-fewer} states that the high-degree parameter moments match automatically,
  which allows us to impose an appropriate grid on the low-degree moments
  to cover the set of high-variance PMDs.
The size of this cover can be bounded by a simple counting argument:
We have at most $k^{O(\frac{1}{1-c})}$ moments with degree at most $O(\frac{1}{1-c})$,
  and we need to approximate these moments for each $t$-maximal component PMDs,
  so there are at most $k \cdot k^{O(\frac{1}{1-c})} = k^{O(\frac{1}{1-c})}$ moments $M_m(X^t)$ that we care about.
We approximate these moments to precision $n^{-c}$, and the moments are at most $n$, so the size of the cover is $\left(\frac{n}{n^{-c}}\right)^{k^{O(\frac{1}{1-c})}} = n^{k^{O(1/1-c)}}$.

We define this grid on low-degree moments formally in the following lemma.
For every $(\ell,k)$-PMD $X$ with $\ell \in [n]$, we associate some \emph{data} $D(X)$ with $X$,
  which is a vector of the approximate values of the low-degree moments $M_m(X^t)$ of $X$.
\begin{lemma}
\label{lem:pmd-data}
Fix $0 < c < 1$ and $n$, let $\eps = n^{-c}$.
We define the data $D(W)$ of a $k$-CRV $W$ as:
\[ D(W)_{m,t} =
  \begin{cases}
    M_m(W) \text{ rounded to the nearest integer multiple of } n^{-c} / n, & \text{ if $W$ is $t$-maximal.} \\
    0, &\text{otherwise.}
  \end{cases} \]
For $\ell \in [n]$, we define the data of an $(\ell,k)$-PMD
  $X = \sum_{i=1}^\ell X_i$ to be the sum of the data of its $k$-CRVs:
  $D(X) = \sum_{i=1}^\ell D(X_i)$.
The data $D(X)$ satisfies two important properties:
\begin{enumerate}
\item (Representative) If $D(X) = D(Y)$ for two $(n, k)$-PMDs or two $(n-1, k)$-PMDs, then $\totalvardist{X}{Y} \le \eps$.
\item (Extensible) For independent PMDs $X$ and $Y$, we have that $D(X+Y) = D(X) + D(Y)$.
\end{enumerate}
\end{lemma}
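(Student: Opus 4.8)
The plan is to treat the two claimed properties separately: the \emph{Extensible} property is a definitional identity, while the \emph{Representative} property carries the real content and will be reduced to Lemma~\ref{lem:pmd-fewer}. For \emph{Extensible}, note that by construction $D$ of a PMD is the sum of $D$ over the $k$-CRVs in its defining decomposition; hence if $X=\sum_i X_i$ and $Y=\sum_j Y_j$ are independent PMDs, then $X+Y$ has defining decomposition $\{X_i\}_i\cup\{Y_j\}_j$, so $D(X+Y)=\sum_i D(X_i)+\sum_j D(Y_j)=D(X)+D(Y)$. Nothing else is needed here.

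For \emph{Representative}, fix two $(n,k)$-PMDs $X,Y$ with $D(X)=D(Y)$ (the $(n-1,k)$ case is handled identically). First I would decompose both into maximal components, $X=\sum_{t=1}^{k}X^{t}$ with $X^{t}=\sum_{i\in A_{t}}X_{i}$ the sum of the $t$-maximal CRVs of $X$, and similarly for $Y$; this is precisely the decomposition that Lemma~\ref{lem:pmd-fewer} requires. The key quantitative step is to bound the rounding error: for a $t$-maximal CRV $W$ the entry $D(W)_{m,t}$ differs from $M_{m}(W)$ by at most $\tfrac{1}{2}\cdot n^{-c}/n$, while $D(W)_{m,t}=0$ for $W$ not $t$-maximal, so summing over the at most $n$ CRVs of $A_{t}$,
\[
\bigl|D(X)_{m,t}-M_{m}(X^{t})\bigr|\;\le\;|A_{t}|\cdot\frac{n^{-c}}{2n}\;\le\;\frac{n^{-c}}{2},
\]
and likewise for $Y$. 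Hence $D(X)=D(Y)$ forces $|M_{m}(X^{t})-M_{m}(Y^{t})|\le n^{-c}$ for every $t\in[k]$ and every recorded parameter moment $m$ (those of degree at most $\tfrac{2+2c}{1-c}$) --- exactly the moment-closeness hypothesis of Lemma~\ref{lem:pmd-fewer}. This is also what dictates the grid width $n^{-c}/n$: the per-coordinate error accumulates over as many as $n$ CRVs and must still fit inside the $n^{-c}$ budget.

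It remains to verify the eigenvalue hypothesis of Lemma~\ref{lem:pmd-fewer} for $X$ and $Y$. The covariance matrix of a PMD is an explicit linear functional of its parameter moments of degree at most $2$: for each maximal component $\var[X^{t}_{j}]=M_{e_{j}}(X^{t})-M_{2e_{j}}(X^{t})$ and $\mathrm{Cov}[X^{t}_{j},X^{t}_{j'}]=-M_{e_{j}+e_{j'}}(X^{t})$, and one sums over $t$; since $\tfrac{2+2c}{1-c}>2$, all these moments are recorded in $D$. Therefore $D(X)=D(Y)$ gives $\|\Sigma-\Sigma'\|_{\mathrm{op}}=O(k^{2}n^{-c})$, which is $\ll \eps n/k$ since $k^{3}\ll n$ under the assumption $n=k^{\Omega(k)}$; by Weyl's inequality each eigenvalue of $\Sigma'$ is then within $O(k^{2}n^{-c})$ of the corresponding eigenvalue of $\Sigma$, and since the all-ones direction carries eigenvalue $0$ for both (the coordinates of an $(n,k)$- or $(n-1,k)$-PMD sum to a fixed number), the hypothesis ``all eigenvalues but one are at least $\eps n/k$'' transfers from $X$ to $Y$ --- and it holds for $X$ because the PMDs we actually enumerate are the high-variance ones supplied by Lemma~\ref{lem:perturb} and Claim~\ref{clm:all-randomize-nash}. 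Invoking Lemma~\ref{lem:pmd-fewer} (with $n$, or with $n-1$ in place of $n$ for the order-$(n-1)$ case, all its inequalities being robust to this shift and to the $O(1)$ slack above) then yields $\totalvardist{X}{Y}\le\eps$. I expect the main --- indeed essentially the only --- obstacle to be this bookkeeping: pinning the accumulated rounding error and the covariance perturbation below the explicit thresholds of Lemma~\ref{lem:pmd-fewer}, and cleanly handling the degenerate all-ones direction when carrying the eigenvalue bound from $X$ to $Y$.
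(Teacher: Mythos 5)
Your proof is correct and follows essentially the same route as the paper's: the extensible property is definitional, and the representative property comes from bounding the per-CRV rounding error by $n^{-c}/(2n)$, summing to $n^{-c}/2$ per maximal component so that $D(X)=D(Y)$ forces the low-degree moments within $n^{-c}$, and then invoking Lemma~\ref{lem:pmd-fewer}. Your additional verification of the eigenvalue hypothesis of Lemma~\ref{lem:pmd-fewer} (via the degree-$2$ moments and the fact that the enumerated PMDs are the high-variance ones) is a point the paper's proof leaves implicit, and is a worthwhile clarification rather than a deviation.
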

\begin{proof}
The extensible property follows directly from the definition of $D(X)$.
To see the representative property,
  note that we round $M_m(W)$ to the nearest integer multiple of $n^{-c} / n$,
  so the error in the moments of $W$ is at most $n^{-c} / (2n)$.
When we add up the data of an $(n,k)$-PMD or $(n-1,k)$-PMD,
  the error in the moments of each $t$-maximal component PMDs is at most $n^{-c}/2$.
So if two PMDs $X$ and $Y$ have the same data, their low-degree moments differ by at most $n^{-c}$,
  and then by Lemma \ref{lem:pmd-fewer} we have $\totalvardist{X}{Y} \le \eps$.
\end{proof}

\begin{algorithm}[!ht]
  \caption{GenerateData}
  \label{alg:gendata}
  \SetKwInOut{Input}{Input}
  \SetKwInOut{Output}{Output}
  \Input{$\{\cals_i\}_{i=1}^n$, $\eps > 0$.}
  \Output{The set of all possible data $\cald$ of $(n,k)$-PMDs $X = \sum_{i=1}^n X_i$ where $X_i \in \cals_i$.}
  $\cald_0 = \{ \vec{0} \}$\;
  \For{$\ell = 1 \ldots n$}{
   \ForAll{$D \in \cald_{\ell-1}$}{
    \ForAll{$W \in \cals_\ell$}{
     Add $D + D(W)$ to $\cald_\ell$ if it is not in $\cald_\ell$ already\;
     Keep track of an $(\ell, k)$-PMD whose data is $D + D(W)$\;
    }
   }
  }
  \Return{$\cald = \cald_n$}\;
\end{algorithm}

\begin{algorithm}[!ht]
  \caption{Moment Search}
  \label{alg:main}
  \SetKwInOut{Input}{Input}
  \SetKwInOut{Output}{Output}
  \Input{An $n$-player $k$-strategy anonymous game $G$, $\eps = n^{-c}$ for some $c < 1$.}
  \Output{An $\eps$-approximate Nash equilibrium of $G$.}
  $\cald_n$ = GenerateData(\{$\cals_i = \cals\}_{i=1}^n$, $\eps/5$)\;
  $\cald_{n-1}$ = GenerateData(\{$\cals_i = \cals\}_{i=1}^{n-1}$, $\eps/5$)\;
  \ForAll{$D \in \cald_n$}{
    Set $\cals_i = \emptyset$ for all $i$\;
    \ForAll {$X_i \in \cals$}{
      Let $D_{-i} = D - D(X_i)$\;
      \If{$\exists Y_{D_{-i}} \in \cald_{n-1}$ with $D(Y_{D_{-i}}) = D_{-i}$ and $X_i$ is a $(3\eps/5)$-best response to $Y_{D_{-i}}$}{
        Add $X_i$ to $\cals_i$\;
      }
    }
    $\cald'_n$ = GenerateData(\{$\cals_i\}_{i=1}^n$, $\eps/5$)\;
    \If{$D \in \cald'_n$}{
      \Return $(X_1, \ldots, X_n)$ with  PMD $X = $ with $D\left(\sum_{i=1}^n X_i\right) = D$ in $\cald'_n$\;
    }
  }
\end{algorithm}

Our algorithm (Algorithm \ref{alg:main}) for computing approximate equilibria is similar to the approach used in \cite{DaskalakisP15} and \cite{DiakonikolasKS16stoc}.
We start by constructing a polynomial-sized ($\eps/5$)-cover of high-variance PMDs (Algorithm \ref{alg:gendata}),
  and then iterate over this cover.
For each element in the cover, we compute the set of $(3\eps/5)$-best-responses for each player,
  and then run the cover construction algorithm again,
  but this time we only allow each player to choose from her $(3\eps/5)$-best-responses.
If we could reconstruct a PMD whose moments are close enough to the one we started with,
  then we have found an $\eps$-approximate Nash equilibrium.

Recall that a mixed strategy profile for a $k$-strategy anonymous game can be represented as
  a list of $k$-CRVs $(X_1, \ldots, X_n)$, where $X_i$ describes the mixed strategy of player $i$.
Recall that $(X_1, \ldots, X_n)$ is an $\eps$-approximate Nash equilibrium if for each player $i$ we have
  $\expect{u^i_{X_i}(X_{-i})} \ge \expect{u^i_a(X_{-i})} - \eps$ for all $a \in [k]$,
  where $X_{-i} = \sum_{j \neq i} X_j$ is the distribution of the sum of other players strategies.
\begin{lemma}
\label{lem:small-dtv-still-nash}
Fix an anonymous game $G = (n, k, \{u^i_a\}_{i\in [n],a\in [k]})$ with payoffs normalized to $[0,1]$.
Let $(X_1, \ldots, X_n)$ and $(Y_1, \ldots, Y_n)$ be two lists of $k$-CRVs.
If $X_i$ is a $\delta$-best response to $X_{-i}$,
  and $\totalvardist{X_{-i}}{Y_{-i}} \le \eps$, then $X_i$ is a $(\delta + 2\eps)$-best response to $Y_{-i}$.
Moreover, if $(X_1, \ldots, X_n)$ is a $\delta$-approximate equilibrium,
  and $\totalvardist{X_i}{Y_i} + \totalvardist{X_{-i}}{Y_{-i}} \le \eps$ for all $i \in [n]$,
  then $(Y_1, \ldots, Y_n)$ is a $(\delta + 2\eps)$-approximate equilibrium.
\end{lemma}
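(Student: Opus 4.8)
The statement consists of two assertions, and the plan is to prove the first and then bootstrap it to get the second. The key observation throughout is that the expected payoff $\expect{u^i_a(Z)}$ is a bounded linear functional of the distribution of $Z$: since $u^i_a$ maps $\Pi^k_{n-1}$ into $[0,1]$, for any two random variables $Z, Z'$ ranging over $\Pi^k_{n-1}$ we have $\abs{\expect{u^i_a(Z)} - \expect{u^i_a(Z')}} \le \norminf{u^i_a}\cdot\normone{Z - Z'} = \normone{Z - Z'} = 2\totalvardist{Z}{Z'}$, using the definition of total variation distance recalled in Section~\ref{sec:background}. This single inequality is the workhorse.

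For the first assertion, I would fix a player $i$ and an arbitrary pure strategy $a \in [k]$. The hypothesis that $X_i$ is a $\delta$-best response to $X_{-i}$ means $\expect{u^i_{X_i}(X_{-i})} \ge \expect{u^i_a(X_{-i})} - \delta$, where $\expect{u^i_{X_i}(X_{-i})} = \expectOver{b \sim X_i}{\expect{u^i_b(X_{-i})}}$. Applying the workhorse inequality with $Z = X_{-i}$ and $Z' = Y_{-i}$ bounds $\abs{\expect{u^i_b(X_{-i})} - \expect{u^i_b(Y_{-i})}} \le 2\eps$ for every pure strategy $b$; averaging over $b \sim X_i$ gives $\abs{\expect{u^i_{X_i}(X_{-i})} - \expect{u^i_{X_i}(Y_{-i})}} \le 2\eps$ as well. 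Chaining the three estimates — from $\expect{u^i_a(Y_{-i})}$ move to $\expect{u^i_a(X_{-i})}$ (cost $2\eps$), then to $\expect{u^i_{X_i}(X_{-i})} + \delta$ (the best-response hypothesis), then to $\expect{u^i_{X_i}(Y_{-i})} + \delta + 2\eps$ (cost $2\eps$) — yields $\expect{u^i_a(Y_{-i})} \le \expect{u^i_{X_i}(Y_{-i})} + \delta + 2\eps$, which is exactly the claim that $X_i$ is a $(\delta+2\eps)$-best response to $Y_{-i}$. Note this step actually only uses $\totalvardist{X_{-i}}{Y_{-i}} \le \eps$ and not the $X_i/Y_i$ closeness.

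For the second assertion, I want to conclude that $(Y_1,\dots,Y_n)$ is a $(\delta+2\eps)$-approximate equilibrium, i.e. for each $i$, $Y_i$ is a $(\delta+2\eps)$-best response to $Y_{-i}$. Starting from the hypothesis that $X_i$ is a $\delta$-best response to $X_{-i}$ and $\totalvardist{X_{-i}}{Y_{-i}} \le \totalvardist{X_i}{Y_i} + \totalvardist{X_{-i}}{Y_{-i}} \le \eps$, the first assertion already gives that $X_i$ is a $(\delta+2\eps')$-best response to $Y_{-i}$ where $\eps' = \totalvardist{X_{-i}}{Y_{-i}}$; but I need to also swap $X_i$ for $Y_i$ on the left-hand side. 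For that I again use the workhorse: $\abs{\expect{u^i_{X_i}(Y_{-i})} - \expect{u^i_{Y_i}(Y_{-i})}} = \abs{\expectOver{b\sim X_i}{\expect{u^i_b(Y_{-i})}} - \expectOver{b\sim Y_i}{\expect{u^i_b(Y_{-i})}}} \le \normone{X_i - Y_i} = 2\totalvardist{X_i}{Y_i}$, since the quantities $\expect{u^i_b(Y_{-i})}$ lie in $[0,1]$. Combining, $\expect{u^i_{Y_i}(Y_{-i})} \ge \expect{u^i_{X_i}(Y_{-i})} - 2\totalvardist{X_i}{Y_i} \ge \expect{u^i_a(Y_{-i})} - \delta - 2\totalvardist{X_{-i}}{Y_{-i}} - 2\totalvardist{X_i}{Y_i} \ge \expect{u^i_a(Y_{-i})} - \delta - 2\eps$ for every $a$, using the hypothesis $\totalvardist{X_i}{Y_i} + \totalvardist{X_{-i}}{Y_{-i}} \le \eps$. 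This holds for all $i$, so $(Y_1,\dots,Y_n)$ is a $(\delta+2\eps)$-approximate equilibrium.

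There is no real obstacle here — the lemma is a soft perturbation argument, and the only thing to be careful about is bookkeeping the factor of $2$ in the relation $\normone{P-Q} = 2\totalvardist{P}{Q}$ and making sure the mixed-strategy payoff $\expect{u^i_{X_i}(\cdot)}$ is correctly treated as the $X_i$-average of the pure-strategy payoffs, so that closeness of the mixing distributions $X_i, Y_i$ and closeness of the opponent profiles $X_{-i}, Y_{-i}$ contribute additively. The mild subtlety worth a sentence in the writeup is that in the second part the ``$\eps$'' splits between the two sources of error exactly as the hypothesis $\totalvardist{X_i}{Y_i} + \totalvardist{X_{-i}}{Y_{-i}} \le \eps$ dictates, giving the clean bound $\delta + 2\eps$ rather than $\delta + 4\eps$.
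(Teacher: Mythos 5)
Your overall strategy is exactly the paper's (perturb the opponents' profile, then perturb the player's own strategy, and bound each perturbation by a total-variation estimate), but your ``workhorse'' inequality is lossy by a factor of $2$, and as a result the arithmetic in your chains does not deliver the stated constant. For a function $f$ with values in $[0,1]$, the sharp estimate is $\abs{\expect{f(P)}-\expect{f(Q)}} \le \totalvardist{P}{Q}$, not $2\,\totalvardist{P}{Q}$: split $P-Q$ into its positive and negative parts, each of total mass $\totalvardist{P}{Q}$, and use $0 \le f \le 1$ (equivalently, apply your H\"older bound to $f-\tfrac12$, whose sup-norm is $\tfrac12$). This is the inequality the paper uses. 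With your version each swap of $X_{-i}$ for $Y_{-i}$ costs $2\eps$, so your three-step chain actually reads
\[ \expect{u^i_{a}(Y_{-i})} \le \expect{u^i_{a}(X_{-i})} + 2\eps \le \expect{u^i_{X_i}(X_{-i})} + \delta + 2\eps \le \expect{u^i_{X_i}(Y_{-i})} + \delta + 4\eps, \]
i.e.\ it proves that $X_i$ is a $(\delta+4\eps)$-best response to $Y_{-i}$; you dropped the first $2\eps$ when writing the last step. The same slip propagates to the second assertion, which under your bound also comes out as $\delta+4\eps$. The constant is not cosmetic: Theorem~\ref{thm:main} invokes this lemma with $\delta = 3\eps/5$ and total variation distance $\eps/5$ to conclude an $\eps$-best response, which needs exactly the factor $2$.

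With the sharp inequality everything goes through as you intend: the opponents' swap costs $\totalvardist{X_{-i}}{Y_{-i}}$ per application, and the swap of $X_i$ for $Y_i$ costs $\totalvardist{X_i}{Y_i}$ (since $b \mapsto \expect{u^i_b(Y_{-i})}$ is $[0,1]$-valued on $[k]$), so the total error is at most $2\totalvardist{X_{-i}}{Y_{-i}} + \totalvardist{X_i}{Y_i} \le 2\eps$. Modulo this bookkeeping correction, your decomposition coincides with the paper's proof.
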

\begin{proof}
Since $u^i_a(x) \in [0,1]$ for all $a \in [k]$ and $x \in \Pi_{n-1}^k$, we have that
\[ \abs{\expect{u^i_a(X_{-i})} - \expect{u^i_a(Y_{-i})}} \le \totalvardist{X_{-i}}{Y_{-i}}, \quad \forall i \in [n], a \in [k]. \]
Therefore, if $\totalvardist{X_{-i}}{Y_{-i}} \le \eps$,
  and player $i$ cannot deviate and gain more than $\delta$ when other players play $X_{-i}$,
  then she cannot gain more than $(\delta+2\eps)$ when other players play $Y_{-i}$ instead of $X_{-i}$.
The second claim combines the inequality above with the fact that,
  if player $i$ plays $Y_i$ instead of $X_i$ and the mixed strategies of other players remain the same,
  her payoff changes by at most $\totalvardist{X_i}{Y_i}$.
Formally,
\[ \abs{\expect{u^i_{X_i}(Z_{-i})} - \expect{u^i_{Y_i}(Z_{-i})}} \le \totalvardist{X_i}{Y_i}, \quad \forall k\text{-CRV } X_i, Y_i, \forall (n-1,k)\text{-PMD } Z_{-i}. \qedhere \]
\end{proof}

The next lemma states that by rounding an (\eps/10)-approximate equilibrium,
  we can obtain an $(\eps/5)$-approximate equilibrium where all the probabilities are
  integer multiples of $\frac{\eps}{20 k n}$.

\begin{claim}
\label{clm:target-nash}
There is an $(\eps/5)$-approximate Nash equilibrium $(X_1, \ldots, X_n)$,
  such that for all $i \in [n]$ and $j \in [k]$,
  the probabilities $p_{i,j} = \Pr[X_i = e_j]$ are multiples of $\frac{\eps}{20 k n}$,
  and also $p_{i,j} \ge \frac{\eps}{10k}$.
\end{claim}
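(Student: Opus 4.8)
The plan is to start from the $(\eps/10)$-approximate equilibrium guaranteed by Lemma~\ref{lem:perturb} (applied with accuracy parameter $\eps/10$), which gives a mixed strategy profile $(\hat X_1, \ldots, \hat X_n)$ in which every player plays each strategy with probability at least $\frac{\eps/10}{k-1} \ge \frac{\eps}{10k}$. Write $\hat p_{i,j} = \Pr[\hat X_i = e_j]$. I would then round each $\hat p_{i,j}$ to the nearest integer multiple of $\frac{\eps}{20kn}$, breaking ties so that the rounded vector still sums to $1$ (e.g.\ round all but one coordinate and let the last coordinate absorb the accumulated discrepancy, which is at most $k \cdot \frac{\eps}{40kn} = \frac{\eps}{40n} < \frac{\eps}{20kn}$ once one more rounding step is applied, or more simply: round down each coordinate to a multiple of $\frac{\eps}{20kn}$ and distribute the leftover mass, a total of at most $k\cdot\frac{\eps}{20kn} = \frac{\eps}{20n}$, in chunks of $\frac{\eps}{20kn}$ among the coordinates). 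Call the resulting $k$-CRVs $X_i$ with probabilities $p_{i,j}$; these are multiples of $\frac{\eps}{20kn}$ by construction.

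The next step is to control how much the rounding perturbs each player's incentives. Since $\normone{\hat p_i - p_i} \le 2\cdot k \cdot \frac{\eps}{20kn} = \frac{\eps}{10n}$ (each coordinate moves by at most one grid step in the initial rounding, plus the redistribution; a crude bound of $\frac{\eps}{10n}$ suffices), we get $\totalvardist{X_i}{\hat X_i} \le \frac{\eps}{20n}$ for every $i$. By subadditivity of total variation distance under independent sums, $\totalvardist{X_{-i}}{\hat X_{-i}} \le \sum_{j \ne i} \totalvardist{X_j}{\hat X_j} \le n \cdot \frac{\eps}{20n} = \frac{\eps}{20}$, and likewise $\totalvardist{X_i}{\hat X_i} + \totalvardist{X_{-i}}{\hat X_{-i}} \le \frac{\eps}{20n} + \frac{\eps}{20} \le \frac{\eps}{10}$. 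Applying the second part of Lemma~\ref{lem:small-dtv-still-nash} with $\delta = \eps/10$ and the bound $\eps/10$ in place of its ``$\eps$'', we conclude that $(X_1, \ldots, X_n)$ is a $(\eps/10 + 2\cdot\eps/10) = (3\eps/10)$-approximate equilibrium, which is in particular an $(\eps/5)$-approximate equilibrium.

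Finally I would check the lower bound $p_{i,j} \ge \frac{\eps}{10k}$ survives the rounding: we start with $\hat p_{i,j} \ge \frac{\eps}{10(k-1)} > \frac{\eps}{10k} + \frac{\eps}{20kn}$ (using $\frac{1}{k-1} - \frac{1}{k} = \frac{1}{k(k-1)} \ge \frac{1}{10kn} \cdot 2$ for $n \gg k$, or simply noting the gap $\frac{\eps}{10k(k-1)}$ dwarfs one grid step $\frac{\eps}{20kn}$), so rounding down by at most one grid step and then only adding mass back keeps every coordinate at least $\frac{\eps}{10k}$. The main thing to be careful about — the only genuine obstacle — is the bookkeeping in the tie-breaking/redistribution step so that the rounded probabilities both remain a valid distribution and stay on the grid $\frac{\eps}{20kn}\Z$ simultaneously; everything else is the routine Lipschitz-continuity argument packaged in Lemma~\ref{lem:small-dtv-still-nash}. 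This completes the proof of Claim~\ref{clm:target-nash}.
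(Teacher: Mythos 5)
Your approach is the same as the paper's: start from the $(\eps/10)$-approximate equilibrium of Lemma~\ref{lem:perturb}, round the probabilities onto the grid $\frac{\eps}{20kn}\Z$ (the paper rounds the first $k-1$ coordinates down and lets the $k$-th coordinate absorb the slack, a special case of your redistribution scheme), and invoke Lemma~\ref{lem:small-dtv-still-nash}. Your check that the lower bound $p_{i,j}\ge\frac{\eps}{10k}$ survives the rounding is in fact more explicit than the paper's.

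However, the final step of your error accounting is wrong as written. You bound $\totalvardist{X_i}{\hat X_i}+\totalvardist{X_{-i}}{\hat X_{-i}}\le\frac{\eps}{20n}+\frac{\eps}{20}\le\frac{\eps}{10}$ and conclude via Lemma~\ref{lem:small-dtv-still-nash} that $(X_1,\ldots,X_n)$ is a $(3\eps/10)$-approximate equilibrium, ``which is in particular an $(\eps/5)$-approximate equilibrium.'' That implication is backwards: $3\eps/10>\eps/5$, so a $(3\eps/10)$-approximate equilibrium is a \emph{weaker} guarantee than an $(\eps/5)$-approximate one, and the claimed conclusion does not follow from what you proved. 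The fix is simply to keep the arithmetic tight, as the paper does: $X_{-i}$ is a sum over the $n-1$ players other than $i$, so $\totalvardist{X_{-i}}{\hat X_{-i}}\le\frac{(n-1)\eps}{20n}$, and hence $\totalvardist{X_i}{\hat X_i}+\totalvardist{X_{-i}}{\hat X_{-i}}\le\frac{\eps}{20n}+\frac{(n-1)\eps}{20n}=\frac{\eps}{20}$. Lemma~\ref{lem:small-dtv-still-nash} then yields $\eps/10+2\cdot\eps/20=\eps/5$ exactly, which is what the claim requires.
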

\begin{proof}
We start with an $(\eps/10)$-approximate Nash equilibrium $(Y_1, \ldots, Y_n)$ from Lemma \ref{lem:perturb}
  with $p_{i,j} \ge \frac{\eps}{10k}$, and then round the probabilities to integer multiples of $\frac{\eps}{10k n}$.
We construct $X_i$ from $Y_i$ as follows:
  for every $j < k$, we set $\Pr[X_i = e_j]$ to be $\Pr[Y_i = e_j]$ rounded down to a multiple of $\frac{\eps}{20 k n}$
  and we set $\Pr[X_i = e_k] = 1-\sum_{j < k} \Pr[X_i = e_j]$ so the probabilities sum to 1.
By triangle inequality of total variation distance, for every $i$ we have
  $\totalvardist{X_i}{Y_i} \le \frac{\eps}{20n}$ and $\totalvardist{X_{-i}}{Y_{-i}} \le \frac{\eps(n-1)}{20n}$.
An application of Lemma \ref{lem:small-dtv-still-nash} shows that
  $(X_1, \ldots, X_n)$ is an $(\eps/5)$-approximate equilibrium.
\end{proof}

We are now ready to prove Theorem \ref{thm:main}.
We need to show that Algorithm \ref{alg:main} always outputs an $\eps$-approximate Nash equilibrium,
  and bound the running time.
\begin{proof}[Proof of Theorem \ref{thm:main}]
We first show that the output $(X_1, \ldots, X_n)$ is an $\eps$-approximate equilibrium.
Recall that $\cals$ is the set of all $k$-CRVs whose probabilities are multiples of $\frac{\eps}{20kn}$,
  and $\cals_i \subseteq \cals$ is the set of approximate best-responses of player $i$.
When we put $X_i$ in $\cals_i$, we checked that $X_i$ is a $(3\eps/5)$-best response to $Y_{D_{-i}}$,
  note that $D(Y_{D_{-i}}) = D - D(X_i) = D(X_{-i})$,
  so by Lemma \ref{lem:pmd-data} $\totalvardist{X_{-i}}{Y_{D_{-i}}} \le \eps/5$ for all $i$.
By Lemma \ref{lem:small-dtv-still-nash}, $X_i$ is indeed an $\eps$-best response to $X_{-i}$ for all $i$.

Next we show the algorithm must always output something.
By Claim \ref{clm:target-nash} there exists an $(\eps/5)$-approximate equilibrium $X'_i$ with each $X'_i \in \cals$.
If the algorithm does not terminate successfully first, it eventually considers $D(X')$.
Because $X'_{-i}$ is an $(n-1,k)$-PMD,
  the algorithm can find some $Y_{D_{-i}}$ with $D(Y_{D_{-i}}) = D(X') - D(X'_i) = D(X'_{-i})$,
  and by Lemma \ref{lem:pmd-data} we have $\totalvardist{X'_{-i}}{Y_{D_{-i}}} \le \eps/5$ for all $i$.
Since $X'_i$ is an $(\eps/5)$-best response to $X'_{-i}$,
  Lemma \ref{lem:small-dtv-still-nash} yields that $X'_i$ is a (3\eps/5)-best response to $Y_{D_{-i}}$,
  so we would add each $X'_i$ to $\cals_i$.
Then our cover construction algorithm is guaranteed to generate a set of data that includes $D(X')$,
  and Algorithm \ref{alg:main} would produce an output.

Finally, we bound the running time of Algorithm \ref{alg:main}.
Let $N = \bigO{n^{k^{O(1/1-c)}}}$ denote the size of the $(\eps/5)$-cover for the high-variance PMDs.
The cover can be constructed in time $\bigO{n \cdot N\cdot |S|}$ as we try to add one $k$-CRV from $S$ in each step.
We iterate through the cover, and for each element in the cover,
  we need to find the subset $\cals_i \subseteq \cals$ of $(3\eps/5)$-best responses for player $i$,
  and then run the cover construction algorithm again using only the best responses $\{\cals_i\}_{i=1}^n$.
So the overall running time of the algorithm is $\bigO{n N |S|} \cdot \left(\poly(n^k)|S| + \bigO{n N |S|}\right) = n^{k^{O(1/1-c)}}$.
When both $c < 1$ and $k$ are constants, the running time is polynomial in $n$, as claimed in Theorem \ref{thm:main}.
\end{proof}

\subsection{Proof of Proposition \ref{prop:moments-variance-tradeoff}}
This subsection is devoted to the proof of Proposition \ref{prop:moments-variance-tradeoff}.
For two $(n,k)$-PMDs with variance at least $\sigma^2$ in each direction,
  Proposition \ref{prop:moments-variance-tradeoff} gives a quantitative bound on
  how close degree $\ell$ moments need to be (as a function of $\eps$, $\sigma$, $k$ and $\ell$, but independent of $n$),
  in order for the two PMDs to be $\eps$-close in total variation distance.

The proof of Proposition \ref{prop:moments-variance-tradeoff} exploits the sparsity of the continuous Fourier transforms
of our PMDs, as well as careful Taylor approximations of the logarithm of the Fourier transform.
The fact that our PMDs have large variance enables us to take fewer low-degree terms in the Taylor approximation.
For technical reasons, we split our PMD as the sum of $k$ independent component PMDs,
  $X=\sum_{i=1}^k X^i$,
  where all the $k$-CRVs in the component PMD $X^i$ is $i$-maximal.
Because the Fourier transform of $X$ is the product of the Fourier transform of $X^i$,
  we can just bound the pointwise difference between the logarithm of Fourier transform of each component PMD.
One technicality is that since we have no assumption on the variances of the component PMDs $X^i$,
  their Fourier transforms may not be sparse,
  so it is crucial that we bound this difference only on the effective support of the Fourier transform of the entire PMD.

We start by considering a set $S$ that includes the effective support of $X$ (and $Y$ when we show that the means are close):
\begin{lemma}[Essentially Corollary 5.3 of \cite{DiakonikolasKS16stoc}]
\label{lem:clt-support-ellipsoid-and-count}
Let $X$ be an $(n,k)$-PMD with mean $\mu$ and covariance matrix $\Sigma$, such that all the non-zero eigenvalues of $\Sigma$ is at least $\sigma^2$ where $\sigma \geq \poly(1/\eps)$.
Let $S$ be the set of points $x \in \Z^k$ where $(x-\mu)^T \bone=0$ and
\[ (x-\mu)^T (\Sigma+I)^{-1}(x-\mu) \leq (Ck\log(1/\eps)) \;, \]
for some sufficiently large constant $C.$ Then, $X\in S$ with probability at least $1-\eps/2$, and
\[ |S| = \sqrt{\det(\Sigma+I)} \cdot \bigO{\log(1/\eps)}^{k/2}. \]
\end{lemma}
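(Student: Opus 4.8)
This lemma is a minor restatement of Corollary~5.3 of \cite{DiakonikolasKS16stoc}, and the plan is to follow that argument: reduce to the Gaussian case via the multivariate CLT for PMDs, and then carry out two essentially independent computations --- a Gaussian tail bound for the concentration claim and an ellipsoid-volume/lattice-point count for the cardinality claim. First I would record the structural facts we use: $\Sigma$ has $\bone$ as a zero eigenvector (the coordinates of $X$ sum to $n$ identically) and, by hypothesis, its remaining eigenvalues are at least $\sigma^2 \ge \poly(1/\eps)$; hence $X$ is supported on a coset $\Lambda_\mu$ of the $(k-1)$-dimensional lattice $\Lambda = \setOfSuchThat{x \in \Z^k}{x^T\bone = 0}$, which sits inside the hyperplane $\bone^\perp$ with covolume $\sqrt{k}$ and fundamental cell of diameter $\bigO{\sqrt k}$. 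We may also assume $\eps$ is below a universal constant, since otherwise the statement carries essentially no content.

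The reduction step is to invoke the multivariate CLT of \cite{DaskalakisDKT15, DiakonikolasKS16stoc}: because the minimum non-zero eigenvalue of $\Sigma$ is at least $\poly(1/\eps)$, we get $\dtv(X,\tilde G) \le \eps/4$, where $\tilde G$ is obtained by drawing from the (degenerate) Gaussian $N(\mu,\Sigma)$, which is supported on $\mu + \bone^\perp$, and rounding to the nearest point of $\Lambda_\mu$. It then suffices to bound $\probaOf{\tilde G \notin S}$ by $\eps/4$ and to bound $|S|$. For the first, if $\tilde G = x$ is the rounding of a continuous point $z$ then $\norm{x - z} = \bigO{\sqrt k}$, and using $\Sigma + I \succeq \Sigma$ together with $\Sigma + I \succeq I$ on $\bone^\perp$ one obtains $(x-\mu)^T (\Sigma+I)^{-1} (x-\mu) \le 2 (z-\mu)^T\Sigma^{-1}(z-\mu) + \bigO{k}$. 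Since $(z-\mu)^T\Sigma^{-1}(z-\mu)$ is a $\chi^2$ variable with $k-1$ degrees of freedom, a standard tail bound shows that for $C$ a sufficiently large constant it exceeds $(C/4)\,k\log(1/\eps)$ with probability at most $\eps/4$, and off that event $x \in S$. This is precisely why $S$ is defined with $\Sigma + I$ rather than $\Sigma$: the $+I$ shift absorbs the discretization error between $z$ and $x$.

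For the cardinality, $S$ is (the $\Lambda_\mu$-points of) the $(k-1)$-dimensional ellipsoid $E$ whose semi-axis along the $j$-th non-zero eigendirection of $\Sigma$ has length $\sqrt{(\lambda_j + 1)\,C k \log(1/\eps)}$; since the eigenvalues of $\Sigma + I$ are $1$ together with the $\lambda_j + 1$, the $(k-1)$-volume of $E$ is $V_{k-1}\,(C k\log(1/\eps))^{(k-1)/2}\sqrt{\det(\Sigma+I)}$, where $V_{k-1}$ is the volume of the unit $(k-1)$-ball. By Stirling $V_{k-1}(Ck)^{(k-1)/2} = \bigO{1}^k$, which together with the assumption that $\log(1/\eps)$ is bounded below by a constant folds into the claimed $\bigO{\log(1/\eps)}^{k/2}$. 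Finally every semi-axis of $E$ has length at least $\sqrt{C k \log(1/\eps)} \gg \sqrt{k}$, dominating the diameter of a fundamental cell of $\Lambda$, so the standard estimate $\abs{E \cap \Lambda_\mu} \le 2^{k-1}\,\mathrm{Vol}_{k-1}(E)/\sqrt{k}$ applies and its $2^{k-1}$ is likewise absorbed, giving $|S| = \sqrt{\det(\Sigma+I)}\cdot\bigO{\log(1/\eps)}^{k/2}$. The only genuinely delicate points I expect are the discretization bookkeeping in the concentration step --- carrying the $+I$ shift and the $\bigO{k}$ slack through the Mahalanobis norm cleanly --- and checking that the dimension-dependent factors $\bigO{1}^k$ and $2^{k-1}$ really do fit inside $\bigO{\log(1/\eps)}^{k/2}$, which is where the (mild, and valid in our application) assumption that $\log(1/\eps)$ exceeds a constant is used.
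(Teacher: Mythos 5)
Your proposal is correct, but it takes a genuinely different route from the paper's on the concentration half. The paper does not go through the CLT at all: it invokes Lemma~5.2 of \cite{DiakonikolasKS16stoc} as a black box, which directly gives $(X-\mu)^T(\Sigma+I)^{-1}(X-\mu) = O(k\log(k/\eps))$ with probability $1-\eps$ for \emph{any} PMD, with no hypothesis on the eigenvalues of $\Sigma$. Your route --- reduce to a discretized Gaussian via the multivariate CLT, apply a $\chi^2_{k-1}$ tail bound, and absorb the $O(\sqrt{k})$ rounding displacement using the $+I$ shift --- is sound, and your observation about why the definition uses $\Sigma+I$ rather than $\Sigma$ is a nice piece of intuition, but it is heavier machinery and it genuinely consumes the hypothesis $\sigma \ge \poly(1/\eps)$: the CLT error $\tilde{O}(k^{4}/\sigma)$ must itself be driven below $\eps/4$, so you need the unspecified polynomial to dominate $k^4/\eps$, whereas the paper's cited concentration lemma costs nothing here and reserves the variance hypothesis for where it is really needed (the Fourier sparsity in Lemma~\ref{lem:clt-fourier-support-and-volume}). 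For the cardinality bound the two arguments are essentially the same ellipsoid-volume computation; you work intrinsically in the $(k-1)$-dimensional lattice $\Lambda_\mu$ and divide by its covolume $\sqrt{k}$, while the paper stays in $\R^k$, tiles $S$ by full-dimensional unit cubes (using that $\Sigma+I$ is nonsingular even though $\Sigma$ is not), and observes that the union of cubes sits inside a dilated copy of the same ellipsoid --- a slightly cleaner way to dodge the lattice bookkeeping, though both correctly land on $\sqrt{\det(\Sigma+I)}\cdot O(\log(1/\eps))^{k/2}$ after the $k$-dependent constants are absorbed.
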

\begin{proof}
Applying Lemma 5.2 of \cite{DiakonikolasKS16stoc},
  we have that $(X-\mu)^T (\Sigma+I)^{-1}(X-\mu) = {O(k\log(k/\eps))}$ with probability at least $1-\eps$.
The set of integer coordinate points in this ellipsoid is the set $S$.
Note that $|S|$ is equal to the volume of
  $S' = \setOfSuchThat{y \in \R^k}{\exists x \in S \text{ with } \norminf{y-x} \le 1/2}$,
  because $S'$ is the disjoint union of cubes of volume $1$, one for each integer point.
But $S'$ is again contained in an ellipsoid with
  $(y-\mu)^T (\Sigma+I)^{-1}(y-\mu) = {O(k\log(k/\eps))}$,
  so $|S| = \mathrm{Vol}(S') = \sqrt{\det(\Sigma + I)} \cdot \bigO{\log(1/\eps)}^{k/2}$.
\end{proof}

Next we show that $\wh{X}$, the Fourier transform of $X$, has a relatively small effective support.
We fold the effective support onto $[0, 1]^k$ to obtain the set $T$.
We use $[x]$ to denote the additive distance of $x \in \R$ to the closest integer,
  i.e., $[x] = \min_{x' \in \Z} \abs{x-x'}$.

\begin{lemma}
\label{lem:clt-fourier-support-and-volume}
Let $X$ be an $(n,k)$-PMD with mean $\mu$ and covariance matrix $\Sigma$,
  such that all the non-zero eigenvalues of $\Sigma$ are at least $\sigma^2$ where $\sigma \ge \poly(k \log(1/\eps))$.
Let $S$ be as above.
Let $\wh{X}$ be the Fourier transform of $X$.
Let $T \eqdef \setOfSuchThat{\xi \in [0,1]^k}{\exists \xi' \in \xi + \Z^k \text{ with } \xi'^T \Sigma \xi' \leq Ck \log(1/\eps)}$,
for some sufficiently large constant $C$. Then, we have that
\begin{itemize}
\item[(i)] For $\xi \in T$, and for all $1 \le i,j \le k$, $[\xi_i - \xi_j] \leq 2\sqrt{Ck \log(1/\eps)}/\sigma$.
\item[(ii)] $\mathrm{Vol}(T)|S| = \bigO{C \log(1/\eps)}^k$.
\item[(iii)] $\int_{[0,1]^k \setminus T} \abs{\wh{X}(\xi)} d\xi \leq \eps/(2|S|)$.
\end{itemize}
\end{lemma}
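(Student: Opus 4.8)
The plan is to prove the three items in order, using the Gaussian-like concentration of the Fourier transform of a PMD together with the fact that $\wh X$ is (approximately) periodic in the all-ones direction.

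For item (i), I would argue as follows. By definition of $T$, for $\xi \in T$ there is some $\xi' \in \xi + \Z^k$ with $\xi'^T \Sigma \xi' \le Ck\log(1/\eps)$. Since $\Sigma$ has all non-zero eigenvalues at least $\sigma^2$ and its unique zero eigenvector is $\bone/\sqrt k$, if we write $\xi' = \alpha \bone + w$ with $w \perp \bone$, then $\xi'^T \Sigma \xi' = w^T \Sigma w \ge \sigma^2 \normtwo{w}^2$, so $\normtwo{w} \le \sqrt{Ck\log(1/\eps)}/\sigma$. Now $\xi'_i - \xi'_j = w_i - w_j$, so $\abs{\xi'_i - \xi'_j} \le 2\norminf{w} \le 2\normtwo{w} \le 2\sqrt{Ck\log(1/\eps)}/\sigma$. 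Since $\xi'_i - \xi_i \in \Z$ and $\xi'_j - \xi_j \in \Z$, the quantity $\xi'_i - \xi'_j$ differs from $\xi_i - \xi_j$ by an integer, hence $[\xi_i - \xi_j] \le \abs{\xi'_i - \xi'_j} \le 2\sqrt{Ck\log(1/\eps)}/\sigma$, giving (i).

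For item (ii), I would relate $\mathrm{Vol}(T)$ to a determinant. The set $T$ is, up to folding by $\Z^k$, the image in $[0,1]^k$ of the ellipsoid $E = \setOfSuchThat{\xi' \in \R^k}{\xi'^T \Sigma \xi' \le Ck\log(1/\eps)}$ — but $E$ is degenerate (infinite in the $\bone$ direction), so one restricts to the quotient by the line $\R\bone$, or equivalently intersects a fundamental domain. Concretely, folding can only decrease volume, so $\mathrm{Vol}(T) \le \mathrm{Vol}\big(E \cap ([0,1]\bone + \bone^\perp)\big)$, and this cross-sectional volume is $\bigO{\log(1/\eps)}^{(k-1)/2} / \sqrt{\lambda_2\cdots\lambda_k}$ where $\lambda_2,\dots,\lambda_k$ are the non-zero eigenvalues of $\Sigma$ (with an $\bigO 1$ factor from the $\bone$-direction having unit length in the fundamental domain). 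Meanwhile $|S| = \sqrt{\det(\Sigma+I)}\cdot\bigO{\log(1/\eps)}^{k/2}$ from Lemma~\ref{lem:clt-support-ellipsoid-and-count}, and $\det(\Sigma+I) = \prod_{i\ge 2}(\lambda_i+1)$ (the zero eigenvalue contributes a factor $1$). Multiplying, the eigenvalue products $\sqrt{\lambda_2\cdots\lambda_k}$ and $\sqrt{\prod_{i\ge 2}(\lambda_i+1)}$ nearly cancel — their ratio is $\prod_{i\ge2}\sqrt{1+1/\lambda_i} \le \prod_{i\ge2}\sqrt 2 = 2^{(k-1)/2} = \bigO 1^k$ — leaving $\mathrm{Vol}(T)|S| = \bigO{\log(1/\eps)}^{(k-1)/2}\cdot\bigO{\log(1/\eps)}^{k/2}\cdot\bigO 1^k = \bigO{C\log(1/\eps)}^k$, which is (ii).

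For item (iii), I would use the standard bound $\abs{\wh X(\xi)} = \prod_{i=1}^n \abs{\wh{X_i}(\xi)} \le \exp\!\big(-\Omega(\xi'^T\Sigma\xi')\big)$ valid for $\xi' \in \xi+\Z^k$ the representative minimizing $\xi'^T\Sigma\xi'$ (this is exactly the Fourier decay estimate underlying Lemma~5.x / Corollary~5.3 of \cite{DiakonikolasKS16stoc}, coming from $\abs{\wh{X_i}(\xi)}^2 \le \exp(-c\,\xi'^T\Sigma_i\xi')$ summed over $i$). For $\xi \notin T$ this exponent exceeds $\Omega(Ck\log(1/\eps))$, so $\abs{\wh X(\xi)} \le \eps^{\Omega(Ck)}$ pointwise; integrating over the at-most-unit-volume region $[0,1]^k\setminus T$ and using $|S| \le \mathrm{poly}(n) \cdot \bigO{\log(1/\eps)}^{k/2}$ is dominated, for $C$ large enough (and recalling $\sigma \ge \poly(k\log(1/\eps))$ and $n = k^{\Omega(k)}$ so that $\log n = \bigO{\log(1/\eps)}$ up to the exponents we track), by $\eps/(2|S|)$. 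More carefully one can integrate the Gaussian tail $\int_{\xi'^T\Sigma\xi' > Ck\log(1/\eps)} \exp(-\Omega(\xi'^T\Sigma\xi'))\,d\xi'$ in the $(k-1)$-dimensional non-degenerate subspace and pick up the same eigenvalue-determinant factor as in (ii), which is absorbed into $1/|S|$; this yields (iii).

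The main obstacle I anticipate is item (ii): one must handle the degeneracy of $\Sigma$ carefully — the ellipsoid defining $T$ is a cylinder, and the ``volume'' is really a volume in the $(k-1)$-dimensional quotient — and make sure the fundamental-domain / folding argument produces exactly the right power of $\log(1/\eps)$ and that the $\bigO 1^k$ factors from $\prod(1+1/\lambda_i)$ are genuinely bounded (this uses $\lambda_i \ge \sigma^2 \ge 1$). The pointwise Fourier bound in (iii) is essentially quoted from \cite{DiakonikolasKS16stoc}, and (i) is elementary linear algebra, so the determinant bookkeeping in (ii) is where the real work lies.
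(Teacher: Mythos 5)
Your arguments for (i) and (ii) are correct and essentially identical to the paper's: (i) is the same projection-onto-$\bone^\perp$ computation, and (ii) is the same fundamental-slab/folding argument, with the paper doing the determinant bookkeeping via $\xi'^T(\Sigma+I)\xi' \le 3Ck\log(1/\eps)$ and a volume claim from \cite{DiakonikolasKS16stoc} rather than via your explicit $\prod_i\sqrt{1+1/\lambda_i} \le 2^{(k-1)/2}$ cancellation; both routes work.

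Part (iii), however, has a genuine gap. The pointwise bound you invoke, $\abs{\wh{X}(\xi)} \le \exp\bigl(-\Omega(\xi'^T\Sigma\xi')\bigr)$, is not the bound that is actually available. The Fourier decay estimate for PMDs (Lemma 3.10 of \cite{DiakonikolasKS16stoc}) gives $\abs{\wh{X}(\xi')} \le \exp\bigl(-\Omega(\delta^2\,\xi'^T\Sigma\xi')\bigr)$ \emph{only when all coordinates of the representative $\xi'$ lie in an interval of length $1-\delta$}; the true decay is controlled by the wrapped distances $[\xi_i-\xi_j]$, not by the raw quadratic form, and these can disagree badly when the coordinates of $\xi'$ spread over nearly a full period. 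For a generic $\xi\notin T$ the pigeonhole principle only guarantees a representative with coordinates in an interval of length $k/(k+1)$, i.e.\ $\delta = 1/(k+1)$, so just outside $T$ you only get $\abs{\wh{X}(\xi)} \le \exp\bigl(-\Omega(C\log(1/\eps)/k)\bigr) = \eps^{\Omega(C/k)}$ rather than $\eps^{\Omega(Ck)}$. Beating $\eps/(2|S|)$ with $|S|$ as large as $n^{k/2}\cdot\bigO{\log(1/\eps)}^{k/2}$ would then force $C$ to grow like $k^2$, which is not allowed. This is exactly why the paper's appendix decomposes $[0,1]^k\setminus T$ into dyadic shells $T_m$ and runs a two-regime argument: for $m \le 3\log_2 k$ it applies part (i) with $2^{m+1}C$ in place of $C$ and uses the hypothesis $\sigma \ge \poly(k\log(1/\eps))$ to show the coordinates of $\xi'$ lie in an interval of length $1/2$, upgrading to $\delta = 1/2$ and the strong bound $\exp(-\Omega(C2^mk\log(1/\eps)))$, which is needed to beat the shell volume $\bigO{2^{m+1}C\log(1/\eps)}^k/|S|$; for $m > 3\log_2 k$ the weak bound suffices since $2^m/k \gg mk$ there. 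Your ``more careful Gaussian tail integration'' remark does not repair this, as it still presupposes the unwrapped exponent. You need to add the shell decomposition and the interval-length analysis (which is also where the hypothesis $\sigma \ge \poly(k\log(1/\eps))$ is actually used).
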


Lemma~\ref{lem:clt-fourier-support-and-volume} is a technical generalization of Lemma 5.5 of \cite{DiakonikolasKS16stoc}.
Its proof is deferred to Appendix~\ref{app:one}.
This lemma establishes that the contribution to the Fourier transform $\wh{X}$
coming from points outside of $T$ is negligibly small.
We then use the sparsity of the Fourier transform to show that, if two PMDs have Fourier transforms
that are pointwise sufficiently close within the effective support $T$,
then the two PMDs are close in total variation distance.
\begin{lemma}
\label{lem:clt-fourier-close-dtv-close}
Let $X$, $Y$, $S$, $T$ be as above.
If $\abs{\wh{X}(\xi) -\wh{Y}(\xi)} \leq \eps(C'\log(1/\eps))^{-k}$ for all $\xi \in T$
  and a sufficiently large constant $C'$, then $\dtv(X,Y) \leq \eps.$
\end{lemma}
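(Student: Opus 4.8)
The plan is to work on the Fourier side: bound $|X(x)-Y(x)|$ pointwise by the $L^1$ mass of $\wh X-\wh Y$ via inversion, then exploit that this mass is concentrated on the small set $T$, and that only the small set $S$ of spatial points matters. Since $X$ and $Y$ are probability mass functions on $\Z^k$, Fourier inversion gives, for every $x\in\Z^k$,
\[
X(x)-Y(x) \;=\; \int_{[0,1]^k} e(-\xi\cdot x)\,\bigl(\wh X(\xi)-\wh Y(\xi)\bigr)\,d\xi,
\]
so $|X(x)-Y(x)| \le I := \int_{[0,1]^k}\bigl|\wh X(\xi)-\wh Y(\xi)\bigr|\,d\xi$ uniformly in $x$. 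Splitting the $\ell^1$ distance at $S$,
\[
2\,\dtv(X,Y) \;=\; \normone{X-Y} \;\le\; |S|\cdot I \;+\; \Pr[X\notin S] \;+\; \Pr[Y\notin S],
\]
since $\sum_{x\notin S}|X(x)-Y(x)|\le\sum_{x\notin S}X(x)+\sum_{x\notin S}Y(x)$. Applying Lemma~\ref{lem:clt-support-ellipsoid-and-count} to both $X$ and $Y$ (here $S$ is the common effective support set up just before the lemma), and choosing the constant in the definition of $S$ appropriately, the last two terms are each at most $\eps/5$.

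\textbf{Bounding $|S|\cdot I$.} I would split $I=\int_T+\int_{[0,1]^k\setminus T}$. Outside $T$, Lemma~\ref{lem:clt-fourier-support-and-volume}(iii) for $X$ and for $Y$ gives $\int_{[0,1]^k\setminus T}\bigl(|\wh X(\xi)|+|\wh Y(\xi)|\bigr)\,d\xi \le \eps/(5|S|)$, so $|S|\int_{[0,1]^k\setminus T}|\wh X-\wh Y| \le \eps/5$. On $T$, the hypothesis gives $|\wh X(\xi)-\wh Y(\xi)| \le \eps\,(C'\log(1/\eps))^{-k}$, hence
\[
|S|\int_T|\wh X-\wh Y|\,d\xi \;\le\; |S|\cdot\mathrm{Vol}(T)\cdot \eps\,(C'\log(1/\eps))^{-k} \;=\; \bigO{C\log(1/\eps)}^k\cdot \eps\,(C'\log(1/\eps))^{-k},
\]
using $\mathrm{Vol}(T)\,|S|=\bigO{C\log(1/\eps)}^k$ from Lemma~\ref{lem:clt-fourier-support-and-volume}(ii). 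The $\log(1/\eps)^k$ factors cancel, leaving $\eps\cdot(c_0C/C')^k$, which is at most $\eps/5$ once $C'$ is a large enough universal multiple of $C$. Summing the three contributions gives $2\,\dtv(X,Y)\le\eps$, so $\dtv(X,Y)\le\eps$ as claimed; the extra slack shows $C'$ could in fact be somewhat smaller.

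\textbf{Where the difficulty lies.} The substantive work is all in the preceding lemmas, in particular the sparsity estimate $\mathrm{Vol}(T)\,|S|=\bigO{\log(1/\eps)}^k$, which both makes the $T$-contribution controllable and pins down the $(C'\log(1/\eps))^{-k}$ slack required of the hypothesis. The only subtlety specific to this lemma is that the tails outside $T$ and outside $S$ must be controlled for $Y$ as well as for $X$; this is why $S$ and $T$ are taken to be the \emph{common} effective supports, which is consistent with a hypothesis that only bounds $\wh X-\wh Y$ on the fixed set $T$. One should also resist the temptation to estimate $\sum_{x\in S}|X(x)-Y(x)|$ through Cauchy--Schwarz and Parseval: that route loses a square root and would only yield $\dtv(X,Y)=\bigO{\sqrt\eps}$, whereas the crude uniform bound $|X(x)-Y(x)|\le I$ keeps the dependence linear in $\eps$.
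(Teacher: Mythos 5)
Your proof is correct and follows essentially the same route as the paper's: Fourier inversion to get a uniform pointwise bound, splitting the integral over $T$ versus its complement using parts (ii) and (iii) of Lemma~\ref{lem:clt-fourier-support-and-volume}, and handling the spatial tail via the effective support $S$ from Lemma~\ref{lem:clt-support-ellipsoid-and-count}. The only differences are in the bookkeeping of constants ($\eps/5$ splits versus the paper's $\eps/2$), which is immaterial.
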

\begin{proof}
For any $x \in \Z^k$, taking the inverse Fourier transform, we have that $\Pr[X=x]=\int_{\xi \in [0,1]^k} e(-\xi \cdot x) \wh{X}(\xi) d\xi$ and similarly $\Pr[Y=x] = \int_{\xi \in [0,1]^k} e(-\xi \cdot x) \wh{Y}(\xi) d\xi$. Thus,
\begin{align*}
\abs{\Pr[X=x] - \Pr[Y=x]} & = \abs{ \int_{\xi \in [0,1]^k} e(-\xi \cdot x) \left(\wh{X}(\xi) - \wh{Y}(\xi)\right) d\xi } \\
& \leq  \int_{\xi \in [0,1]^k} \abs{\wh{X}(\xi) - \wh{Y}(\xi)} d\xi \\
& = \int_{\xi \in T} \abs{\wh{X}(\xi) - \wh{Y}(\xi)} d\xi + \int_{\xi \in [0,1]^k \setminus T} \abs{\wh{X}(\xi) - \wh{Y}(\xi)} d\xi \\
& \leq \mathrm{Vol}(T) \cdot \eps(C'\log(1/\eps))^{-k} + \frac{\eps}{2|S|} \\
& \leq \frac{\bigO{C \log(1/\eps)}^k}{|S|} \cdot \eps(C'\log(1/\eps))^{-k} + \frac{\eps}{2|S|} \\
& \leq \frac{\eps}{|S|}.
\end{align*}
Since $X$ and $Y$ are outside of $S$ each
with probability less than $\eps/2$,
we have that $\dtv(X,Y) \leq \eps/2 + \frac{1}{2}\sum_{x \in S} \abs{\Pr[X=x] - \Pr[Y=x]} \leq \eps$.
\end{proof}

We now have all the ingredients to prove Proposition \ref{prop:moments-variance-tradeoff}.
For two PMDs $X$ and $Y$ that are close in their low-degree moments,
  we show that their Fourier transforms $\wh{X}$ and $\wh{Y}$ are pointwise close on $T$,
  and then by Lemma \ref{lem:clt-fourier-close-dtv-close}, $X$ and $Y$ are close in total variation distance.

\begin{proof}[Proof of Proposition \ref{prop:moments-variance-tradeoff}]
Let $X$, $Y$, $S$, $T$ be as above.
Given Lemma \ref{lem:clt-fourier-close-dtv-close},
  we only need to show that $\forall \xi \in T$, $\abs{\wh{X}(\xi) -\wh{Y}(\xi)} \leq \eps(C'\log(1/\eps))^{-k}$.

Fix $\xi \in T$.
We first examine, without loss of generality, the Fourier transform $\wh{X^k}$ of the $k$-maximal component PMD.
Let $A_k \subseteq [n]$ denote the set of $k$-maximal CRVs.
\begin{align}
  \wh{X^k}(\xi)
 &= \prod_{i \in A_k} \sum_{j=1}^k e(\xi_j) p_{i,j} \nonumber \\
 &= e(|A_k| \xi_k) \prod_{i \in A_k} \left(1 - \sum_{j=1}^{k-1} \left(1-e(\xi_j-\xi_k))p_{i,j}\right) \right) \nonumber \\
 &= e(|A_k| \xi_k) \exp\left(\sum_{i \in A_k} \log\left(1 - \sum_{j=1}^{k-1} \left(1-e(\xi_j-\xi_k))p_{i,j}\right) \right) \right) \nonumber \\
 &= e(|A_k| \xi_k) \exp\left(-\sum_{i \in A_k} \sum_{\ell=1}^{\infty} \frac{1}{\ell} \left(\sum_{j=1}^{k-1} \left(1-e(\xi_j-\xi_k))p_{i,j}\right) \right) \right) \nonumber \\
 &= e(|A_k| \xi_k) \exp\left(-\sum_{m \in \Z^{k-1}_{+}} \binom{\normone{m}}{m} \frac{1}{\normone{m}} M_m (X^k) \prod_{j=1}^{k-1} (1-e(\xi_j-\xi_k))^{m_j} \right) \label{eqn:x-fourier-taylor}
\end{align}
For notational convenience, we use $\Psi^k_X$ to denote the expression inside $\exp(\cdot)$
  in Equation (\ref{eqn:x-fourier-taylor}).
A similar formula holds for the Fourier transform $\wh{X^i}$ and $\wh{Y^i}$ of other $i$-maximal PMDs,
  and we use $\Psi^i_X$ and $\Psi^i_Y$ to denote the corresponding expressions inside $\exp(\cdot)$.
Since the Fourier transform of a PMD is the product of the Fourier transform of its component PMDs, we have
\begin{align*}
  \abs{\wh{X}(\xi) -\wh{Y}(\xi)}
 &= \abs{\prod_{t=1}^k \wh{X^t}(\xi) - \prod_{t=1}^k \wh{Y^t}(\xi)} \\
 &=  \abs{ e\left( \sum_{t=1}^k|A_t| \xi_t \right) \prod_{t=1}^k \left( \exp\left(\Psi^t_X \right) - \exp\left(\Psi^t_Y \right) \right)} \\
 &\le 2 \pi \sum_{t=1}^k \abs{\Psi^t_X - \Psi^t_Y},
\end{align*}
where the last inequality is due to $e(\sum_{t=1}^k |A_t| \xi_t) = 1$,
  and $\abs{\exp(a) - \exp(b)} \le \abs{a-b}$
  if the real parts of $a$ and $b$ satisfy $\text{Re}(a), \text{Re}(b) \le 0$.

So to prove that $\wh{X}(\xi)$ and $\wh{Y}(\xi)$ are pointwise close for all $\xi \in T$,
  it is enough to bound from above $2\pi \sum_{t=1}^k \abs{\Psi^t_X - \Psi^t_Y}$.
We use the fact that $\abs{1-e(\xi_j-\xi_k)} = \bigO{[\xi_j - \xi_k]}$,
  and recall that $[\xi_i - \xi_j] \leq 2\sqrt{Ck \log(1/\eps)}/\sigma$ by Lemma \ref{lem:clt-fourier-support-and-volume}.
We also use the multinomial identity $\sum_{m \in \Z_{+}^{k-1}, \normone{m}=\ell}\binom{\ell}{m} = (k-1)^\ell$.
When $C'$ is a sufficiently large constant, we have
\begin{align*}
\abs{\wh{X}(\xi) -\wh{Y}(\xi)}
&\le 2\pi \sum_{t=1}^k \abs{\Psi^t_X - \Psi^t_Y} \\
&= 2\pi \sum_{t=1}^k \sum_{m \in \Z^{k-1}_{+}} \binom{\normone{m}}{m} \frac{1}{\normone{m}} \abs{M_m (X^t) - M_m (Y^t)} \prod_{j=1}^{k-1} (1-e(\xi_j-\xi_k))^{m_j}  \\
&\le 2\pi \sum_{\ell=1}^{\infty}\frac{(k-1)^\ell}{\ell} \left(\bigO{\frac{\sqrt{k\log(1/\eps)}}{\sigma}}\right)^\ell \sum_{t=1}^k \max_{m \in \Z^{k-1}_{+},\normone{m}=\ell} \abs{M_m(X^t) - M_m(Y^t)} \\
&\le \sum_{\ell=1}^{\infty} k^\ell \left(\frac{C'\sqrt{k\log(1/\eps)}}{2\sigma}\right)^\ell k \cdot \frac{\eps \sigma^\ell}{C'^{k+\ell} \cdot k^{3\ell/2+1} \cdot \log^{k+\ell/2}(1/\eps)} \\
&= \sum_{\ell=1}^\infty 2^{-\ell}  \eps(C'\log(1/\eps))^{-k} \\
&= \eps(C'\log(1/\eps))^{-k}. \qedhere
\end{align*}

\end{proof}

\section{Reductions: Proof of Theorem~\ref{thm:fptas}}
In this section, we show that even a slight improvement of our upper bound
would imply an FPTAS for computing (well-supported) Nash equilibria in anonymous games (Theorem \ref{thm:fptas}).
It is a plausible conjecture that assuming ETH for PPAD, there is no such FPTAS,
in which case our upper bound (Theorem \ref{thm:main}) is essentially tight.

Theorem \ref{thm:fptas} follows directly from the following two lemmas.
Lemma \ref{lem:ane2wsne} converts an $\frac{\eps^2}{4n}$-approximate Nash equilibrium into
  an $\eps$-well-supported Nash equilibrium\footnote{A mixed strategy profile $\vec{s}$ is a \emph{well-supported} Nash equilibrium iff $\forall i \in [n]$, $\forall a, a' \in [k]$, we have $\expectOver{x \sim \vec{s}_{-i}}{u^i_{a}(x)} > \expectOver{x \sim \vec{s}_{-i}}{u^i_{a'}(x)} + \eps \Longrightarrow s_i(a') = 0$, i.e., players can only put non-zero probability on $\eps$-best-response strategies.},
  by reallocating each player's probabilities on strategies with low expected payoffs to the best-response strategy (first observed in \cite{DaskalakisGP09}).
Lemma \ref{lem:adddummy} then uses a padding argument to show that,
  for \emph{$\eps$-well-supported} Nash equilibrium,
  the question of whether there is a polynomial-time algorithm for $\eps = n^{-c}$
  is equivalent for all constants $c > 0$.

\begin{lemma}
\label{lem:ane2wsne}
For any $n$-player game whose payoffs are normalized to be between $[0, 1]$,
if we have an oracle for computing players' payoffs,
we can efficiently convert an $\frac{\eps^2}{4n}$-approximate equilibrium into
an $\eps$-well-supported equilibrium.
\end{lemma}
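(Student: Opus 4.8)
The plan is to follow the standard reduction that converts an approximate Nash equilibrium into a well-supported one by locally reallocating probability mass. Given an $n$-player game with payoffs in $[0,1]$, suppose $\vec{s}$ is an $\frac{\eps^2}{4n}$-approximate Nash equilibrium. For each player $i$, let $u^{\star}_i = \max_{a \in [k]} \expectOver{x \sim \vec{s}_{-i}}{u^i_a(x)}$ be the best-response payoff against the others' strategies, and call a pure strategy $a$ \emph{$\eps$-bad} for player $i$ if $\expectOver{x \sim \vec{s}_{-i}}{u^i_a(x)} < u^{\star}_i - \eps$. The first step is to bound the total probability that $\vec{s}_i$ places on $\eps$-bad strategies: since $\vec{s}$ is an $\frac{\eps^2}{4n}$-approximate equilibrium, player $i$'s realized expected payoff $\expectOver{x \sim \vec{s}_{-i},\, a \sim \vec{s}_i}{u^i_a(x)}$ is within $\frac{\eps^2}{4n}$ of $u^{\star}_i$; writing this payoff as a convex combination over $a$ and comparing with $u^{\star}_i$, each unit of probability on an $\eps$-bad strategy costs more than $\eps$ in payoff, so the mass on $\eps$-bad strategies is at most $\frac{\eps^2/(4n)}{\eps} = \frac{\eps}{4n}$.

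Next I would define the modified profile $\vec{s}'$: for each player $i$, move all probability mass that $\vec{s}_i$ assigns to $\eps$-bad strategies onto a fixed best-response strategy $a^{\star}_i$, and leave all other probabilities unchanged. By the previous step, $\totalvardist{\vec{s}'_i}{\vec{s}_i} \le \frac{\eps}{4n}$ for every $i$, and hence (summing over the $n-1$ other players and using that total variation distance of a sum of independent CRVs is at most the sum of the individual distances) $\totalvardist{\vec{s}'_{-i}}{\vec{s}_{-i}} \le \frac{\eps(n-1)}{4n} \le \frac{\eps}{4}$ for every $i$. Since payoffs lie in $[0,1]$, each player's expected payoff for any fixed pure strategy changes by at most $\totalvardist{\vec{s}'_{-i}}{\vec{s}_{-i}} \le \frac{\eps}{4}$ when the opponents switch from $\vec{s}_{-i}$ to $\vec{s}'_{-i}$ (this is exactly the estimate used in Lemma~\ref{lem:small-dtv-still-nash}).

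It remains to verify that $\vec{s}'$ is an $\eps$-well-supported equilibrium. Fix player $i$ and a strategy $a'$ with $\vec{s}'_i(a') > 0$. By construction $a'$ is either a strategy on which $\vec{s}_i$ already put positive mass and which is \emph{not} $\eps$-bad, or it is $a^{\star}_i$; in either case $\expectOver{x \sim \vec{s}_{-i}}{u^i_{a'}(x)} \ge u^{\star}_i - \eps$. Against $\vec{s}'_{-i}$ the payoff of $a'$ drops by at most $\eps/4$ and the payoff of any strategy $a''$ increases by at most $\eps/4$, so $\expectOver{x \sim \vec{s}'_{-i}}{u^i_{a''}(x)} - \expectOver{x \sim \vec{s}'_{-i}}{u^i_{a'}(x)} \le (u^{\star}_i + \eps/4) - (u^{\star}_i - \eps - \eps/4) = \eps + \eps/2 < 2\eps$; rescaling constants in the definition of ``$\eps$-bad'' (e.g.\ using threshold $\eps/2$ throughout, which only costs a constant factor in the final $\frac{\eps^2}{4n}$ bound) gives the clean statement that no strategy outside the $\eps$-best-response set receives positive probability. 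The conversion is clearly efficient given a payoff oracle: for each of the $n$ players we evaluate $k$ expected payoffs, identify the bad strategies, and redistribute. I do not expect any real obstacle here — the only point requiring care is the bookkeeping on which threshold ($\eps$ versus $\eps/2$) is used where, so that the ``loss'' of $\eps/4$ from the two sources of perturbation plus the initial $\eps$-slack still fits under the target $\eps$; this is a routine constant-chasing argument.
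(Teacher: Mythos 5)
Your proof is correct and follows essentially the same route as the paper's: classify strategies as ``good'' or ``bad'' by a payoff threshold, bound the total mass on bad strategies via the $\frac{\eps^2}{4n}$-approximate-equilibrium condition, and shift that mass onto a best response, using that the payoff perturbation is controlled by the total probability moved. The only difference is your choice of threshold ($\eps$ rather than the paper's $\eps/2$), and the constant-factor slack you flag at the end is present, in essentially the same form, in the paper's own argument.
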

\begin{proof}
Take an $\frac{\eps^2}{4n}$-approximate equilibrium of the game.
We call a strategy ``good'' for a player if the strategy is an $\frac{\eps}{2}$-best response for the player,
  and we call it ``bad'' otherwise.
A player can put at most probability $\frac{\eps}{2n}$ on the ``bad'' strategies without violating
  the $\frac{\eps^2}{4n}$-approximate equilibrium condition.
We move all the probabilities on ``bad'' strategies for all players to (any one of) their best responses simultaneously.
After moving the probabilities, every player assigns non-zero probabilities only to the ``good'' strategies.
Since the total probability we moved is at most $\frac{\eps}{2}$ and the payoffs are in $[0, 1]$,
the previously ``good'' strategies ($\frac{\eps}{2}$-best responses) are now $\eps$-best responses.
\end{proof}

\begin{lemma}
\label{lem:adddummy}
For $n$-player $k$-strategy anonymous games with $k = O(1)$,
if an $\frac{1}{n^\gamma}$-well-supported equilibrium can be computed in time $O(n^d)$ for constants $\gamma, d > 0$,
then there is an FPTAS for computing approximate-well-supported Nash equilibria in anonymous games.
\end{lemma}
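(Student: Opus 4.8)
The plan is to use a padding argument that artificially inflates the number of players so that the modest polynomial-time guarantee at accuracy $1/n^\gamma$ becomes, on the padded instance, an accuracy that is polynomially small in the \emph{original} parameters. Concretely, suppose we are given an $n$-player $k$-strategy anonymous game $G$ and a target accuracy $\eps > 0$; we want an $\eps$-well-supported equilibrium in time $\poly(n, 1/\eps)$. First I would append $N - n$ \textbf{dummy players} to $G$, for a suitable $N = \poly(n, 1/\eps)$ to be chosen, obtaining an $N$-player $k$-strategy anonymous game $G'$. A dummy player is one whose own payoff function is identically constant (say $0$) regardless of what anyone plays, and, crucially, whose presence does not affect the payoffs of the original $n$ players --- this is where anonymity helps: we let each dummy player always play a fixed strategy, say strategy $1$, and we redefine the original players' utilities $u^i_a$ so that they first subtract off the $N-n$ dummy contributions to the count of strategy~$1$ before evaluating the original $u^i_a$. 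Formally, for an outcome $x = (x_1,\dots,x_k) \in \Pi^k_{N-1}$, the padded utility is $\tilde u^i_a(x) := u^i_a(x_1 - (N-n), x_2, \dots, x_k)$, which is well-defined whenever $x_1 \ge N-n$; on the (measure-zero under any profile where dummies play strategy~$1$) remaining outcomes we set it arbitrarily in $[0,1]$. Since this construction only rewrites payoff tables and adds players, $G'$ is computable from $G$ in time $\poly(N^k) = \poly(n, 1/\eps)$ for $k = O(1)$.

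Next I would run the hypothesized algorithm on $G'$ to obtain, in time $O(N^d) = \poly(n, 1/\eps)$, a $(1/N^\gamma)$-well-supported equilibrium $\vec{s}'$ of $G'$. The key observation is that in any well-supported equilibrium of $G'$, every dummy player is trivially in equilibrium (all its strategies have equal payoff $0$, hence all are best responses, so no constraint is imposed), and for each \emph{original} player $i$ the well-supported condition in $G'$ is, by construction of $\tilde u^i_a$, \emph{identical} to the well-supported condition in $G$ evaluated against the marginal profile of the other original players --- provided the dummies are playing strategy~$1$. The subtlety is that the algorithm's output need not have the dummies playing strategy~$1$; but since a dummy is indifferent among all strategies, we may simply \textbf{overwrite} each dummy's mixed strategy by ``play strategy~$1$ deterministically'' without violating any well-supported-equilibrium constraint (dummies stay in equilibrium; original players' constraints only improve to exactly matching $G$'s, since we've now made the shift $x_1 - (N-n)$ correct with probability~$1$). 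Restricting the resulting profile to the original $n$ players then yields a $(1/N^\gamma)$-well-supported equilibrium of $G$.

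Finally I would choose $N$ so that $1/N^\gamma \le \eps$, i.e.\ $N = \lceil (1/\eps)^{1/\gamma}\rceil + n$, which is $\poly(n, 1/\eps)$ since $\gamma$ is a fixed constant; the whole procedure then runs in time $O(N^d) + \poly(N^k) = \poly(n, 1/\eps)$ and outputs an $\eps$-well-supported equilibrium of $G$, giving the claimed FPTAS. I expect the main obstacle to be the bookkeeping in the payoff-table redefinition --- making sure the dummy shift is handled cleanly on all outcomes in $\Pi^k_{N-1}$ (including the ``impossible'' ones with $x_1 < N-n$), and verifying that overwriting the dummies' strategies genuinely preserves the well-supported property rather than merely the approximate-Nash property --- but this is routine once anonymity is invoked, since dummies contribute only to a count and are globally indifferent. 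One should also double-check the trivial edge cases (e.g.\ $\eps \ge 1$, or $N \le n$) and note that the reduction manifestly preserves the anonymous-game format and the bound $k = O(1)$.
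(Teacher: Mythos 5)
Your overall padding strategy is the same as the paper's, but your choice of dummy payoffs creates a genuine gap at the ``overwrite the dummies'' step. You make each dummy indifferent among all strategies, so the hypothesized algorithm may return a well-supported equilibrium $\vec{s}'$ of $G'$ in which the dummies play arbitrary mixed strategies. The original players' well-supported conditions in $\vec{s}'$ are then certified only against the distribution of counts induced by \emph{those} dummy strategies --- and in that regime the outcomes with $x_1 < N-n$, where you defined $\tilde u^i_a$ ``arbitrarily in $[0,1]$,'' occur with substantial probability, so the equilibrium the algorithm finds may depend entirely on those garbage values. When you subsequently force all dummies onto strategy $1$, you change the distribution $\vec{s}'_{-i}$ seen by every original player, and nothing guarantees their expected payoffs (hence their best-response sets) are preserved; a strategy that was a $(1/N^\gamma)$-best response before the overwrite can be far from one afterwards. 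Indifference of the dummies only shows that the \emph{dummies} remain in equilibrium; it says nothing about the original players, whose payoffs depend on the count of strategy-$1$ players that you have just shifted.

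The paper sidesteps this entirely by giving each dummy utility $1$ on strategy $1$ and $0$ on every other strategy. Then in any $\eps$-well-supported equilibrium of $G'$ with $\eps < 1$, every non-$1$ strategy is a full $1$ worse than strategy $1$ for a dummy, so the well-supported condition forces each dummy to put probability exactly $1$ on strategy $1$ --- no overwriting is needed, the outcomes with $x_1 < N-n$ genuinely have probability zero, and the original players' constraints in $G'$ coincide verbatim with their constraints in $G$. (This is also precisely why the lemma is stated for \emph{well-supported} equilibria rather than approximate Nash equilibria: an approximate Nash dummy could still put $\eps$ mass elsewhere.) Your proof becomes correct if you replace the constant-payoff dummies with strictly-preferring ones; as written, the argument does not go through.
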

\begin{proof}
Let $\eps$ be the desired quality of the well-supported equilibrium.
If $\frac{1}{n^\gamma} \le \eps$ we are done, so we assume $n$ is smaller.
We set $n' = (1/\eps)^{1/\gamma}$, so that $\frac{1}{n'^\gamma} = \eps$.
Given an $n$-player anonymous game $G$,
we build an $n'$-player anonymous game $G'$ as follows:
we add $n'-n$ dummy players, and give the dummy players utility 1 on strategy 1,
and 0 on any other strategies so in any $\eps$-well-supported equilibria,
the dummy player must all play strategy 1 with probability 1.
(Note that this is only true for $\eps$-well-supported Nash equilibrium; in an $\eps$-approximate Nash equilibrium,
the dummy players can put $\eps$ probability elsewhere.)
We shift the utility function of the actual players to ignore the dummy players on strategy $1$.
Formally, the payoff structure of $G'$ is given by:
\begin{itemize}
\item For each $i > n$,
  \[ u'^i_a(x) = \begin{cases} 1 & \text{if } a = 1 \\ 0 & \text{otherwise} \end{cases} \]
\item For each $i \le n$, we subtract the number of players on strategy 1 by $n' - n$ and then apply the original utility function. We define $\phi: \Z^k \rightarrow \Z^k$ as $\phi(x_1, \ldots, x_k) = (x_1 - (n'-n), x_2, \ldots, x_k)$,
  \[ u'^i_a(x) = \begin{cases} u^i_a(\phi(x)) & \text{if } x_1 \ge n' - n \\ 0 & \text{otherwise} \end{cases} \]
\end{itemize}
Since $\eps = \frac{1}{n'^\gamma}$,
  by assumption we can compute an $\eps$-well-supported equilibrium of $G'$ in time $O(n'^d)$,
  and we can simply remove the dummy players to obtain an $\eps$-equilibrium of the original game $G$.
The running time is $O(n'^d) = \poly(n, 1/\eps)$ when $\gamma = \Theta(1)$.
\end{proof}

\begin{proof}[Proof of Theorem \ref{thm:fptas}]
Assume that we can compute an $\bigO{n^{-c}}$-approximate equilibrium in polynomial time for some constant $c > 1$.
Let $\gamma = c-1$, so we can compute an $\bigO{\frac{1}{n^{1+\gamma}}}$-approximate equilibrium in polynomial time.
By Lemma \ref{lem:ane2wsne},
  we can convert it into an $\bigO{\frac{1}{n^{\gamma/2}}}$-well-supported equilibrium.
Lemma \ref{lem:adddummy} then states that any polynomial-time algorithm that computes a well-supported Nash equilibrium of an inverse polynomial precision gives an FPTAS for computing well-supported Nash equilibria in anonymous games.
\end{proof}

\section{Proof of Theorem~\ref{thm:simple}}
In this section, we present a faster algorithm that computes an
$\tildeO{n^{-1/3} k^{11/3}}$-approximate Nash equilibrium in $n$ player $k$ strategy anonymous games.
Note that this algorithm always runs in polynomial time in the input size,
without assuming any relationship between $n$ and $k$.
 
Our approach builds on the idea of \cite{goldbergT15} to ``smooth'' an anonymous game by forcing all the players to randomize.
We prove that the perturbed game is Lipschitz and therefore admits a pure Nash equilibrium (Lemma \ref{lem:pureeq}),
which corresponds to simple approximate equilibria of a specific form in the original game:
Each player plays one strategy with probability $1-\delta$ for some small $\delta$,
and plays other strategies uniformly at random with probability $\delta$.
To prove the perturbed game is Lipschitz (Proposition \ref{prop:perturb-lipschitz}),
we rely on the recently established multivariate central limit theorem (CLT) of~\cite{DaskalakisDKT15, DiakonikolasKS16stoc}
to show that for $\delta = \Omega(n^{-1/3})$ the associated PMD is close to a discrete Gaussian.

Recall that an anonymous game $G = (n, k, \{u^i_a\}_{i \in [n], a \in [k]})$ is $\lambda$-Lipschitz if
\[ \forall i \in [n], \forall a \in [k], \quad \forall x, y \in \Pi^k_{n-1}, \quad \abs{u^i_a(x) - u^i_a(y)} \le \lambda \normone{x-y}. \]
An approximate pure Nash equilibrium always exists in Lipschitz anonymous games.
  
\begin{lemma}[\cite{DaskalakisP15,AzrieliS13}]
\label{lem:pureeq}
Every $\lambda$-Lipschitz anonymous game with $k$ strategies admits a $(2 k \lambda)$-approximate pure Nash equilibrium.
Moreover, such an approximate equilibrium can be found in time $\tildeO{n+k}$ times the description size of the game.
\end{lemma}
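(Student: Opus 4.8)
The plan is to deduce the result from a fixed-point argument over aggregate strategy profiles, together with a constructive implementation of that argument. The starting point is that in an anonymous game the payoff $u^i_a$ depends only on the histogram (count vector) of the other $n-1$ players, so it suffices to exhibit a single integer histogram $m \in \Z^k_{\ge 0}$ with $\normone{m} = n$, together with an assignment of each player to one pure strategy realizing $m$, such that for every player $i$ the chosen strategy $a_i$ is a $(2k\lambda)$-best response to the histogram $m - e_{a_i}$ of the others. To produce such an $m$, I would work with the best-response correspondence on the $(k-1)$-dimensional space of aggregate histograms: given a target $q \in \R^k_{\ge 0}$ with $\normone{q} = n$, let each player $i$ best respond to the integer histogram obtained by rounding $q$ (with a unit of her own mass removed, which perturbs the histogram by $\ell_1$ at most $2$), and map $q$ to the resulting aggregate $\sum_i e_{a_i}$. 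After smoothing this map to make it continuous, Brouwer (or Kakutani on the set-valued version) yields a fixed point $m^\star$, which is precisely a pure profile in which every player best responds to a histogram very close to the true histogram of the others.

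The quantitative heart of the argument, and the step I expect to be the main obstacle, is showing that the resulting slack is $\bigO{k\lambda}$ rather than the $\bigO{\sqrt{n}\,\lambda}$ one would get from naively rounding a mixed Nash equilibrium and invoking Lipschitzness against the induced Poisson multinomial distribution. The point is that best responses here are always computed against a \emph{deterministic} histogram, so no concentration/variance term ever enters; the only discrepancies are (i) the coordinatewise rounding of $q$ to an integer vector, which costs at most $\normone{\cdot} < k$, and (ii) removing a single player's own contribution, which costs $\normone{e_a - e_b} = 2$. By $\lambda$-Lipschitzness these two effects change any best-response value by at most $\lambda(k + 2) \le 2k\lambda$ for $k \ge 2$, which is exactly the claimed bound. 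The delicate part is carrying out the smoothing/interpolation needed for Brouwer to apply without inflating this $\bigO{k}$ histogram error — for instance, replacing hard best responses by soft ones with a vanishing temperature, or interpolating the payoff tables between adjacent integer histograms — and then arguing that a genuine fixed point (not merely a near-fixed-point) of the smoothed map still certifies the $2k\lambda$ guarantee.

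For the running time, one should not enumerate all $\binom{n+k-1}{k-1}$ candidate histograms. Instead, exploiting that the best-response aggregate changes slowly as the target histogram moves, I would locate the fixed point by following a short monotone-type path through the grid of integer histograms in the $(k-1)$-dimensional simplex, maintaining for each strategy the set of players currently best-responding to it and updating only the players whose best response flips at each step. A single scan of the payoff data suffices to initialize, and the number of path steps and the per-step update cost are each $\tildeO{n+k}$, so the whole procedure runs in $\tildeO{n+k}$ times the description size of the game, as claimed. This is essentially the constructive content of \cite{DaskalakisP15, AzrieliS13}.
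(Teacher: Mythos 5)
The paper does not actually prove this lemma --- it imports it from \cite{DaskalakisP15,AzrieliS13} --- so your sketch has to be judged against the argument in those references. Your high-level plan (a fixed point of the best-response correspondence on aggregate histograms, with the error budget coming only from rounding the target histogram and from excluding a player's own unit of mass, hence $O(k\lambda)$ with no $\sqrt{n}$ concentration term) is the right idea and is essentially the Daskalakis--Papadimitriou route. But there is a genuine gap where you assert that Brouwer/Kakutani ``yields a fixed point $m^\star$, which is precisely a pure profile.'' It is not. To get upper hemicontinuity with convex values you must convexify, i.e.\ take $B(x) = \mathrm{conv}\{\sum_i e_{a_i} : a_i \text{ a best response of player } i \text{ to a rounding of } x\}$, and a Kakutani fixed point $x^\star \in B(x^\star)$ is then only a \emph{fractional} aggregate $x^\star = \sum_i \sigma_i$ with each $\sigma_i$ a mixture over player $i$'s best responses; it does not hand you one pure strategy per player whose aggregate is $x^\star$. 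The missing purification step is to note that the polytope $\{(y_i)_i : y_i \in \Delta(\mathrm{supp}\,\sigma_i),\ \sum_i y_i = x^\star\}$ is a transportation polytope whose vertices leave at most $k-1$ players fractional, so rounding those players perturbs the aggregate by at most $2(k-1)$ in $\ell_1$ while keeping every player on a best response. Without this (or an equivalent argument) you have no pure profile; with it, your accounting $\lambda(k+2)$ must absorb an extra $O(k\lambda)$, so the stated constant requires more careful bookkeeping than you give.

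The algorithmic claim is also not right as stated: there is no monotone structure in a general Lipschitz anonymous game that would let you reach the fixed point by following a ``short path'' through the grid of histograms, and you supply no invariant such a path would maintain. The actual algorithm is brute force made affordable by the input encoding: the description size of the game is already $\Theta(nk\,|\Pi^k_{n-1}|)$, so one enumerates every candidate aggregate $\theta \in \Pi^k_n$ and, for each, solves a bipartite matching / maximum-flow instance to decide whether the players can be assigned pure strategies realizing $\theta$ with each assigned strategy an approximate best response to $\theta$ minus that player's own contribution; the existence argument guarantees some $\theta$ passes, and the per-$\theta$ work is $\tildeO{n+k}$. This is exactly the ``enumerating pure strategy profiles and solving maximum flows'' step the paper alludes to in the proof of Theorem~\ref{thm:simple}, and it is where the claimed running time comes from.
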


We perturb the input game $G$ to get another game $G_\delta$ as follows.
Let $X_\delta(e_j)$ denote the $k$-CRV that takes value $e_j$ with probability $1-\delta$,
  and takes value $e_{j'}$ with probability $\frac{\delta}{k-1}$ for all other $j' \neq j$.
When a player plays the strategy $j$ in the perturbed game $G_\delta$,
  it is as if she is playing $X_\delta(e_j)$ in the original game $G$.
For example, the strategy $(1, 0, \ldots, 0)$ in $G_\delta$
  maps back to the mixed strategy $(1-\delta, \frac{\delta}{k-1}, \ldots, \frac{\delta}{k-1})$ in $G$.

By forcing all players to randomize,
  we increase the uncertainty in the outcome of the game (i.e., the variance of the resulting PMD),
  and thus making the game ``smoother''.
As we will prove later, the perturbed game $G_\delta$ is $\lambda$-Lipschitz for
  $\lambda = \tildeO{\frac{k^{9/2}}{\sqrt{n\delta}}}$.
It then follows from Lemma \ref{lem:pureeq} that there exists
  a $(2k\lambda)$-pure Nash equilibrium of $G_\delta$,
  which is a $(\delta + 2k\lambda)$-mixed Nash equilibrium of $G$.
The next proposition formally defines the payoff structure of $G_\delta$,
  and bounds its Lipschitz constant.

\begin{proposition}
\label{prop:perturb-lipschitz}
Given an anonymous game $G = (n, k, \{u^i_a\}_{i\in [n],a\in [k]})$ with payoffs normalized to $[0,1]$,
we define an anonymous game $G_\delta = (n, k, \{u'^i_a\}_{i\in [n],a\in [k]})$ as follows,
\[ \forall i\in [n], a\in [k], x \in \Pi^k_{n-1}, \quad u'^i_a(x) := (1-\delta)\expectOver{x' \sim M_\delta(x)}{u^i_{a}(x')} + \frac{\delta}{k-1} \sum_{a' \neq a}\expectOver{x' \sim M_\delta(x)}{u^i_{a'}(x')}, \]
where $M_\delta(x) = \sum_{j\in[k]} x_j X_\delta(e_j)$ is an $(n-1,k)$-PMD
that corresponds to the perturbed outcome of the partition $x \in \Pi^k_{n-1}$.
Then $G_\delta$ is $\tildeO{\frac{k^{9/2}}{\sqrt{n\delta}}}$-Lipschitz.
\end{proposition}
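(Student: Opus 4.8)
The plan is to bound $\abs{u'^i_a(x) - u'^i_a(y)}$ for two adjacent partitions $x,y \in \Pi^k_{n-1}$, reducing to the case $\normone{x-y}=2$ (moving one player from strategy $p$ to strategy $q$) and then multiplying by $\normone{x-y}/2$ for the general case by a telescoping/path argument. Fix such $x,y$. The key observation is that $M_\delta(x)$ and $M_\delta(y)$ are both $(n-1,k)$-PMDs that differ only in a single $k$-CRV: one has a copy of $X_\delta(e_p)$ where the other has a copy of $X_\delta(e_q)$. Since each $u^i_{a'}$ takes values in $[0,1]$, we have $\abs{u'^i_a(x)-u'^i_a(y)} \le \max_{a'} \abs{\expectOver{x'\sim M_\delta(x)}{u^i_{a'}(x')} - \expectOver{x'\sim M_\delta(y)}{u^i_{a'}(x')}} \le \totalvardist{M_\delta(x)}{M_\delta(y)}$. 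So the entire proposition reduces to showing that two $(n-1,k)$-PMDs differing in one CRV, each with variance $\bigOmega{n\delta/k}$ in every direction orthogonal to $\bone$ (by Claim~\ref{clm:all-randomize-nash}, since every component has $p_{i,j}\ge \frac{\delta}{k-1}$), are within $\tildeO{\frac{k^{9/2}}{\sqrt{n\delta}}}$ in total variation distance.

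For this total variation bound, the natural route is the multivariate CLT of~\cite{DaskalakisDKT15, DiakonikolasKS16stoc}: both $M_\delta(x)$ and $M_\delta(y)$ are close to discrete Gaussians with their respective means and covariances. One then compares the two discrete Gaussians. Since the two PMDs differ in only one CRV, their means differ by an $O(1)$-norm vector and their covariance matrices differ by an $O(1)$-Frobenius-norm perturbation. Because the covariances have all nontrivial eigenvalues at least $\sigma^2 = \bigOmega{n\delta/k}$, the discrete Gaussians with these nearby parameters are within roughly $\bigO{(\text{difference in parameters})/\sigma}$ in total variation — each factor of $1/\sigma$ contributing a $\sqrt{k/(n\delta)}$, and the powers of $k$ coming from dimension-dependent terms in the CLT error and in the Gaussian-comparison step. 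I would invoke the CLT in the quantitative form stated in~\cite{DiakonikolasKS16stoc} (the same one underlying Lemmas~\ref{lem:clt-support-ellipsoid-and-count}--\ref{lem:clt-fourier-support-and-volume}), which gives an error of the form $\poly(k)\cdot \polylog(1/\eps)/\sigma$ after folding in the small-variance directions, and track constants to land at $\tildeO{k^{9/2}/\sqrt{n\delta}}$.

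Finally, for general $x,y$ with $\normone{x-y} = 2t$, write a path $x = z_0, z_1, \ldots, z_t = y$ where each $z_{r+1}$ is obtained from $z_r$ by moving a single player between two strategies, apply the single-step bound to each $\totalvardist{M_\delta(z_r)}{M_\delta(z_{r+1})}$, and sum using the triangle inequality, yielding $\abs{u'^i_a(x)-u'^i_a(y)} \le t \cdot \tildeO{k^{9/2}/\sqrt{n\delta}} = \frac{\normone{x-y}}{2}\cdot\tildeO{k^{9/2}/\sqrt{n\delta}}$, which is the claimed Lipschitz bound.

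The main obstacle is the quantitative Gaussian-parameter-comparison step: getting a total variation bound between two discrete Gaussians whose means and covariances differ by $O(1)$ but whose minimum eigenvalue is only $\sigma^2$, with the \emph{correct} polynomial dependence on $k$. This requires carefully combining (i) the CLT approximation error of each PMD by its discrete Gaussian, (ii) a bound on the total variation distance between two continuous Gaussians in terms of the Mahalanobis distance of the means and the operator/Frobenius distance of the covariances relative to $\sigma$, and (iii) the discretization error; the powers of $k$ in the final $k^{9/2}$ arise from the interplay of these three sources, and keeping them from blowing up is the delicate part.
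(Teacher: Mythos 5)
Your proposal is correct and follows essentially the same route as the paper: reduce to $\totalvardist{M_\delta(x)}{M_\delta(y)}$, lower-bound the variance in all directions orthogonal to $\bone$ (handling the degenerate direction by projecting onto the first $k-1$ coordinates), apply the multivariate CLT to replace each PMD by a discretized Gaussian, and compare the two Gaussians. The ``main obstacle'' you identify -- the quantitative comparison of two Gaussians with nearby means and covariances relative to the minimum eigenvalue $\sigma^2$ -- is exactly Lemma~\ref{lem:dtvgaussian} (imported from \cite{DaskalakisDKT15}); the only cosmetic difference is that you telescope over adjacent partitions, whereas the paper directly bounds $\abs{v^T(\mu_1-\mu_2)}$ and $\abs{v^T(\Sigma_1-\Sigma_2)v}$ in terms of $\normone{x-y}$ and absorbs the one-time CLT error using $\normone{x-y}\ge 1$; both yield the claimed Lipschitz constant.
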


We defer the proof of Proposition \ref{prop:perturb-lipschitz} to the next subsection.
We now show how Theorem \ref{thm:simple} follows from Proposition \ref{prop:perturb-lipschitz}.

\begin{proof}[Proof of Theorem \ref{thm:simple}]
Proposition \ref{prop:perturb-lipschitz} shows that $G_\delta$ is $\tildeO{\frac{k^{9/2}}{\sqrt{n\delta}}}$-Lipschitz.
By Lemma \ref{lem:pureeq}, there exists a $(2k\lambda)$-approximate pure Nash equilibrium in $G_\delta$,
and as noted in \cite{DaskalakisP15}, such an approximate equilibrium can be found
in total number of bit operations that is $\tildeO{n+k}$ times the description size of $G_\delta$,
by enumerating pure strategy profiles and solving maximum flows to match players to mixed strategies.
Since we can compute the payoff structure of $G_\delta$ in polynomial-time given the input game $G$,
the overall running time is polynomial in the input size.

We now bound the quality of the approximate Nash equilibrium.
Note that a $(2k\lambda)$-pure equilibrium of $G_\delta$ is a $(\delta + 2k\lambda)$-mixed Nash equilibrium of $G$,
  since an $\eps$-equilibrium in $G_\delta$ means that players cannot gain more than $\eps$
  by deviating to the mixed strategies of the form $X_\delta(e_j) = (1-\delta)e_j + \frac{\delta}{k-1} (\bone - e_j)$,
  so they gain at most ($\delta + 2k\lambda$) by deviating to any $e_j$.
Because changing what a player is doing $\delta$ fraction of the time
  can change her payoff by at most $\delta$.
Therefore, we can compute an $(\delta + 2k\lambda) = \tildeO{\delta + \frac{k^{11/2}}{\sqrt{n\delta}}}$-equilibrium
  of the original game $G$ in polynomial-time for any $\delta > 0$.
Finally, setting $\delta = \frac{k^{11/3}}{n^{1/3}}$, we get an $\tildeO{\frac{k^{11/3}}{n^{1/3}}}$-approximate Nash equilibrium.
\end{proof}

\subsection{Proof of Proposition~\ref{prop:perturb-lipschitz}}
\label{sec:perturb-lipschitz}
This section is devoted to the proof of Proposition \ref{prop:perturb-lipschitz}.
We will make use of the following two results.
The first lemma is the multivariate central limit theorem from \cite{DiakonikolasKS16stoc},
which states that if an $(n,k)$-PMD $X$ has high variance in all directions orthogonal to the all ones vector $\bone$
(its variance along $\bone$ is 0),
then the projection of $X$ on the first $(k-1)$ coordinates
is close to a discretized Gaussian distribution with the same mean vector and covariance matrix.

\begin{lemma}[\cite{DiakonikolasKS16stoc}]
\label{lem:clt}
Let $X$ be an $(n, k)$-PMD, and $X'$ be a $(k-1)$-dimensional random variable
that is the projection of $X$ onto its first $k-1$ coordinates.
Let $\Sigma'$ be the covariance matrix of $X'$.
Suppose that $\Sigma'$ has no eigenvectors with eigenvalue less than $\sigma'^2$.
Let $G'$ be the distribution obtained by sampling from $\caln(\expect{X'}, \Sigma')$
and rounding to the nearest point in $\Z^k.$
Then, we have that
\[ \totalvardist{X'}{G'} \le \bigO{k^{7/2} \sqrt{\log^3(\sigma')}/\sigma'}. \]
\end{lemma}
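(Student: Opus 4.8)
The plan is to prove this local central limit theorem by a Fourier-analytic comparison of $X'$ and $G'$; this is the multivariate version of the Fourier-based argument for Poisson binomials, and follows \cite{DiakonikolasKS16stoc, DaskalakisDKT15}. Write $\mu=\expect{X'}$ and decompose $\Sigma'=\sum_{i=1}^n\Sigma_i'$, where $\Sigma_i'$ is the covariance of the $i$-th projected $k$-CRV $X_i'$; by hypothesis $\Sigma'\succeq\sigma'^2 I$, and we may assume $\sigma'\ge\poly(k\log\sigma')$ since otherwise the claimed bound exceeds $1$ and there is nothing to prove. Arguing as in (the proofs of) Lemmas~\ref{lem:clt-support-ellipsoid-and-count} and \ref{lem:clt-fourier-support-and-volume}, applied to the genuinely $(k-1)$-dimensional variable $X'$ and with radius parameter $R=\bigTheta{k\log\sigma'}$, I get: an ellipsoidal effective support $S\subseteq\Z^{k-1}$ with $\abs S=\sqrt{\det(\Sigma'+I)}\cdot\bigO{\log\sigma'}^{\bigO{k}}$, outside of which $X'$ and $G'$ each have mass at most the target bound; and an ellipsoidal effective Fourier support $T\subseteq[0,1]^{k-1}$ around the origin, on which $\norminf{\xi}\le\bigO{\sqrt{k\log\sigma'}}/\sigma'$, and outside of which both $\abs{\wh{X'}(\xi)}$ and $\abs{\wh{G'}(\xi)}$ are $\sigma'^{-\bigOmega{k}}$ (using the product structure of $\wh{X'}$ and, for $G'$, periodicity together with a Gaussian tail bound).

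By Fourier inversion $\Pr[X'=x]-\Pr[G'=x]=\int_{[0,1]^{k-1}}e(-\xi \cdot x)\big(\wh{X'}(\xi)-\wh{G'}(\xi)\big)\,d\xi$, and by Parseval $\sum_x\abs{\Pr[X'=x]-\Pr[G'=x]}^2=\int_{[0,1]^{k-1}}\abs{\wh{X'}(\xi)-\wh{G'}(\xi)}^2\,d\xi$. Splitting on $S$ and using Cauchy--Schwarz,
\[ \dtv(X',G')\le\tfrac12\sqrt{\abs S}\Big(\int_{[0,1]^{k-1}}\abs{\wh{X'}(\xi)-\wh{G'}(\xi)}^2\,d\xi\Big)^{1/2}+(\text{mass of }X',G'\text{ outside }S), \]
so it suffices to bound the Fourier $L^2$ difference. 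The contribution from $[0,1]^{k-1}\setminus T$ is $\sigma'^{-\bigOmega{k}}$, negligible even after multiplying by $\sqrt{\abs S}$. On $T$, I would Taylor-expand: $\log\wh{X_i'}(\xi)=-2\pi i\,\xi\cdot\mu_i-2\pi^2\,\xi^T\Sigma_i'\xi+E_i(\xi)$, where $E_i$ collects the cumulants of order $\ge 3$ and $\abs{E_i(\xi)}\le\bigO{\expect{\abs{\xi\cdot(X_i'-\mu_i)}^3}}$. The key estimate is "third moment $\le$ (small factor) $\times$ variance": since each $X_i'$ is a $0/1$ vector, $\abs{\xi\cdot(X_i'-\mu_i)}\le2\norminf{\xi}$, hence $\expect{\abs{\xi\cdot(X_i'-\mu_i)}^3}\le2\norminf{\xi}\cdot\xi^T\Sigma_i'\xi$; summing, $\abs{E(\xi)}:=\abs{\sum_i E_i(\xi)}\le\bigO{\norminf{\xi}\cdot\xi^T\Sigma'\xi}\le1$ on $T$ in our regime. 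Thus $\wh{X'}(\xi)=\exp(-2\pi i\xi\mu-2\pi^2\xi^T\Sigma'\xi+E(\xi))$, while $\wh{G'}(\xi)=\exp(-2\pi i\xi\mu-2\pi^2\xi^T\Sigma'\xi)$ up to an error of the same order coming from the rounding in the definition of $G'$; since $\abs{\exp(a)-\exp(b)}\le\abs{a-b}$ when $\mathrm{Re}(a),\mathrm{Re}(b)\le0$, we get $\abs{\wh{X'}(\xi)-\wh{G'}(\xi)}\le\bigO{e^{-2\pi^2\xi^T\Sigma'\xi}\,\norminf{\xi}\,\xi^T\Sigma'\xi}$ on $T$.

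Squaring, extending the integral to $\R^{k-1}$, and substituting $\eta=(\Sigma')^{1/2}\xi$ (so $\xi^T\Sigma'\xi=\normtwo{\eta}^2$, $\norminf{\xi}\le\normtwo{\eta}/\sigma'$, $d\xi=d\eta/\sqrt{\det\Sigma'}$), the Fourier $L^2$ difference on $T$ is at most $\frac{\bigO{1}}{\sigma'^2\sqrt{\det\Sigma'}}\int_{\R^{k-1}}e^{-4\pi^2\normtwo{\eta}^2}\normtwo{\eta}^6\,d\eta=\frac{\poly(k)}{\sigma'^2\sqrt{\det\Sigma'}}$. Plugging in $\abs S$ and using $\det(\Sigma'+I)^{1/4}/\det(\Sigma')^{1/4}=\prod_j(1+1/\lambda_j)^{1/4}=\bigO{1}$ (each eigenvalue $\lambda_j\ge\sigma'^2\ge k$), I obtain $\dtv(X',G')\le\poly(k)\cdot\bigO{\log\sigma'}^{\bigO{k}}/\sigma'$ plus the tail terms.

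The real work — and the main obstacle — is to sharpen this to the stated $\bigO{k^{7/2}\sqrt{\log^3\sigma'}/\sigma'}$: the crude Cauchy--Schwarz over all of $S$ loses too much (it leaves $\bigO{\log\sigma'}^{\bigO{k}}$ rather than $\sqrt{\log^3\sigma'}$), so one instead pairs the pmf difference against a Gaussian-like weight, shrinking the effective "support size" that appears in the final estimate; and the whole argument must be carried out using only the eigenvalue lower bound on the \emph{aggregate} covariance $\Sigma'$, since there is no assumption on the individual $\Sigma_i'$ — which is precisely why the cumulant remainder above has to be controlled in the aggregate rather than term by term. Finally, if the resulting bound is $\ge1$ the statement is trivial, which completes the plan.
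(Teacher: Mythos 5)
You are proving a statement that this paper never proves: Lemma~\ref{lem:clt} is imported verbatim from \cite{DiakonikolasKS16stoc} and used as a black box, so the only meaningful comparison is with the proof in that reference. Your overall architecture does match it in spirit -- Fourier inversion, an effective support $S$ in space and $T$ in frequency, Taylor expansion of the log Fourier transform with the third-moment bound $\abs{E(\xi)} \le \bigO{\norminf{\xi}\, \xi^T \Sigma' \xi}$ controlled only through the aggregate covariance -- and those ingredients are sound.

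The genuine gap is the one you flag yourself, and it is not a polish issue but the heart of the lemma. The route you actually execute -- Parseval plus Cauchy--Schwarz over $S$ (equivalently, $|S| \cdot \mathrm{Vol}(T) \cdot \sup_{T} |\wh{X'}-\wh{G'}|$) -- cannot yield the stated bound: since $|S| = \sqrt{\det(\Sigma'+I)} \cdot \bigO{\log \sigma'}^{\Theta(k)}$ and $\mathrm{Vol}(T) \approx \bigO{\log \sigma'}^{\Theta(k)} / \sqrt{\det \Sigma'}$, any such global estimate pays a $\bigO{\log \sigma'}^{\Theta(k)}$ factor, whereas the lemma asserts $\bigO{k^{7/2} \sqrt{\log^3 \sigma'}/\sigma'}$ with no exponential-in-$k$ polylogarithmic loss. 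The device in \cite{DiakonikolasKS16stoc} that removes this factor is a pointwise local limit theorem: on the effective support, $\Pr[X'=x]$ is shown to equal the (discretized) Gaussian density at $x$ up to a multiplicative error $1 \pm \bigO{\poly(k)\,\mathrm{polylog}(\sigma')/\sigma'}$ plus a negligible additive error, obtained by evaluating the inverse Fourier integral of $\exp(\mathrm{quadratic} + E(\xi))$ rather than merely bounding $|\wh{X'}-\wh{G'}|$ in $L^\infty$ or $L^2$; total variation then follows by summing the relative error against the Gaussian weight. This is exactly the ``Gaussian-like weight'' step you name but leave unexecuted, and without it the argument proves only the strictly weaker bound $\poly(k)\, \bigO{\log \sigma'}^{\bigO{k}}/\sigma'$ that you concede. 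Two secondary loose ends would need the same care: your rounding/aliasing estimate $|\wh{G'}(\xi) - e(\xi \cdot \mu)\exp(-2\pi^2 \xi^T \Sigma' \xi)| = \bigO{\normone{\xi}}$ carries no Gaussian decay factor, so it cannot simply be folded into the integral you extend over $\R^{k-1}$; and $T$ must be defined through integer-shift representatives (as in Lemma~\ref{lem:clt-fourier-support-and-volume}), not as a neighborhood of the origin, before the Gaussian-decay bound on $|\wh{X'}|$ outside $T$ is legitimate.
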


The second simple lemma states that if two $k$-dimensional Gaussian
distributions have similar mean vectors and variances (in all directions),
then they are close in total variation distance.

\begin{lemma}[\cite{DaskalakisDKT15}]
\label{lem:dtvgaussian}
For two $k$-dimensional Gaussians $X \sim \caln(\mu_1, \Sigma_1)$ and $Y \sim \caln(\mu_2, \Sigma_2)$,
  such that for all unit vector $v$,
\[ \abs{v^T (\mu_1 - \mu_2)} \le \eps s_v, \text{ and} \quad \abs{v^T (\Sigma_1 - \Sigma_2) v} \le \frac{\eps s_v^2}{2 \sqrt{k}}, \]
where $s^2_v = \max\{v^T \Sigma_1 v, v^T \Sigma_2 v\}$.
Then $\totalvardist{X}{Y} \le \eps$.
\end{lemma}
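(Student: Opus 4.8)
The plan is to bound the total variation distance by the Kullback--Leibler divergence via Pinsker's inequality, $\totalvardist{X}{Y} \le \sqrt{\tfrac12\,D_{\mathrm{KL}}(X\,\|\,Y)}$, and to control the (closed-form) Gaussian KL divergence using the two directional hypotheses. We may assume $\eps<1$, since otherwise the bound is vacuous; we may also assume $\Sigma_1,\Sigma_2$ are nonsingular, restricting to their common support otherwise (the hypotheses force the two kernels to coincide, since a vector in one kernel but not the other would have $s_v=0$ yet $v^T(\Sigma_1-\Sigma_2)v\neq 0$).

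The first step is to distill the two directional hypotheses into two ``global'' bounds. Write $\delta=\mu_1-\mu_2$. Because $\tfrac{\eps}{2\sqrt k}<\tfrac12$, the covariance hypothesis forces $v^T\Sigma_1 v$ and $v^T\Sigma_2 v$ to differ by at most a factor of $2$ for every unit vector $v$, so $s_v^2\le 2\min\{v^T\Sigma_1 v,\,v^T\Sigma_2 v\}$; in particular $(1-\tfrac{\eps}{\sqrt k})\Sigma_2\preceq\Sigma_1\preceq(1+\tfrac{\eps}{\sqrt k})\Sigma_2$. Hence every eigenvalue $\lambda_i$ of $\Sigma_2^{-1}\Sigma_1$ lies in $[1-\tfrac{\eps}{\sqrt k},\,1+\tfrac{\eps}{\sqrt k}]\subseteq[\tfrac12,2]$, so $\sum_{i=1}^k(\lambda_i-1)^2\le k\cdot(\eps/\sqrt k)^2=\eps^2$. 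For the means, apply the mean hypothesis to the unit vector $v_0\propto\Sigma_2^{-1}\delta$; using the identity $(v_0^T\delta)^2=(\delta^T\Sigma_2^{-1}\delta)(v_0^T\Sigma_2 v_0)$ together with $s_{v_0}^2\le 2\,v_0^T\Sigma_2 v_0$ gives $\delta^T\Sigma_2^{-1}\delta\le 2\eps^2$.

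Now substitute these into the standard identity
\[
D_{\mathrm{KL}}\bigl(\caln(\mu_1,\Sigma_1)\,\big\|\,\caln(\mu_2,\Sigma_2)\bigr)=\tfrac12\Bigl(\delta^T\Sigma_2^{-1}\delta+\textstyle\sum_{i=1}^k(\lambda_i-1-\ln\lambda_i)\Bigr).
\]
The elementary inequality $\lambda-1-\ln\lambda\le(\lambda-1)^2$, valid on $[\tfrac12,2]$, bounds the second sum by $\sum_i(\lambda_i-1)^2\le\eps^2$, so $D_{\mathrm{KL}}\le\tfrac12(2\eps^2+\eps^2)=\tfrac32\eps^2$, and Pinsker yields $\totalvardist{X}{Y}\le\sqrt{3\eps^2/4}=\tfrac{\sqrt3}{2}\eps<\eps$. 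A structurally identical alternative avoids Pinsker: split $\totalvardist{\caln(\mu_1,\Sigma_1)}{\caln(\mu_2,\Sigma_2)}$ by the triangle inequality into a mean shift and a covariance change; after whitening by $\Sigma_2^{-1/2}$ the mean shift becomes a one-dimensional total variation distance between unit-variance Gaussians separated by $\sqrt{\delta^T\Sigma_2^{-1}\delta}\le\sqrt2\,\eps$, hence $O(\eps)$, while the covariance change is $\totalvardist{\caln(0,I)}{\caln(0,M)}$ for $M=\Sigma_2^{-1/2}\Sigma_1\Sigma_2^{-1/2}$, which a hybrid argument over the $k$ eigendirections of $M$ bounds by $O((\sum_i(\lambda_i-1)^2)^{1/2})=O(\eps)$.

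I expect the only delicate point to be the first step: extracting the two global bounds from the directional hypotheses requires correctly handling the possibility that $s_v^2$ equals the ``wrong'' quadratic form -- which, as above, costs only a harmless factor of $2$ -- and checking that the resulting range $[\tfrac12,2]$ for the $\lambda_i$ is where the logarithmic Taylor estimate is valid. It is worth emphasizing that the $1/\sqrt k$ factor in the covariance hypothesis is precisely what makes $\sum_{i=1}^k(\lambda_i-1)^2=O(\eps^2)$ rather than $O(k\eps^2)$; dropping it would lose a factor of $k$ in the final bound.
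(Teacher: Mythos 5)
The paper does not actually prove this lemma; it is imported verbatim from \cite{DaskalakisDKT15}, so there is no in-paper argument to compare against. Your Pinsker-plus-KL route is a correct, self-contained proof: the reduction of the two directional hypotheses to $\delta^T\Sigma_2^{-1}\delta\le 2\eps^2$ (via the unit vector proportional to $\Sigma_2^{-1}\delta$ and $s_{v}^2\le 2\min\{v^T\Sigma_1v,v^T\Sigma_2v\}$) and to $\sum_i(\lambda_i-1)^2\le\eps^2$ is sound, the closed-form Gaussian KL identity is applied correctly, and $\sqrt{\tfrac12\cdot\tfrac32\eps^2}=\tfrac{\sqrt3}{2}\eps<\eps$ closes the argument; the $1/\sqrt k$ factor is indeed exactly what prevents a $\sqrt k$ loss. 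Two small points deserve a patch, though neither threatens the proof. First, your justification that the kernels coincide is garbled: a vector $v$ in $\ker\Sigma_1\setminus\ker\Sigma_2$ has $s_v^2=v^T\Sigma_2v>0$ (not $s_v=0$); the right observation is that then $\abs{v^T(\Sigma_1-\Sigma_2)v}=s_v^2>\eps s_v^2/(2\sqrt k)$ for $\eps<1$, contradicting the hypothesis, while for $v$ in the common kernel the mean condition forces $v^T(\mu_1-\mu_2)=0$. Second, the containment $[1-\eps/\sqrt k,\,1+\eps/\sqrt k]\subseteq[\tfrac12,2]$ is false when $k\le 3$ and $\eps>\sqrt k/2$, and for $\lambda$ well below $\tfrac12$ the estimate $\lambda-1-\ln\lambda\le(\lambda-1)^2$ genuinely fails; but the hypothesis used asymmetrically gives the sharper range $\lambda_i\in[1-\tfrac{\eps}{2\sqrt k},\,(1-\tfrac{\eps}{2\sqrt k})^{-1}]\subseteq[\tfrac12,2]$ together with $\abs{\lambda_i-1}\le\eps/\sqrt k$, so both the Taylor estimate and $\sum_i(\lambda_i-1)^2\le\eps^2$ survive unchanged. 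With those repairs your argument stands as a clean replacement for the external citation.
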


\begin{proof}[Proof of Proposition \ref{prop:perturb-lipschitz}]
To prove the game $G_\delta$ is $\lambda$-Lipschitz, we need to show that
\[ \forall i \in [n], \forall a \in [k], \quad \forall x, y \in \Pi^k_{n-1},
    \quad \abs{\expectOver{x'\sim M_\delta(x)}{u^i_a(x')} - \expectOver{y'\sim M_\delta(y)}{u^i_a(y')}} \le \lambda \normone{x-y}. \]
In fact, because the payoff entries are normalized in $[0,1]$, it is sufficient to show that
the total variation distance between the $(n-1,k)$-PMDs $M_\delta(x)$ and $M_\delta(y)$ is small, namely
\[ \quad \forall x, y \in \Pi^k_{n-1}, \quad \totalvardist{M_\delta(x)}{M_\delta(y)} \le \lambda \normone{x-y}. \]
Let $M'_\delta(x)$ and $M'_\delta(y)$ be the distributions $M_\delta(x)$ and $M_\delta(y)$ projected onto their first $k-1$ coordinates.
Note that since all coordinates must sum to $n$, the $k^{th}$ coordinate is redundant and so $\totalvardist{M_\delta(x)}{M_\delta(y)}=\totalvardist{M'_\delta(x)}{M'_\delta(y)}$.
To show that $M'_\delta(x)$ and $M'_\delta(y)$ are close in total variation distance,
we first prove that the covariance matrix of $M'_\delta(x)$ has high variance in all directions,
which allows us to use the multivariate central limit theorem (Lemma \ref{lem:clt}) to conclude that
both $M'_\delta(x)$ and $M'_\delta(y)$ are close to the (discretized) Gaussian distributions
with the same mean vectors and covariance matrices respectively.
We then bound from above the total variation distance between
two high-variance $k$-dimensional Gaussian distributions
whose mean vectors are essentially $x$ and $y$.

Recall that $M_\delta(x)$ is the sum of $n-1$ independent $k$-CRVs,
and let $\Sigma_1$ denote the covariance matrix of $M_\delta(x)$.
For any unit vector $v \in \R^k$ that is orthogonal to the all-one vector, we have
\begin{align*}
  \var[v^T X_\delta(e_j)] &= \expect{\left( v^T X_\delta(e_j) \right)^2} - \left(\expect{v^T X_\delta(e_j)}\right)^2 \\
    &= (1-\delta) v^2_j + \frac{\delta}{k-1} \sum_{j' \neq j} v^2_{j'}
      - \left((1-\delta) v_j + \frac{\delta}{k-1} \sum_{j' \neq j} v_{j'} \right)^2 \\
    &= (1-\delta) v^2_j + \frac{\delta}{k-1} (1-v^2_j) - \left((1-\delta) v_j - \frac{\delta}{k-1} v_j \right)^2 \\
    &\ge \frac{\delta}{k-1},
\end{align*}
where we simplify the expression using the fact that $\sum_j v_j = 0$ and $\sum_j v^2_j = 1$,
  and then take derivative to minimize it.
Therefore, for any unit vector $v$, $v^T \Sigma_1 v = \var[v^T M_\delta(x)] = \sum_{j\in [k]} x_j \var[v^T X_\delta(e_j)] \ge \frac{(n-1)\delta}{k-1}$, which implies that $\Sigma_1$ has no eigenvalues less than $\frac{(n-1)\delta}{k-1}$ (except the one associated with $\bone$).
We then use the following lemma to bound from below the eigenvalues of $\Sigma'_1$:
\begin{lemma} Suppose that $\Sigma$ is a positive semidefinite matrix with $\Sigma \bone = 0$ and that all other eigenvalues of $\Sigma$ are at least $\sigma^2$.
Then for all vectors $w \in \R^k$ with $w_k = 0$, we have that
\[ \frac{w^T \Sigma w}{w^Tw} \ge \sigma^2/k. \]
\end{lemma}
\begin{proof}
Let $w$ be a vector that minimizes $\frac{w^T \Sigma w}{w^T w}$ over $w \in \R^k$ with $w_k = 0$.
Then $v = w - \frac{w^T \bone}{k} \bone$ has $v^T \bone = 0$ and so $v^T \Sigma v \geq \sigma^2 v^T v$.
We have $v^T \Sigma v=w^T \Sigma w$ since $v - w$ is a multiple of $\bone$, and we have
\begin{align*}
v^T v & = \left(w - \frac{w^T \bone}{k} \bone\right)^T \left(w - \frac{w^T \bone}{k} \bone\right) \\
& = w^T w + (w^T \bone)^2/k - 2(w^T \bone)^2/k \\
& = \|w\|_2^2 - \|w\|_1^2/k \\
& \geq w^T w / k \;,
\end{align*}
where the last line follows from the inequality $\|w\|_1 \leq \sqrt{k-1}\|w\|_2$.
Thus, we have that
\[ \frac{w^T \Sigma w}{w^Tw} \ge \frac{v^T \Sigma v}{k v^Tv} \ge \sigma^2 / k. \qedhere \]
\end{proof}
Since all except one eigenvalues of each of $\Sigma_1$ and $\Sigma_2$
are at least $\frac{(n-1)\delta}{(k-1)}$,
the minimum eigenvalues of $\Sigma'_1$ and $\Sigma'_2$
are at least $\frac{(n-1)\delta}{k^2}$. Let $\mathcal{Z}(\mu,\Sigma)$
be the discretized Gaussian obtained by rounding $\caln(\mu,\Sigma)$
to the nearest integer in every coordinate.
Then, by Lemma \ref{lem:clt}, we have
\begin{align}
\label{eqn:clt}
\totalvardist{M'_\delta(x)}{\mathcal{Z}(\mu'_1,\Sigma'_1)} \le \tildeO{\frac{k^{9/2}}{\sqrt{n \delta}}}, \quad
\totalvardist{M'_\delta(y)}{\mathcal{Z}(\mu'_2,\Sigma'_2)} \le \tildeO{\frac{k^{9/2}}{\sqrt{n \delta}}}.
\end{align}
Next, we use Lemma \ref{lem:dtvgaussian} to bound the total variation distance
between the $k$-dimensional Gaussian distributions $\caln(\mu'_1,\Sigma'_1)$ and $\caln(\mu'_2,\Sigma'_2)$.
Let $\mu_1, \mu_2$ and $\Sigma_1, \Sigma_2$ be the mean vectors
and the covariance matrices of $M_\delta(x)$ and $M_\delta(y)$ respectively.
Observe that
\[ \mu_1 = \left(1-\frac{k\delta}{k-1}\right)x + \delta \bone. \]
So, for any unit vector $v \in \R^k$,
\begin{align*}
  s^2_v &= \max\{v^T \Sigma_1 v, v^T \Sigma_2 v\} \ge \frac{(n-1)\delta}{k-1}, \\
  \abs{v^T(\mu_1-\mu_2)}
    &= \abs{v^T \left(\left(1-\frac{k\delta}{k-1}\right)x - \left(1-\frac{k\delta}{k-1}\right)y\right)}
    \le \left(1-\frac{k\delta}{k-1}\right) \normone{x-y} \le \normone{x-y}.
\end{align*}
If the unit vector $v$ is orthogonal to $\bone$, we can use the expression for $v^T \Sigma_1 v$ we had earlier.
Taking derivative with respect to $v_j$ shows that $\var[v^T X_\delta(e_j)]$ is maximized at $v = \pm e_j$.
Hence, we can write
\begin{align*}
  \abs{v^T(\Sigma_1-\Sigma_2)v}
   &= \sum_{j\in [k]} (x_j - y_j) \var[v^T X_\delta(e_j)]
    \le \normone{x-y} \max_j \var[v^T X_\delta(e_j)] \\
   &\le \normone{x-y}  \left[(1-\delta) - \left(1-\frac{k\delta}{k-1}\right)^2\right]
    = \frac{k-1}{k+1}\delta \normone{x-y} \le \delta \normone{x-y}.
\end{align*}
To see that the upper bound on $\abs{v^T(\Sigma_1-\Sigma_2)v}$ holds for all unit vectors,
observe that for both covariance matrices it holds $\Sigma_1 \bone = \Sigma_2 \bone = \vec{0}$.
For any unit vector $v'$, we can take its projection onto the subspace orthogonal to $\bone$,
and write $v'$ as a linear combination $\alpha v + \beta \bone$,
for some $\alpha < 1$ and a unit vector $v$ that is orthogonal to $\bone$. That is,
\[ \abs{v'^T (\Sigma_1 - \Sigma_2) v'} = \abs{(\alpha v + \beta \bone)^T (\Sigma_1 - \Sigma_2) (\alpha v + \beta \bone)}
     = \abs{\alpha^2 v^T (\Sigma_1 - \Sigma_2) v} \le \abs{v^T (\Sigma_1 - \Sigma_2) v}. \]
Thus, for all unit vectors $v \in \R^k$, we have
$\abs{v^T(\mu_1-\mu_2)} \leq \normone{x-y}$ and $\abs{v^T (\Sigma_1 - \Sigma_2) v} \leq \delta \normone{x-y}$.
In particular, this holds for vectors with $k^{th}$ coordinate $0$.
Hence, for all $v \in \R^{k-1}$, we have $\abs{v^T(\mu'_1-\mu'_2)} \leq \normone{x-y}$ and $\abs{v^T (\Sigma'_1 - \Sigma'_2) v} \leq \delta \normone{x-y}$.
	
Finally, we set $\eps = \bigO{\frac{\sqrt{k}}{\sqrt{n\delta}} + \frac{k^{3/2}}{n}} \normone{x-y}$
  to satisfy the requirements of Lemma \ref{lem:dtvgaussian}, and therefore $\totalvardist{\caln(\mu'_1,\Sigma'_1)}{\caln(\mu'_2,\Sigma'_2)} \leq \eps$.
By the data processing inequality,
rounding both distributions to the nearest integer coordinates
does not increase their total variation distance, therefore
\begin{align}
\label{eqn:dtvgaussian}
  \totalvardist{\mathcal{Z}(\mu'_1,\Sigma'_1)}{\mathcal{Z}(\mu'_2,\Sigma'_2)} \le \eps.
\end{align}
By the triangle inequality, Equations (\ref{eqn:clt}) and (\ref{eqn:dtvgaussian}) yield
\begin{align*}
\totalvardist{M_\delta(x)}{M_\delta(y)} & = \totalvardist{M'_\delta(x)}{M'_\delta(y)} \\
  &\le \totalvardist{M'_\delta(x)}{\mathcal{Z}(\mu'_1,\Sigma'_1)} + \totalvardist{\mathcal{Z}(\mu'_1,\Sigma'_1)}{\mathcal{Z}(\mu'_2,\Sigma'_2)} + \totalvardist{\mathcal{Z}(\mu'_2,\Sigma'_2)}{M'_\delta(y)} \\
  &\le \tildeO{\frac{k^{9/2}}{\sqrt{n\delta}}} + \bigO{\frac{\sqrt{k}}{\sqrt{n\delta}} + \frac{k^{3/2}}{n}} \normone{x-y} \\
  &\le \tildeO{\frac{k^{9/2}}{\sqrt{n\delta}}} \normone{x-y}.
\end{align*}
The last inequality holds because $M_\delta(x) = M_\delta(y)$ when $x = y$,
so we can assume that $\normone{x-y} \ge 1$.
This concludes the proof that $G_\delta$ is $\lambda$-Lipschitz
for $\lambda = \tildeO{\frac{k^{9/2}}{\sqrt{n\delta}}}$.
\end{proof}

\bibliographystyle{alpha}
\bibliography{refs}

\begin{thebibliography}{DDKT16}

\bibitem[AS13]{AzrieliS13}
Yaron Azrieli and Eran Shmaya.
\newblock Lipschitz games.
\newblock {\em Mathematics of Operations Research}, 38(2):350--357, 2013.

\bibitem[Blo99]{Blonski99}
Matthias Blonski.
\newblock Anonymous games with binary actions.
\newblock {\em Games and Economic Behavior}, 28(2):171--180, 1999.

\bibitem[Blo05]{Blonski05}
Matthias Blonski.
\newblock The women of {C}airo: Equilibria in large anonymous games.
\newblock {\em Journal of Mathematical Economics}, 41(3):253--264, 2005.

\bibitem[CDO15]{ChenDO15}
Xi~Chen, David Durfee, and Anthi Orfanou.
\newblock On the complexity of {N}ash equilibria in anonymous games.
\newblock In {\em Proceedings of the 47th Annual ACM Symposium on Theory of
  Computing (STOC)}, pages 381--390, 2015.

\bibitem[DDKT16]{DaskalakisDKT15}
Constantinos Daskalakis, Anindya De, Gautam Kamath, and Christos Tzamos.
\newblock A size-free {CLT} for {P}oisson multinomials and its applications.
\newblock In {\em Proceedings of the 48th Annual ACM Symposium on Theory of
  Computing (STOC)}, pages 1074--1086, 2016.

\bibitem[DDS12]{DDS12}
Constantinos Daskalakis, Ilias Diakonikolas, and Rocco~A. Servedio.
\newblock Learning poisson binomial distributions.
\newblock In {\em Proceedings of the Forty-fourth Annual ACM Symposium on
  Theory of Computing}, STOC '12, pages 709--728. ACM, 2012.

\bibitem[DGP09]{DaskalakisGP09}
Constantinos Daskalakis, Paul~W. Goldberg, and Christos~H. Papadimitriou.
\newblock The complexity of computing a {N}ash equilibrium.
\newblock {\em {SIAM} J. Comput.}, 39(1):195--259, 2009.

\bibitem[DKS16a]{DiakonikolasKS16stoc}
Ilias Diakonikolas, Daniel~M. Kane, and Alistair Stewart.
\newblock The {Fourier} transform of {Poisson} multinomial distributions and
  its algorithmic applications.
\newblock In {\em Proceedings of the 48th Annual ACM Symposium on Theory of
  Computing (STOC)}, pages 1060--1073, 2016.

\bibitem[DKS16b]{DKS16-siirv-colt}
Ilias Diakonikolas, Daniel~M. Kane, and Alistair Stewart.
\newblock Optimal learning via the fourier transform for sums of independent
  integer random variables.
\newblock In {\em Proceedings of The 29th Conference on Learning Theory
  (COLT)}, pages 831--849, 2016.

\bibitem[DKS16c]{DiakonikolasKS16coltPBD}
Ilias Diakonikolas, Daniel~M. Kane, and Alistair Stewart.
\newblock Properly learning {Poisson} binomial distributions in almost
  polynomial time.
\newblock In {\em Proceedings of The 29th Conference on Learning Theory
  (COLT)}, pages 850--878, 2016.

\bibitem[DP07]{DaskalakisP07}
Constantinos Daskalakis and Christos~H. Papadimitriou.
\newblock Computing equilibria in anonymous games.
\newblock In {\em 48th Annual {IEEE} Symposium on Foundations of Computer
  Science (FOCS)}, pages 83--93, 2007.

\bibitem[DP08]{DaskalakisP08}
Constantinos Daskalakis and Christos~H. Papadimitriou.
\newblock Discretized multinomial distributions and {N}ash equilibria in
  anonymous games.
\newblock In {\em 49th Annual {IEEE} Symposium on Foundations of Computer
  Science (FOCS)}, pages 25--34, 2008.

\bibitem[DP09]{DaskalakisP09}
Constantinos Daskalakis and Christos~H. Papadimitriou.
\newblock On oblivious {PTAS}'s for {N}ash equilibrium.
\newblock In {\em Proceedings of the 41st Annual {ACM} Symposium on Theory of
  Computing (STOC)}, pages 75--84, 2009.

\bibitem[DP15]{DaskalakisP15}
Constantinos Daskalakis and Christos~H. Papadimitriou.
\newblock Approximate {Nash} equilibria in anonymous games.
\newblock {\em J. Economic Theory}, 156:207--245, 2015.

\bibitem[GT15]{goldbergT15}
Paul~W. Goldberg and Stefano Turchetta.
\newblock Query complexity of approximate equilibria in anonymous games.
\newblock In {\em Proceedings of the 11th Conference on Web and Internet
  Economics (WINE)}, pages 357--369, 2015.

\bibitem[LMM03]{LiptonMM03}
Richard~J. Lipton, Evangelos Markakis, and Aranyak Mehta.
\newblock Playing large games using simple strategies.
\newblock In {\em Proceedings 4th {ACM} Conference on Electronic Commerce
  (EC)}, pages 36--41, 2003.

\bibitem[Mil96]{Milchtaich96}
Igal Milchtaich.
\newblock Congestion games with player-specific payoff functions.
\newblock {\em Games and Economic Behavior}, 13(1):111--124, 1996.

\bibitem[Rub16]{Rubinstein16a}
Aviad Rubinstein.
\newblock Settling the complexity of computing approximate two-player {N}ash
  equilibria.
\newblock In {\em 57th Annual {IEEE} Symposium on Foundations of Computer
  Science (FOCS)}, 2016.
\newblock To appear.

\bibitem[Sim82]{Simon82}
Herbert~A. Simon.
\newblock {\em Models of bounded rationality: Empirically grounded economic
  reason}, volume~3.
\newblock MIT press, 1982.

\end{thebibliography}

\begin{appendix}

\section*{Appendix}

\section{Proof of Lemma~\ref{lem:clt-fourier-support-and-volume}} \label{app:one}
This lemma is a generalization of Lemma 5.5 of \cite{DiakonikolasKS16stoc},
which assumes that $\eps=\tilde{O_k}(1/\sigma)$.
Thus, we need to be careful about where this relation was used in the proof.

Note that for a fixed $\xi$, if $\xi'$ satisfies $\xi' \in \xi + \Z^k$ and
$\xi'^T \Sigma \xi' \leq Ck \log(1/\eps)$, then so does $\xi+i\bone$ for all $i \in \Z$.
We define $T'$ as
$T' \eqdef \setOfSuchThat{\xi' \in \R^k}{\xi'^T \Sigma \xi' \leq Ck \log(1/\eps) \text{ and } 0 \leq \xi' \cdot \bone \leq k}$.
Then, $\xi \in T$ if and only if there is a $\xi' \in T'$ with $\xi-\xi' \in \Z^k$.

\begin{enumerate}
\item[(i)]
Because $\xi - \xi' \in \Z^k$, we have $[\xi_i - \xi_j] \le \abs{\xi'_i - \xi'_j}$.
So to prove (i), we need to show that
  $\abs{\xi'_i-\xi'_j} \leq 2\sqrt{Ck \log(1/\eps)}/\sigma$ for all $\xi' \in T'$, $i$ and $j$.

Fix $\xi' \in T'$, we define $\tilde \xi'$ to be the projection of $\xi'$ onto the subspace orthogonal to $\bone$,
  i.e., $\tilde \xi' = \xi'- \frac{\xi' \cdot \bone}{k} \bone$.
Since $\Sigma \bone = \vec{0}$ and all other eigenvalues of $\Sigma$ are at least $\sigma^2$,
  for all $i$, $j$ we have
\[ \abs{\xi'_i - \xi'_j} = \abs{\tilde \xi'_i - \tilde \xi'_j} \le 2 \norminf{\tilde \xi'} \le 2 \normtwo{\tilde \xi'}
  \le 2\sqrt{\xi'^T \Sigma \xi' / \sigma^2} \le 2\sqrt{Ck\log(1/\eps)}/\sigma. \]
This proves (i).

\item[(ii)]
Next we consider $\mathrm{Vol}(T')$.
If $\xi' \in T'$, we know that $\|\xi'-(\xi' \cdot \bone/k)\bone\|_2^2 \leq Ck \log(1/\eps)/\sigma^2$.
Also $0 \leq \xi' \cdot \bone \leq k$ implies that $\|(\xi' \cdot \bone/k)\bone\|_2^2 \leq k$.
Because these two vectors are orthogonal, we can write
\[ \|\xi'\|_2^2 = \left\|  \xi'-\frac{\xi' \cdot \bone}{k}\bone \right\|_2^2 + \left\|\frac{\xi' \cdot \bone}{k}\bone \right\|_2^2 \leq Ck \log(1/\eps)/\sigma^2 + k \le 2Ck \log(1/\eps), \]
where the last inequality holds by the assumption that $\sigma \ge 1$.
Thus, \[\xi'^T (\Sigma+I) \xi' = \xi'^T \Sigma \xi' +  \|\xi'\|_2^2 \leq 3Ck\log(1/\eps).\]
By Claim 5.4 of~\cite{DiakonikolasKS16stoc},
we get that \[\mathrm{Vol}(T') \leq \det(\Sigma+I)^{-1/2} \bigO{C \log(1/\eps)}^{k/2}.\]
It then follows from Lemma \ref{lem:clt-support-ellipsoid-and-count} that
$\mathrm{Vol}(T')|S| = \bigO{C \log(1/\eps)}^k$.

To show (ii), we need to show that $\mathrm{Vol}(T) \leq \mathrm{Vol}(T')$.
Note that $T'$ is a disjoint union of its intersections with unit cubes with integer corners, and so
\[ \mathrm{Vol}(T') = \sum_{b \in \Z^k} \mathrm{Vol}\left(T' \cap \prod_{i=1}^k [b_i,b_i+1)\right).\]
On the other hand, $T$ is the union of translations of these sets
\[ T = \bigcup_{b \in \Z^k} \{\xi'-b : \xi \in T' \cap \prod_{i=1}^k [b_i,b_i+1)\} \;, \]
so $\mathrm{Vol}(T) \leq \mathrm{Vol}(T')$.

\item[(iii)]
By the pigeonhole principle,
for every $\xi \in \R^k$,
there is an interval $I_\xi$ of length $\frac{k}{k+1}$
such that there exists $\xi' \in \xi+\Z^k$ where all the coordinates of $\xi'$ are in $I_\xi.$
We define $T_m$ to be
\[ T_m \eqdef \setOfSuchThat{\xi \in [0,1]^k}{\exists \xi' \in \left(\xi + \Z^k\right) \cap I_\xi^k \quad \text{and} \quad 2^{m} C k \log(\sigma) \leq \xi'^T \Sigma \xi' \leq 2^{m+1} C k \log(\sigma)} \;. \]
Then, we have that $T \cup \left( \bigcup_{m=0}^\infty T_m \right) = [0,1]^k$,
although these sets need not be disjoint.
Thus, $[0,1]^k/T \subseteq \bigcup_{m=0}^\infty T_m$ and so
\[ \int_{[0,1]^k/T} \abs{\wh{X}(\xi)} d\xi \leq \sum_{m=0}^\infty \mathrm{Vol}(T_m)\sup_{\xi\in T_m}|\wh{X}(\xi)|. \]
If we apply (ii) of this lemma with $2^{m+1} C$ instead of $C$,
the resulting set $T$ would be a superset of $T_m$.
Thus, we have that $\mathrm{Vol}(T_m) \leq \bigO{2^{m+1} C \log(1/\eps)}^k/|S|$.
To show (iii), we bound $\sup_{\xi\in T_m}|\wh{X}(\xi)|$ using the following claim,
  which gives a ``Gaussian decay'' upper bound on the magnitude of the Fourier transform.

\begin{claim}
For $\xi \in T_m,$ it holds $|\wh{X}(\xi)| \leq \exp(-\Omega(C 2^m \log(1/\eps)/k)).$
If additionally we have $m \leq 3 \log_2 k,$ then $|\wh{X}(\xi)| = \exp(-\Omega(C 2^m k \log(1/\eps)))$.
\end{claim}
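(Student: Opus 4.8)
The plan is to use the product structure $\wh{X}(\xi)=\prod_{i=1}^n\wh{X_i}(\xi)$ to obtain a Gaussian-type decay bound for $|\wh{X}(\xi)|$ governed by the covariance quadratic form $\xi^T\Sigma\xi$, and then read off both parts from the defining inequalities of $T_m$. First, since $\wh{X}$ is $\Z^k$-periodic I replace $\xi$ by the witness $\xi'\in(\xi+\Z^k)\cap I_\xi^k$ provided by the definition of $T_m$, so that $\wh{X}(\xi)=\wh{X}(\xi')$ and all coordinates of $\xi'$ lie in a common interval of length $k/(k+1)$; in particular $|\xi'_j-\xi'_{j'}|\le k/(k+1)<1$ for every $j,j'$. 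For a single $k$-CRV I expand $|\wh{X_i}(\xi')|^2=\sum_{j,j'}p_{i,j}p_{i,j'}\cos\!\big(2\pi(\xi'_j-\xi'_{j'})\big)$ and apply $\cos(2\pi t)\le 1-8[t]^2$ (which follows from $1-\cos(2\pi t)=2\sin^2(\pi t)$ and $|\sin(\pi t)|\ge 2[t]$) together with $1-x\le e^{-x}$ to get $|\wh{X_i}(\xi')|^2\le\exp\!\big(-8\sum_{j,j'}p_{i,j}p_{i,j'}[\xi'_j-\xi'_{j'}]^2\big)$, and then multiply over $i$:
\[
|\wh{X}(\xi)|^2=|\wh{X}(\xi')|^2\le\exp\!\Big(-8\sum_{i=1}^n\sum_{j,j'}p_{i,j}p_{i,j'}[\xi'_j-\xi'_{j'}]^2\Big).
\]

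The next step is to recognize the exponent in terms of $\Sigma$. Using the identity $\sum_{j,j'}p_{i,j}p_{i,j'}(\xi'_j-\xi'_{j'})^2=2\var_{j\sim p_i}[\xi'_j]=2\,\xi'^T\Sigma_i\xi'$ and $\Sigma=\sum_i\Sigma_i$, I obtain $\sum_i\sum_{j,j'}p_{i,j}p_{i,j'}(\xi'_j-\xi'_{j'})^2=2\,\xi'^T\Sigma\xi'$. The only discrepancy with the bound above is the wrapped increment $[\xi'_j-\xi'_{j'}]^2$ versus $(\xi'_j-\xi'_{j'})^2$. For the first assertion I use only $|\xi'_j-\xi'_{j'}|\le k/(k+1)$, which forces $[\xi'_j-\xi'_{j'}]^2\ge(\xi'_j-\xi'_{j'})^2/k^2$; combined with the defining lower bound $\xi'^T\Sigma\xi'\ge 2^mCk\log(1/\eps)$ of $T_m$ this yields $\sum_i\sum_{j,j'}p_{i,j}p_{i,j'}[\xi'_j-\xi'_{j'}]^2\ge\tfrac{2}{k^2}\,\xi'^T\Sigma\xi'\ge\Omega\!\big(C2^m\log(1/\eps)/k\big)$, hence $|\wh{X}(\xi)|\le\exp\!\big(-\Omega(C2^m\log(1/\eps)/k)\big)$.

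For the sharper bound when $m\le 3\log_2 k$, the point is that this unwrapping loses nothing, because all coordinates of $\xi'$ are then genuinely close. Here the upper bound in the definition of $T_m$ gives $\xi'^T\Sigma\xi'\le 2^{m+1}Ck\log(1/\eps)\le 2Ck^4\log(1/\eps)$; since $\Sigma\bone=0$ and every other eigenvalue of $\Sigma$ is at least $\sigma^2$, the projection $\tilde\xi'=\xi'-\tfrac{\xi'\cdot\bone}{k}\bone$ satisfies $\|\tilde\xi'\|_2^2\le\xi'^T\Sigma\xi'/\sigma^2$, so $|\xi'_j-\xi'_{j'}|=|\tilde\xi'_j-\tilde\xi'_{j'}|\le 2\|\tilde\xi'\|_2\le 2\sqrt{2C}\,k^2\sqrt{\log(1/\eps)}/\sigma$. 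Since $\sigma\ge\poly(k\log(1/\eps))$ with a sufficiently large exponent, the right-hand side is below $1/2$, so $[\xi'_j-\xi'_{j'}]=|\xi'_j-\xi'_{j'}|$ for every pair; substituting $\sum_i\sum_{j,j'}p_{i,j}p_{i,j'}[\xi'_j-\xi'_{j'}]^2=2\,\xi'^T\Sigma\xi'\ge 2\cdot 2^mCk\log(1/\eps)$ into the product bound gives $|\wh{X}(\xi)|=\exp\!\big(-\Omega(C2^mk\log(1/\eps))\big)$. The step that needs the most care is precisely this dichotomy: arguing that a large value of $\xi'^T\Sigma\xi'$ genuinely reflects large wrapped increments $[\xi'_j-\xi'_{j'}]$ rather than an artifact of the chosen representative. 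This is why the factor-$k^2$ improvement survives only up to $m=3\log_2 k$ and relies on both the eigenvalue gap of $\Sigma$ and the lower bound on $\sigma$.
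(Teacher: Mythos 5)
Your proof is correct and follows essentially the same route as the paper's: the paper simply invokes Lemma 3.10 of Diakonikolas--Kane--Stewart (the decay bound $|\wh{X}(\xi')|\le\exp(-\Omega(\delta^2\,\xi'^T\Sigma\xi'))$ when the coordinates of $\xi'$ lie in an interval of length $1-\delta$), taking $\delta=1/(k+1)$ for the general bound and, for $m\le 3\log_2 k$, using part (i) with $2^{m+1}C$ in place of $C$ together with the eigenvalue gap to show the coordinates cluster within distance $O\bigl(\sqrt{2^m Ck\log(1/\eps)}/\sigma\bigr)<1/2$ so that $\delta=1/2$ applies. Your cosine expansion combined with $[t]\ge|t|/k$ (resp.\ $[t]=|t|$ in the small-$m$ case) is exactly a self-contained derivation of that cited decay lemma, so the only difference is that you reprove it rather than quote it.
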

\begin{proof}
We take $\xi' \in \left(\xi + \Z^k\right) \cap I_\xi^k$ as in the definition of $T_m$.
Lemma 3.10 of \cite{DiakonikolasKS16stoc} gives that if the coordinates of $\xi'$ lie in an interval of length $1-\delta$,
then
\[ |\wh{X}(\xi)|= |\wh{X}(\xi')| \leq  \exp(-\Omega(\delta^2 \xi'^T \cdot \Sigma \cdot \xi' )) = \exp(-\Omega(C 2^m k \log(1/\eps) \delta^2)). \]
By the definition of $T_m$, we take $\delta = \frac{1}{k+1}$
to get the bound $|\wh{X}(\xi)| \leq \exp(-\Omega(C 2^m \log(1/\eps)/k))$.

To get the stronger bound, we need to show that when $m$ is small
all coordinates of $\xi'$ are in a shorter interval.
This is because, if we apply (i) of this lemma with $2^{m+1} C$ instead of $C$, we have
$|\xi'_i - \xi'_j| \leq \sqrt{ 2^{m+3} Ck \log(1/\eps)}/\sigma$ for any $i$, $j$.
When $m \leq \log_2( \sigma/(Ck \log(1/\eps))) - 4$,
we can take $\delta = 1/2$ and obtain the stronger bound of the claim.

This is where we use our assumption that $\sigma \geq \poly(k \log(1/\eps))$.
We need $m \le 3 \log_2 k \leq \log_2( \sigma/(Ck \log(1/\eps))) - 4$,
which holds when $\sigma \geq 16 Ck^4 \log (1/\eps)$.
\end{proof}
Finally, for (iii) we can write
\begin{align*}
\int_{[0,1]^k/T} \abs{\wh{X}(\xi)} d \xi
& \leq \sum_{m=0}^\infty \mathrm{Vol}(T_m)\sup_{\xi\in T_m}|\wh{X}(\xi)| \\
& \leq \sum_{m=0}^\infty \bigO{2^{m+1} C \log(1/\eps)}^k \sup_{\xi\in T_m}|\wh{X}(\xi)| \\
& \leq \frac{\bigO{ C \log(1/\eps)}^k}{|S|} \sum_{m=0}^{\infty} 2^{mk} \sup_{\xi\in T_m}|\wh{X}(\xi)| \;.
\end{align*}
We divide this sum into two pieces:
\begin{eqnarray*}
\sum_{m=0}^{3 \log_2 k} 2^{mk} \sup_{\xi\in T_m}|\wh{X}(\xi)| & \leq & \sum_{m=0}^{3 \log_2 k} 2^{mk} \exp(-\Omega(C 2^m k \log(1/\eps))) \\
& \leq & \sum_{m=0}^{3 \log_2 k} \exp(-\Omega(C (2^m-m) k \log(1/\eps))) \\
& \leq & \sum_{m=0}^{3 \log_2 k} 2^{-m} \exp(-\Omega(C k \log(1/\eps))) \\
& \leq & \exp(-\Omega(C k \log(1/\eps))) = \eps^{\Omega(Ck)} \;,
\end{eqnarray*}
and
\begin{align*}
\sum_{m=3 \log_2 k}^{\infty} 2^{mk} \sup_{\xi\in T_m}|\wh{X}(\xi)|  \leq & \sum_{m=3 \log_2 k}^{\infty}  2^{mk} \exp(-\Omega(C 2^m  \log(1/\eps)/k)) \\
 \leq & \sum_{m=3 \log_2 k}^{\infty}  \exp(-\Omega(C (2^m-mk^2) \log(1/\eps)/k)) \\
 \leq & \sum_{m=3 \log_2 k}^{\infty}  \exp(-\Omega(C (k^2+mk) \log(1/\eps)/k)) \\
 \leq & \sum_{m=3 \log_2 k}^{\infty}  2^{-m} \exp(-\Omega(C k \log(1/\eps)))
 \leq  \eps^{\Omega(Ck)} \;.
\end{align*}
We thus have $\int_{[0,1]^k \setminus T} |\wh{X}(\xi)|d\xi \leq O(C \log(1/\eps))^{k} \eps^{\Omega(Ck)}/|S| \le \eps/(2|S|)$. \qedhere
\end{enumerate}

\end{appendix}

\end{document}